		\newcounter{theorem_c} 
		\numberwithin{theorem_c}{section} 
		\numberwithin{equation}{section} 
		\theoremstyle{plain} 
		\newtheorem{theorem}[theorem_c]{Theorem}
		\newtheorem{lemma}[theorem_c]{Lemma}
		\newtheorem{corollary}[theorem_c]{Corollary}
		\newtheorem*{theorem*}{Theorem}
		\newtheorem{remark}[theorem_c]{Remark}
		\theoremstyle{definition}
	\newcommand{\inlineQuote}[1]{\textquotedblleft #1\textquotedblright} 
	\newcommand{\Powerset}[1]{\mathcal{P}(#1)} 
	\newcommand{\integers}{\mathbb{Z}} 
	\newcommand{\rationals}{\mathbb{Q}} 
	\newcommand{\reals}{\mathbb{R}} 
	\newcommand{\complexs}{\mathbb{C}} 
	\newcommand{\integersMod}[1]{\mathbb{Z}_{#1}} 
	\newcommand{\modclass}[2]{#1 \; (\text{mod } #2)} 
	\newcommand{\restrict}[2]{\left. #1 \right\vert_{#2}} 
	\newcommand{\domain}[1]{\operatorname{dom}#1} 
	\newcommand{\support}[1]{\operatorname{supp}(#1)} 
	\newcommand{\inject}{\hookrightarrow} 
	\newcommand{\eqdef}{:=} 
	\newcommand{\suchthat}[2]{\left\{#1 \: \middle\vert \: #2\right\}} 
		\newcommand{\ket}[1]{\vert #1 \rangle} 
		\newcommand{\bra}[1]{\langle #1 \vert} 
		\newcommand{\braket}[2]{\langle #1 \vert #2 \rangle} 
		\newcommand{\im}[1]{\operatorname{im}#1}
		\newcommand{\LinExt}[1]{\operatorname{LinExt}[#1]}
		\newcommand{\ltwoSym}{\ell^2} 
		\newcommand{\ltwo}[1]{\ltwoSym[#1]} 
		\newcommand{\SpaceH}{\mathcal{H}}
		\newcommand{\SpaceG}{\mathcal{G}}
		\newcommand{\isom}{\cong} 
		\newcommand{\id}[1]{id_{#1}} 
		\newcommand{\tensor}{\otimes} 
		\newcommand{\SetCategory}{\operatorname{Set}} 
		\newcommand{\fdHilbCategory}{\operatorname{fdHilb}} 
		\newcommand{\fPInjCategory}{\operatorname{fPInj}} 
		\newcommand{\OpCategory}[1]{#1^{\operatorname{op}}} 
		\newcommand{\CPMCategory}[1]{\operatorname{CPM}[#1]} 
	\newcommand{\sheafOfEventsSym}{\mathcal{E}} 
	\newcommand{\sheafOfEvents}[1]{\sheafOfEventsSym[#1]} 
	\newcommand{\restrictionMap}[2]{\operatorname{res}^{#1}_{#2}} 
	\newcommand{\distributionFunctorSym}[1]{\mathcal{D}_{#1}} 
	\newcommand{\distributionFunctor}[2]{\distributionFunctorSym{#1}[#2]} 
	\newcommand{\presheafOfDistributionsSym}[1]{\distributionFunctorSym{#1}\sheafOfEventsSym} 
	\newcommand{\presheafOfDistributions}[2]{\presheafOfDistributionsSym{#1}[#2]} 
	\newcommand{\supportSubpresheafSym}{\mathcal{S}} 
	\newcommand{\supportSubpresheaf}[1]{\supportSubpresheafSym[#1]} 
	\newcommand{\Xcolour}{Red}
	\newcommand{\kpoints}[1]{K_{#1}}
	\newcommand{\Zcolour}{YellowGreen}
	\newcommand{\Zkpoints}{\kpoints{\hbox{\begin{tikzpicture} [scale=1,transform shape] 

\def\deltax{0.3} 
\def\deltay{0.5} 


\node [dot, fill=\Zcolour] (mult) at (0,0) {};

\end{tikzpicture}}\!}}
	\newcommand{\Xaltcolour}{Purple}
	\newcommand{\Zaltcolour}{Cyan}
	\newcommand{\Dcolour}{black}
	\newcommand{\hbox{\input{modules/symbols/mapSym.tex}}\!\!}{\hbox{\input{modules/symbols/mapSym.tex}}\!\!} 
	\newcommand{\hbox{\input{modules/symbols/mapconjSym.tex}}\!\!}{\hbox{\input{modules/symbols/mapconjSym.tex}}\!\!} 
	\tikzset{->-/.style={decoration={markings,mark=at position #1 with {\arrow{>}}},postaction={decorate}}}
	\tikzset{-<-/.style={decoration={markings,mark=at position #1 with {\arrow{<}}},postaction={decorate}}}
\tikzstyle{env}=[copoint,regular polygon rotate=0,minimum width=0.2cm, fill=black]
\tikzstyle{probs}=[shape=semicircle,fill=white,draw=black,shape border rotate=180,minimum width=1.2cm]
\tikzstyle{every picture}=[baseline=-0.25em,scale=0.5]
\tikzstyle{dotpic}=[] 
\tikzstyle{diredges}=[every to/.style={diredge}]
\tikzstyle{math matrix}=[matrix of math nodes,left delimiter=(,right delimiter=),inner sep=2pt,column sep=1em,row sep=0.5em,nodes={inner sep=0pt},text height=1.5ex, text depth=0.25ex]
\tikzstyle{inline text}=[text height=1.5ex, text depth=0.25ex,yshift=0.5mm]
\tikzstyle{label}=[font=\footnotesize,text height=1.5ex, text depth=0.25ex,yshift=0.5mm]
\tikzstyle{left label}=[label,anchor=east,xshift=1.5mm]
\tikzstyle{right label}=[label,anchor=west,xshift=-1.5mm]
\tikzstyle{braceedge}=[decorate,decoration={brace,amplitude=2mm,raise=-1mm}]
\tikzstyle{small braceedge}=[decorate,decoration={brace,amplitude=1mm,raise=-1mm}]
\tikzstyle{doubled}=[line width=1.6pt] 
\tikzstyle{boldedge}=[doubled,shorten <=-0.17mm,shorten >=-0.17mm]
\tikzstyle{boldedgegray}=[doubled,gray,shorten <=-0.17mm,shorten >=-0.17mm]
\tikzstyle{semidoubled}=[line width=1.4pt] 
\tikzstyle{semiboldedgegray}=[semidoubled,gray,shorten <=-0.17mm,shorten >=-0.17mm]
\tikzstyle{boldedgedashed}=[very thick,dashed,shorten <=-0.17mm,shorten >=-0.17mm]
\tikzstyle{vboldedgedashed}=[doubled,dashed,shorten <=-0.17mm,shorten >=-0.17mm]
\tikzstyle{left hook arrow}=[left hook-latex]
\tikzstyle{right hook arrow}=[right hook-latex]
\tikzstyle{sembracket}=[line width=0.5pt,shorten <=-0.07mm,shorten >=-0.07mm]
\tikzstyle{causal edge}=[->,thick,gray]
\tikzstyle{causal nondir}=[thick,gray]
\tikzstyle{timeline}=[thick,gray, dashed]
\tikzstyle{cedge}=[<->,thick,gray!70!white]
\tikzstyle{empty diagram}=[draw=gray!40!white,dashed,shape=rectangle,minimum width=1cm,minimum height=1cm]
\tikzstyle{empty diagram small}=[draw=gray!50!white,dashed,shape=rectangle,minimum width=0.6cm,minimum height=0.5cm]
\tikzstyle{dot}=[inner sep=0mm,minimum width=3mm,minimum height=3mm,draw,shape=circle,text depth=-0.1mm]
\tikzstyle{ddot}=[inner sep=0mm, doubled, minimum width=3.5mm,minimum height=3.5mm,draw,shape=circle]
\tikzstyle{black dot}=[dot,fill=black]
\tikzstyle{white dot}=[dot,fill=white,,text depth=-0.2mm]
\tikzstyle{green dot}=[white dot] 
\tikzstyle{gray dot}=[dot,fill=gray!40!white,,text depth=-0.2mm]
\tikzstyle{red dot}=[gray dot] 
\tikzstyle{black ddot}=[ddot,fill=black]
\tikzstyle{white ddot}=[ddot,fill=white]
\tikzstyle{gray ddot}=[ddot,fill=gray!40!white]
\tikzstyle{gray edge}=[gray!40!white]
\tikzstyle{small dot}=[inner sep=0.5mm,minimum width=0pt,minimum height=0pt,draw,shape=circle]
\tikzstyle{small black dot}=[small dot,fill=black]
\tikzstyle{small white dot}=[small dot,fill=white]
\tikzstyle{small gray dot}=[small dot,fill=gray!40!white]
\tikzstyle{causal dot}=[inner sep=0.4mm,minimum width=0pt,minimum height=0pt,draw=white,shape=circle,fill=gray!40!white]
\tikzstyle{phase dimensions}=[minimum size=5mm,font=\footnotesize,rectangle,rounded corners=2.5mm,inner sep=0.2mm,outer sep=-2mm,text height=1ex, text depth=0.25ex, yshift=0.5mm]
\tikzstyle{dphase dimensions}=[phase dimensions]
\tikzstyle{phase dot}=[dot,phase dimensions]
\tikzstyle{white phase dot}=[dot,fill=white,phase dimensions]
\tikzstyle{white phase ddot}=[ddot,fill=white,dphase dimensions]
\tikzstyle{white rect ddot}=[draw=black,fill=white,doubled,minimum size=5mm,font=\footnotesize,rectangle,rounded corners=2.5mm,inner sep=0.2mm]
\tikzstyle{gray rect ddot}=[draw=black,fill=gray!40!white,doubled,minimum size=6mm,font=\footnotesize,rectangle,rounded corners=3mm]
\tikzstyle{gray phase dot}=[dot,fill=gray!40!white,phase dimensions]
\tikzstyle{gray phase ddot}=[ddot,fill=gray!40!white,dphase dimensions]
\tikzstyle{grey phase dot}=[gray phase dot]
\tikzstyle{grey phase ddot}=[gray phase ddot]
\tikzstyle{cnot}=[fill=white,shape=circle,inner sep=-1.4pt]
\tikzstyle{hadamard}=[square box,inner sep=0 pt,font=\footnotesize,minimum height=4mm,minimum width=4mm]
\tikzstyle{dhadamard}=[hadamard,doubled]
\tikzstyle{antipode}=[white dot,inner sep=0.3mm,font=\footnotesize]
\tikzstyle{scalar}=[diamond,draw,inner sep=0.5pt,font=\small]
\tikzstyle{dscalar}=[diamond,doubled, draw,inner sep=0.5pt,font=\small]
\tikzstyle{small box}=[rectangle,inline text,fill=white,draw,minimum height=5mm,yshift=-0.5mm,minimum width=5mm,font=\small]
\tikzstyle{small gray box}=[small box,fill=gray!30]
\tikzstyle{medium box}=[rectangle,inline text,fill=white,draw,minimum height=5mm,yshift=-0.5mm,minimum width=10mm,font=\small]
\tikzstyle{square box}=[small box] 
\tikzstyle{medium gray box}=[small box,fill=gray!30]
\tikzstyle{semilarge box}=[rectangle,inline text,fill=white,draw,minimum height=5mm,yshift=-0.5mm,minimum width=12.5mm,font=\small]
\tikzstyle{large box}=[rectangle,inline text,fill=white,draw,minimum height=5mm,yshift=-0.5mm,minimum width=15mm,font=\small]
\tikzstyle{large gray box}=[small box,fill=gray!30]
\tikzstyle{gray square point}=[small box,fill=gray!50]
\tikzstyle{dphase box white}=[dbox]
\tikzstyle{dphase box gray}=[dbox,fill=gray!50!white]
\tikzstyle{point}=[regular polygon,regular polygon sides=3,draw,scale=0.75,inner sep=-0.5pt,minimum width=9mm,fill=white,regular polygon rotate=180]
\tikzstyle{copoint}=[regular polygon,regular polygon sides=3,draw,scale=0.75,inner sep=-0.5pt,minimum width=9mm,fill=white]
\tikzstyle{dpoint}=[point,doubled]
\tikzstyle{dcopoint}=[copoint,doubled]
\tikzstyle{wide copoint}=[fill=white,draw,shape=isosceles triangle,shape border rotate=90,isosceles triangle stretches=true,inner sep=0pt,minimum width=1.5cm,minimum height=6.12mm]
\tikzstyle{wide point}=[fill=white,draw,shape=isosceles triangle,shape border rotate=-90,isosceles triangle stretches=true,inner sep=0pt,minimum width=1.5cm,minimum height=6.12mm,yshift=-0.0mm]
\tikzstyle{wide point plus}=[fill=white,draw,shape=isosceles triangle,shape border rotate=-90,isosceles triangle stretches=true,inner sep=0pt,minimum width=1.74cm,minimum height=7mm,yshift=-0.0mm]
\tikzstyle{wide dpoint}=[fill=white,doubled,draw,shape=isosceles triangle,shape border rotate=-90,isosceles triangle stretches=true,inner sep=0pt,minimum width=1.5cm,minimum height=6.12mm,yshift=-0.0mm]
\tikzstyle{tinypoint}=[regular polygon,regular polygon sides=3,draw,scale=0.55,inner sep=-0.15pt,minimum width=6mm,fill=white,regular polygon rotate=180] 
\tikzstyle{white point}=[point]
\tikzstyle{white dpoint}=[dpoint]
\tikzstyle{green point}=[white point] 
\tikzstyle{white copoint}=[copoint]
\tikzstyle{gray point}=[point,fill=gray!40!white]
\tikzstyle{gray dpoint}=[gray point,doubled]
\tikzstyle{red point}=[gray point] 
\tikzstyle{gray copoint}=[copoint,fill=gray!40!white]
\tikzstyle{gray dcopoint}=[gray copoint,doubled]
\tikzstyle{black point}=[point,fill=black]
\tikzstyle{black copoint}=[copoint,fill=black]
\tikzstyle{tiny gray point}=[tinypoint,fill=gray!40!white]
\tikzstyle{diredge}=[->]
\tikzstyle{rdiredge}=[<-]
\tikzstyle{thickdiredge}=[->, very thick]
\tikzstyle{pointer edge}=[->,very thick,gray]
\tikzstyle{pointer edge part}=[very thick,gray]
\tikzstyle{dashed edge}=[dashed]
\tikzstyle{thick dashed edge}=[very thick,dashed]
\tikzstyle{thick gray dashed edge}=[thick dashed edge,gray!40]
\tikzstyle{thick map edge}=[very thick,|->]
\newcommand{\boxshape}[3]{%
\pgfdeclareshape{#1}{
\inheritsavedanchors[from=rectangle] 
\inheritanchorborder[from=rectangle]
\inheritanchor[from=rectangle]{center}
\inheritanchor[from=rectangle]{north}
\inheritanchor[from=rectangle]{south}
\inheritanchor[from=rectangle]{west}
\inheritanchor[from=rectangle]{east}
\backgroundpath{
\southwest \pgf@xa=\pgf@x \pgf@ya=\pgf@y
\northeast \pgf@xb=\pgf@x \pgf@yb=\pgf@y

\@tempdima=#2
\@tempdimb=#3

\pgfpathmoveto{\pgfpoint{\pgf@xa - 5pt + \@tempdima}{\pgf@ya}}
\pgfpathlineto{\pgfpoint{\pgf@xa - 5pt - \@tempdima}{\pgf@yb}}
\pgfpathlineto{\pgfpoint{\pgf@xb + 5pt + \@tempdimb}{\pgf@yb}}
\pgfpathlineto{\pgfpoint{\pgf@xb + 5pt - \@tempdimb}{\pgf@ya}}
\pgfpathlineto{\pgfpoint{\pgf@xa - 5pt + \@tempdima}{\pgf@ya}}
\pgfpathclose
}
}}
\tikzstyle{cloud}=[shape=cloud,draw,minimum width=1.5cm,minimum height=1.5cm]
\tikzstyle{map}=[draw,shape=NEbox,inner sep=2pt,minimum height=6mm,fill=white]
\tikzstyle{dashedmap}=[draw,dashed,shape=NEbox,inner sep=2pt,minimum height=6mm,fill=white]
\tikzstyle{mapdag}=[draw,shape=SEbox,inner sep=2pt,minimum height=6mm,fill=white]
\tikzstyle{mapadj}=[draw,shape=SEbox,inner sep=2pt,minimum height=6mm,fill=white]
\tikzstyle{maptrans}=[draw,shape=SWbox,inner sep=2pt,minimum height=6mm,fill=white]
\tikzstyle{mapconj}=[draw,shape=NWbox,inner sep=2pt,minimum height=6mm,fill=white]
\tikzstyle{medium map}=[draw,shape=NEbox,inner sep=2pt,minimum height=6mm,fill=white,minimum width=7mm]
\tikzstyle{medium map dag}=[draw,shape=SEbox,inner sep=2pt,minimum height=6mm,fill=white,minimum width=7mm]
\tikzstyle{medium map adj}=[draw,shape=SEbox,inner sep=2pt,minimum height=6mm,fill=white,minimum width=7mm]
\tikzstyle{medium map trans}=[draw,shape=SWbox,inner sep=2pt,minimum height=6mm,fill=white,minimum width=7mm]
\tikzstyle{medium map conj}=[draw,shape=NWbox,inner sep=2pt,minimum height=6mm,fill=white,minimum width=7mm]
\tikzstyle{semilarge map}=[draw,shape=NEbox,inner sep=2pt,minimum height=6mm,fill=white,minimum width=9.5mm]
\tikzstyle{semilarge map trans}=[draw,shape=SWbox,inner sep=2pt,minimum height=6mm,fill=white,minimum width=9.5mm]
\tikzstyle{semilarge map adj}=[draw,shape=SEbox,inner sep=2pt,minimum height=6mm,fill=white,minimum width=9.5mm]
\tikzstyle{semilarge map dag}=[draw,shape=SEbox,inner sep=2pt,minimum height=6mm,fill=white,minimum width=9.5mm]
\tikzstyle{semilarge map conj}=[draw,shape=NWbox,inner sep=2pt,minimum height=6mm,fill=white,minimum width=9.5mm]
\tikzstyle{large map}=[draw,shape=NEbox,inner sep=2pt,minimum height=6mm,fill=white,minimum width=12mm]
\tikzstyle{very large map}=[draw,shape=NEbox,inner sep=2pt,minimum height=6mm,fill=white,minimum width=17mm]
\tikzstyle{medium dmap}=[draw,doubled,shape=NEbox,inner sep=2pt,minimum height=6mm,fill=white,minimum width=7mm]
\tikzstyle{medium dmap dag}=[draw,doubled,shape=SEbox,inner sep=2pt,minimum height=6mm,fill=white,minimum width=7mm]
\tikzstyle{medium dmap adj}=[draw,doubled,shape=SEbox,inner sep=2pt,minimum height=6mm,fill=white,minimum width=7mm]
\tikzstyle{medium dmap trans}=[draw,doubled,shape=SWbox,inner sep=2pt,minimum height=6mm,fill=white,minimum width=7mm]
\tikzstyle{medium dmap conj}=[draw,doubled,shape=NWbox,inner sep=2pt,minimum height=6mm,fill=white,minimum width=7mm]
\tikzstyle{semilarge dmap}=[draw,doubled,shape=NEbox,inner sep=2pt,minimum height=6mm,fill=white,minimum width=9.5mm]
\tikzstyle{semilarge dmap trans}=[draw,doubled,shape=SWbox,inner sep=2pt,minimum height=6mm,fill=white,minimum width=9.5mm]
\tikzstyle{semilarge dmap adj}=[draw,doubled,shape=SEbox,inner sep=2pt,minimum height=6mm,fill=white,minimum width=9.5mm]
\tikzstyle{semilarge dmap dag}=[draw,doubled,shape=SEbox,inner sep=2pt,minimum height=6mm,fill=white,minimum width=9.5mm]
\tikzstyle{semilarge dmap conj}=[draw,doubled,shape=NWbox,inner sep=2pt,minimum height=6mm,fill=white,minimum width=9.5mm]
\tikzstyle{large dmap}=[draw,doubled,shape=NEbox,inner sep=2pt,minimum height=6mm,fill=white,minimum width=12mm]
\tikzstyle{large dmap conj}=[draw,doubled,shape=NWbox,inner sep=2pt,minimum height=6mm,fill=white,minimum width=12mm]
\tikzstyle{large dmap trans}=[draw,doubled,shape=SWbox,inner sep=2pt,minimum height=6mm,fill=white,minimum width=12mm]
\tikzstyle{very large dmap}=[draw,doubled,shape=NEbox,inner sep=2pt,minimum height=6mm,fill=white,minimum width=19.5mm]
\tikzstyle{muxbox}=[draw,shape=rectangle,minimum height=3mm,minimum width=3mm,fill=white]
\tikzstyle{dmuxbox}=[muxbox,doubled]
\tikzstyle{box}=[draw,shape=rectangle,inner sep=2pt,minimum height=6mm,minimum width=6mm,fill=white]
\tikzstyle{dbox}=[draw,doubled,shape=rectangle,inner sep=2pt,minimum height=6mm,minimum width=6mm,fill=white]
\tikzstyle{dmap}=[draw,doubled,shape=NEbox,inner sep=2pt,minimum height=6mm,fill=white]
\tikzstyle{dmapdag}=[draw,doubled,shape=SEbox,inner sep=2pt,minimum height=6mm,fill=white]
\tikzstyle{dmapadj}=[draw,doubled,shape=SEbox,inner sep=2pt,minimum height=6mm,fill=white]
\tikzstyle{dmaptrans}=[draw,doubled,shape=SWbox,inner sep=2pt,minimum height=6mm,fill=white]
\tikzstyle{dmapconj}=[draw,doubled,shape=NWbox,inner sep=2pt,minimum height=6mm,fill=white]
\tikzstyle{ddmap}=[draw,doubled,dashed,shape=NEbox,inner sep=2pt,minimum height=6mm,fill=white]
\tikzstyle{ddmapdag}=[draw,doubled,dashed,shape=SEbox,inner sep=2pt,minimum height=6mm,fill=white]
\tikzstyle{ddmapadj}=[draw,doubled,dashed,shape=SEbox,inner sep=2pt,minimum height=6mm,fill=white]
\tikzstyle{ddmaptrans}=[draw,doubled,dashed,shape=SWbox,inner sep=2pt,minimum height=6mm,fill=white]
\tikzstyle{ddmapconj}=[draw,doubled,dashed,shape=NWbox,inner sep=2pt,minimum height=6mm,fill=white]
\tikzstyle{smap}=[draw,shape=sNEbox,fill=white]
\tikzstyle{smapdag}=[draw,shape=sSEbox,fill=white]
\tikzstyle{smapadj}=[draw,shape=sSEbox,fill=white]
\tikzstyle{smaptrans}=[draw,shape=sSWbox,fill=white]
\tikzstyle{smapconj}=[draw,shape=sNWbox,fill=white]
\tikzstyle{dsmap}=[draw,dashed,shape=sNEbox,fill=white]
\tikzstyle{dsmapdag}=[draw,dashed,shape=sSEbox,fill=white]
\tikzstyle{dsmaptrans}=[draw,dashed,shape=sSWbox,fill=white]
\tikzstyle{dsmapconj}=[draw,dashed,shape=sNWbox,fill=white]
\tikzstyle{mmap}=[draw,shape=mNEbox]
\tikzstyle{mmapdag}=[draw,shape=mSEbox]
\tikzstyle{mmaptrans}=[draw,shape=mSWbox]
\tikzstyle{mmapconj}=[draw,shape=mNWbox]
\tikzstyle{mmapgray}=[draw,fill=gray!40!white,shape=mNEbox]
\tikzstyle{smapgray}=[draw,fill=gray!40!white,shape=sNEbox]
\pgfmathsetmacro{\pgf@shorten@left}{\pgfkeysvalueof{/tikz/shorten left}}
\pgfmathsetmacro{\pgf@shorten@right}{\pgfkeysvalueof{/tikz/shorten right}}
\pgfmathsetmacro{\pgf@shorten@left}{\pgfkeysvalueof{/tikz/shorten left}}
\pgfmathsetmacro{\pgf@shorten@right}{\pgfkeysvalueof{/tikz/shorten right}}
\tikzstyle{kpoint common}=[draw,fill=white,inner sep=1pt,minimum height=3mm]
\tikzstyle{kpoint}=[shape=cornerpoint,shorten left=5pt,kpoint common]
\tikzstyle{kpoint adjoint}=[shape=cornercopoint,shorten left=5pt,kpoint common]
\tikzstyle{kpoint conjugate}=[shape=cornerpoint,shorten right=5pt,kpoint common]
\tikzstyle{kpoint transpose}=[shape=cornercopoint,shorten right=5pt,kpoint common]
\tikzstyle{kpoint symm}=[shape=cornerpoint,shorten left=5pt,shorten right=5pt,kpoint common]
\tikzstyle{black kpoint}=[shape=cornerpoint,shorten left=5pt,kpoint common,fill=black]
\tikzstyle{black kpoint adjoint}=[shape=cornercopoint,shorten left=5pt,kpoint common,fill=black]
\tikzstyle{kpointdag}=[kpoint adjoint]
\tikzstyle{kpointadj}=[kpoint adjoint]
\tikzstyle{kpointconj}=[kpoint conjugate]
\tikzstyle{kpointtrans}=[kpoint transpose]
\tikzstyle{big kpoint}=[kpoint, minimum width=1.2 cm, minimum height=8mm, inner sep=4pt, text depth=3mm]
\tikzstyle{wide kpoint}=[kpoint, minimum width=1 cm, inner sep=2pt, text depth=-0.7 mm]
\tikzstyle{wide kpointdag}=[kpointdag, minimum width=1 cm, inner sep=2pt, text depth=0.7 mm]
\tikzstyle{wide kpointconj}=[kpointconj, minimum width=1 cm, inner sep=2pt, text depth=-0.7 mm]
\tikzstyle{wide kpointtrans}=[kpointtrans, minimum width=1 cm, inner sep=2pt, text depth=0.7 mm]
\tikzstyle{gray kpoint}=[kpoint,fill=gray!50!white]
\tikzstyle{gray kpointdag}=[kpointdag,fill=gray!50!white]
\tikzstyle{gray kpointadj}=[kpointadj,fill=gray!50!white]
\tikzstyle{gray kpointconj}=[kpointconj,fill=gray!50!white]
\tikzstyle{gray kpointtrans}=[kpointtrans,fill=gray!50!white]
\tikzstyle{gray dkpoint}=[kpoint,fill=gray!50!white,doubled]
\tikzstyle{gray dkpointdag}=[kpointdag,fill=gray!50!white,doubled]
\tikzstyle{gray dkpointadj}=[kpointadj,fill=gray!50!white,doubled]
\tikzstyle{gray dkpointconj}=[kpointconj,fill=gray!50!white,doubled]
\tikzstyle{gray dkpointtrans}=[kpointtrans,fill=gray!50!white,doubled]
\tikzstyle{white label}=[draw,fill=white,rectangle,inner sep=0.7 mm]
\tikzstyle{gray label}=[draw,fill=gray!50!white,rectangle,inner sep=0.7 mm]
\tikzstyle{black label}=[draw,fill=black,rectangle,inner sep=0.7 mm]
\tikzstyle{dkpoint}=[kpoint,doubled]
\tikzstyle{wide dkpoint}=[wide kpoint,doubled]
\tikzstyle{dkpointdag}=[kpoint adjoint,doubled]
\tikzstyle{dkcopoint}=[kpoint adjoint,doubled]
\tikzstyle{dkpointadj}=[kpoint adjoint,doubled]
\tikzstyle{dkpointconj}=[kpoint conjugate,doubled]
\tikzstyle{dkpointtrans}=[kpoint transpose,doubled]
\tikzstyle{kscalar}=[kpoint common, shape=EBox, inner xsep=-1pt, inner ysep=3pt,font=\small]
\tikzstyle{kscalarconj}=[kpoint common, shape=WBox, inner xsep=-1pt, inner ysep=3pt,font=\small]
 \tikzstyle{upground}=[circuit ee IEC,thick,ground,rotate=90,scale=2.5]
 \tikzstyle{downground}=[circuit ee IEC,thick,ground,rotate=-90,scale=2.5]
 \tikzstyle{bigground}=[regular polygon,regular polygon sides=3,draw=gray,scale=0.50,inner sep=-0.5pt,minimum width=10mm,fill=gray]
\tikzstyle{arrs}=[-latex,font=\small,auto]
\tikzstyle{arrow plain}=[arrs]
\tikzstyle{arrow dashed}=[dashed,arrs]
\tikzstyle{arrow bold}=[very thick,arrs]
\tikzstyle{arrow hide}=[draw=white!0,-]
\tikzstyle{arrow reverse}=[latex-]
\tikzstyle{cdnode}=[]
\begin{document}
\bibliographystyle{eptcs}

\title{Operational Mermin non-locality \\ and All-vs-Nothing arguments}

\author{Stefano Gogioso \\ 
    Quantum Group, Department of Computer Science\\
    University of Oxford, UK\\
    \texttt{stefano.gogioso@cs.ox.ac.uk} 
}

\maketitle

\begin{abstract}
\noindent Contextuality is a key resource in quantum information and the device-independent security of quantum algorithms. In this work, we show that the recently developed, operational Mermin non-locality arguments of \cite{StefanoGogioso-MerminNonLocality} provide a large, novel family of quantum realisable All-vs-Nothing models \cite{NLC-AvNSheaf}. In particular, they result in a diverse wealth of quantum realisable models which are maximally contextual (i.e. lie on the faces of the no-signalling polytope with no local elements), and could be used as a resource for the security of a new class of quantum secret sharing algorithms.
\end{abstract}

\setcounter{tocdepth}{2}

\section{Introduction}

	Ever since Bell's original work \cite{QTC-bell}, contextuality has evolved from spooky phenomenon to fundamental feature of quantum mechanics, with applications to device-independent quantum security \cite{QTC-NoSignallingQKD}\cite{QTC-ContextualityDeviceIndepSec} and a recently proposed role in quantum speed-up \cite{QTC-howard2014contextuality}. Contrary to non-locality, which has many inequivalent definitions in the different communities, contextuality comes with a reasonably standard definition in terms of measurement contexts and probability distributions on outcomes (known as empirical models), and is rigorously captured by the sheaf-theoretic framework introduced in \cite{NLC-SheafSeminal}. 

	Of the many non-locality arguments that followed Bell's, Mermin's non-locality argument \cite{QTC-mermin1990quantum} stands out for its elegance and simplicity, and its $N$-partite generalisations can be directly translated into a family quantum secret sharing protocols known as HBB CQ \cite{QTC-HBB}\cite{QTC-HBB2}. Mermin's original non-locality argument is based on a system of equations, each admitting a solution in $\integersMod{2}$ but without a global solution: the probabilities don't play any role in the argument, which admits a purely possibilistic treatment. The formulation in terms of equations directly implies a much stronger form of contextuality, and can be generalised to a large class of possibilistic contextuality arguments known as All-vs-Nothing models \cite{NLC-AvNSheaf}. There is considerable interest in quantum realisable All-vs-Nothing models (like the one from Mermin's original argument) because all such models would automatically be maximally contextual, lying on a face of the no-signalling polytope.

	The traditional linear-algebraic formulation of quantum mechanics makes it hard to isolate and understand the operational building blocks that lead to quantum advantage in quantum information and computation, as well as non-classicality in quantum foundations. The framework of Categorical Quantum Mechanics \cite{CQM-seminal} has been developed throughout the years to provide a concrete, hands-on language that describes many fundamental structures involved in the theory and applications of quantum mechanics. Mermin's original non-locality argument was formalised in this language by \cite{CQM-StrongComplementarity}, unearthing a novel connection between contextuality in Mermin's argument and the structure of phase groups in quantum mechanics, and a treatment of the HBB CQ protocols appears in \cite{QTC-HBBVlad}.

	A complete characterisation of Mermin non-locality in terms of phase groups recently appeared in \cite{StefanoGogioso-MerminNonLocality}, leading to a large class of contextuality arguments generalising Mermin's original argument. Instead of focusing on the possibilistic distribution of outcomes and the system of locally-solvable/globally-unsolvable equations, this new approach focuses on the operational aspects, involving phases and eigenstates of the Pauli observables. In particular, it generalises the single equation $2y = 1$, with no solution in $\integersMod{2}$, which is used in Mermin's original argument to prove the non-existence of global solutions for the system of equations.\\

	\noindent In Section \ref{section_OperationalMerminNonLocality}, we provide an alternative and more discursive presentation of the material in \cite{StefanoGogioso-MerminNonLocality}, and we show that all the \inlineQuote{operational Mermin non-locality arguments} described therein are quantum realisable. 

	In Section \ref{section_AllvsNothingArguments}, we draw the connection with the sheaf-theoretic framework and All-vs-Nothing models, and we show that the operational Mermin non-locality arguments provide a new infinite family of quantum realisable All-vs-Nothing models. Furthermore, we show how operational Mermin non-locality arguments can be used to provide a non-collapsing hierarchy of All-vs-Nothing models requiring arbitrary large finite fields for their formulation.


\section{Operational Mermin non-locality}
	\label{section_OperationalMerminNonLocality}
	
	\noindent The first part of this paper presents the work of \cite{StefanoGogioso-MerminNonLocality} on Mermin non-locality\footnote{In this work, the word non-locality is used in \inlineQuote{Mermin non-locality} for historical reasons, but in all technical contexts we will prefer the word contextuality, to take away any residual emphasis on underlying space-time structure carried by the expression \inlineQuote{non-locality}.} in a format more easily accessible to the quantum information community, and provides a novel result on quantum realisability. Mermin's original non-locality argument is summarised, with a particular focus on the role played by phases. Finite-dimensional Hilbert spaces are generalised to finite-dimensional free modules over involutive semirings: GHZ states, phase gates and measurements/decoherence are introduced in this new context. Finally, Mermin's original non-locality argument is fully generalised to obtain a large family of quantum realisable non-locality arguments: because of the focus on concrete realisation in $\dagger$-symmetric monoidal categories, we shall refer to this more general family as the \textbf{operational Mermin non-locality arguments}.

	\subsection{Mermin's original non-locality argument}

		\noindent In the original \cite{QTC-mermin1990quantum}, Mermin considers a 3-qubit GHZ state in the computational basis, the basis of eigenstates of the single-qubit Pauli $Z$ observable, together with the following 4 measurement contexts:
		\begin{enumerate}
			\item[(a)] The GHZ state is measured in the observable $X_1 \tensor X_2 \tensor X_3$.\footnote{Where $X_j$ is the single-qubit Pauli $X$ observable on qubits $j=1,2,3$.} 
			\item[(b)] The GHZ state is measured in the observable $Y_1 \tensor Y_2 \tensor X_3$.\footnote{Where $Y_j$ is the single-qubit Pauli $Y$ observable on qubits $j=1,2,3$.} 
			\item[(c)] The GHZ state is measured in the observable $Y_1 \tensor X_2 \tensor Y_3$.
			\item[(d)] The GHZ state is measured in the observable $X_1 \tensor Y_2 \tensor Y_3$.
		\end{enumerate}

		\noindent Following traditional notation, we denote by $\ket{z_0},\ket{z_1}$ the eigenstates of the single-qubit Pauli $Z$ observable, by $\ket{\pm} := \ket{z_0} \pm \ket{z_1}$ those of the single-qubit Pauli $X$ observable and by $\ket{\pm i} := \ket{z_0} \pm i \ket{z_1}$ those of the single-qubit Pauli $Y$ observable. We can see measurement outcomes as valued in $\integersMod{2}$ by fixing the following bijections:
		\begin{enumerate}
			\item[(i)] for the $X$ observable, $\ket{+} \mapsto 0$ and $\ket{-} \mapsto 1$
			\item[(ii)] for the $Y$ observable, $\ket{+i} \mapsto 0$ and $\ket{-i} \mapsto 1$
		\end{enumerate}
 
 		\noindent Mermin argument then proceeds as follows. While the joint measurement outcomes are probabilistic, the $\integersMod{2}$ sum of the outcomes turns out to be deterministic, yielding the following system of equations:
 		\begin{equation}
 		\label{eqn_MerminSystemZ2Equations}
			\begin{cases}
			X_1 \oplus X_2 \oplus X_3 &= 0 \\
			Y_1 \oplus Y_2 \oplus X_3 &= 1 \\
			Y_1 \oplus X_2 \oplus Y_3 &= 1 \\
			X_1 \oplus Y_2 \oplus Y_3 &= 1 
			\end{cases}
		\end{equation}

		\noindent If there was a non-contextual assignment of outcomes for all measurements ($X_1,X_2,X_3,Y_1,Y_2$ and $Y_3$), i.e. if there existed a non-contextual hidden variable model, then the system of equations \ref{eqn_MerminSystemZ2Equations} would have a solution in $\integersMod{2}$, and in particular it would have to be consistent. However, the sum of the left hand sides yields $0$ in $\integersMod{2}$:
		\begin{equation}
			\label{eqn_MerminSystemZ2EquationsLHSSum}
			2 X_1 \oplus 2 X_2 \oplus ... \oplus 2 Y_3 = 0 X_1 \oplus ... \oplus 0 Y_3 = 0
		\end{equation}
		\noindent while the sum of the right hand sides yields $0 \oplus 1 \oplus 1 \oplus 1 = 3 = 1$ in $\integersMod{2}$. This shows the system to be inconsistent. Equivalently, one could observe that the sum of the LHS from \ref{eqn_MerminSystemZ2EquationsLHSSum} can equivalently be written as $2(Y_1 \oplus Y_2 \oplus Y_3)$, and that the inconsistency of the system of equations is witnessed by the fact that the equation $2 y = 1$ has no solution in $\integersMod{2}$. This latter point of view is the key to the operational generalisation of Mermin non-locality, while the All-vs-Nothing generalisation has its focus on inconsistent systems of equations.

	\subsection{The role of phases in Mermin's argument}

		\newcommand{\phasegate}[1]{P_{#1}}

		\noindent To understand the role played by the equation $2 y = 1$ in the original Mermin argument, we take a step back. First of all, we observe that the single-qubit Pauli $Y$ measurement can be equivalently obtained as a single-qubit Pauli $X$ measurement preceded by an appropriate unitary. A single-qubit \textbf{phase gate}, in the computational basis (single-qubit Pauli $Z$ observable), is a unitary transformation in the following form:
		\begin{equation}
			\label{eqn_Z2PhaseGates}
			\phasegate{\alpha} := \left(
			\begin{array}{cc}
				1 & 0 \\
				0 & e^{i \alpha}
			\end{array}
			\right)
		\end{equation}
		\noindent where we used the fact that global phases are irrelevant to set the first diagonal element to 1. Then measuring in the single-qubit $Y$ observable is equivalent to first applying the single-qubit phase gate $\phasegate{\frac{\pi}{2}}$ and then measuring in the single-qubit Pauli $X$ observable.

		Because they pairwise commute, phase gates come with a natural abelian group structure given by composition, resulting in an isomorphism $\alpha \mapsto P_\alpha$ between them and the abelian group $\reals/(2\pi\integers)$ (isomorphic to the circle group $S^1$). Of all the phase gates, $\phasegate{0}$ (the identity element of the group) and $\phasegate{\pi}$ stand out because of their well-defined action on the eigenstates of the single-qubit Pauli $X$ observable:
		\begin{align}
			\label{eqn_PzeroPiAction}
			\phasegate{0} & = \ket{\pm} \mapsto \ket{\pm} \nonumber\\
			\phasegate{\pi} & = \ket{\pm} \mapsto \ket{\mp} 
		\end{align}
		\noindent If we see $\ket{\pm}$ as the subgroup $\{0,\pi\} < \reals / (2\pi\integers)$ (corresponding to $\{\pm 1\} < S^1$ in the circle group), then Equation \ref{eqn_PzeroPiAction} looks a lot like the regular action of $\{0,\pi\}$ on itself. This is not a coincidence. Each phase gate $\phasegate{\alpha}$ can be (faithfully) associated the unique \textbf{phase state} $\ket{\alpha} := \ket{z_0} + e^{i \alpha} \ket{z_1}$ obtained from its diagonal, and these phase states can be abstractly characterised in terms of the single-qubit Pauli $Z$ observable, with no reference to phase gates (see the next section for the characterisation of phase states). The phase states inherit the abelian group structure of the phase gates, and their regular action coincides with the action of the group of phase gates on them. In particular, the phase gates $\phasegate{0}$ and $\phasegate{\pi}$ have the eigenstates of the single-qubit Pauli $X$ observable as their associated phase states $\ket{0} = \ket{+}$ and $\ket{\pi} = \ket{-}$, endowing the outcomes of single-qubit Pauli $X$ measurements with the natural $\integersMod{2}$ abelian group structure arising\footnote{There is a unique isomorphism $\integersMod{2} \isom \{0,\pi\}$.} from the inclusion $\{0,\pi\} < \reals/(2\pi\integers)$. We will refer to the group of phase states as the \textbf{group of $Z$-phases}, and to the subgroup $\{0,\pi\}$ as the \textbf{subgroup of $X$-classical points}, which we will also use to label the corresponding measurement outcomes for the single-qubit Pauli $X$ observable. We now show how to re-construct Mermin's argument from the following statement: the equation $2 y = \pi$ has no solution in the subgroup $\{0,\pi\} < \reals/(2\pi\integers)$ of $X$-classical points, but has a solution $y = \frac{\pi}{2}$ (corresponding to $y = e^{i\frac{\pi}{2}} = +i$ in the circle group) in the group $\reals/(2\pi\integers)$ of $Z$-phases.\\

		\noindent The GHZ state used in Mermin's argument has a special property, due to strong complementarity, when it comes to phase gates followed by measurements in the single-qubit Pauli $X$ observable.\cite{CQM-StrongComplementarity}

		\begin{lemma} If $\alpha_j \in \reals/(2\pi\integers)$, denote by $X_j^{\alpha_j}$ the measurement (outcome) on qubit $j$ obtained by first applying phase gate $\phasegate{\alpha_j}$ and then measuring in the single-qubit Pauli $X$ observable. If $\alpha_1 \oplus \alpha_2 \oplus \alpha_3 = \modclass{0\text{ or }\pi}{2\pi}$, then $X_1^{\alpha_1} \oplus X_2^{\alpha_2} \oplus X_3^{\alpha_3} = \modclass{0 \text{ or }\pi}{2\pi}$ respectively.
		\end{lemma}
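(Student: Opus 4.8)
The plan is to prove the lemma by a direct computation on the GHZ state, exploiting the fact that each phase gate acts diagonally in the computational basis. First I would write the (unnormalised) $3$-qubit GHZ state as $\ket{z_0 z_0 z_0} + \ket{z_1 z_1 z_1}$ and apply the phase gates $\phasegate{\alpha_1} \tensor \phasegate{\alpha_2} \tensor \phasegate{\alpha_3}$. Since each $\phasegate{\alpha_j}$ fixes $\ket{z_0}$ and sends $\ket{z_1} \mapsto e^{i\alpha_j}\ket{z_1}$, the post-phase state is $\ket{z_0 z_0 z_0} + e^{i\phi}\ket{z_1 z_1 z_1}$, where $\phi := \alpha_1 \oplus \alpha_2 \oplus \alpha_3$. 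The key structural point — the operational face of the strong complementarity between the $Z$ and $X$ observables on the GHZ state — is that only the \emph{total} phase $\phi$ survives on the $\ket{z_1 z_1 z_1}$ branch, with the individual $\alpha_j$ no longer accessible.

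Next I would change basis to the eigenstates of the single-qubit Pauli $X$ observable. Using $\ket{z_0} = \tfrac{1}{2}(\ket{+} + \ket{-})$ and $\ket{z_1} = \tfrac{1}{2}(\ket{+} - \ket{-})$, and labelling each outcome by $x_j \in \integersMod{2}$ via $\ket{+} \leftrightarrow 0$, $\ket{-} \leftrightarrow 1$, the coefficient of $\ket{x_1 x_2 x_3}$ coming from $\ket{z_1 z_1 z_1}$ carries a sign $(-1)^{x_1 \oplus x_2 \oplus x_3}$, while that from $\ket{z_0 z_0 z_0}$ is constant. Hence the amplitude of the outcome string $(x_1, x_2, x_3)$ is proportional to
\begin{equation*}
1 + e^{i\phi}(-1)^{x_1 \oplus x_2 \oplus x_3}.
\end{equation*}
The essential observation is that this amplitude depends on the outcomes only through the parity $x_1 \oplus x_2 \oplus x_3 \in \integersMod{2}$, and on the phases only through the total phase $\phi$.

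Finally I would read off the two cases. When $\phi = \modclass{0}{2\pi}$ we have $e^{i\phi} = 1$, so the amplitude vanishes identically on the odd-parity strings and only even-parity outcomes can occur; when $\phi = \modclass{\pi}{2\pi}$ we have $e^{i\phi} = -1$, so the amplitude vanishes on the even-parity strings and only odd-parity outcomes can occur. Transporting the $\integersMod{2}$ parity across the canonical isomorphism $\integersMod{2} \isom \{0,\pi\}$ into the $\{0,\pi\}$ labelling of $X$-classical points, this says exactly that $X_1^{\alpha_1} \oplus X_2^{\alpha_2} \oplus X_3^{\alpha_3}$ is deterministically $0$ or $\pi$ according to whether $\phi$ is $0$ or $\pi$, as claimed. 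I expect the only real care to lie in the bookkeeping of the change of basis and in correctly isolating the relative (rather than global) phase; once the amplitude is in the displayed form, the deterministic vanishing of half the outcomes — the heart of the statement — is immediate.
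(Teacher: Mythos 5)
Your computation is correct, but it proceeds along a genuinely different route from the paper's. The paper never verifies this lemma by a $Z$-basis amplitude calculation: it attributes the statement to strong complementarity and, in its generalised framework, proves the $N$-partite, arbitrary-abelian-group version entirely in the $X$ basis. There, the GHZ state is \emph{defined} as $\sum_{x_1 \oplus \dots \oplus x_N = 0} \ket{x_1} \tensor \dots \tensor \ket{x_N}$, phase gates act as group translations $\ket{x} \mapsto \psi_{(x' \ominus x)}$-weighted shifts, and Lemma \ref{lemma_MerminMeasGeneralisedGHZ} shows that the phased-then-decohered GHZ state equals the decohered generalised GHZ state at $x = \oplus_i \alpha_i$, so Equation \ref{eqn_MerminMeasContextOutcomesPossibilistic} immediately yields that the only possible outcome strings satisfy $x_1 \oplus \dots \oplus x_N = \oplus_i \alpha_i$ — your lemma is the case $N = 3$, $K \isom \integersMod{2}$. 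Your argument instead expands $\ket{z_0 z_0 z_0} + e^{i\phi}\ket{z_1 z_1 z_1}$ in the $\ket{\pm}$ basis and reads off the amplitude $1 + e^{i\phi}(-1)^{x_1 \oplus x_2 \oplus x_3}$, which is a perfectly sound and pleasantly self-contained verification for qubits (and correctly isolates that only the relative phase $\phi$ matters, and that determinism holds only when $\phi \in \{0,\pi\}$). What each approach buys: yours is elementary and needs no structural machinery, but it leans on the specific two-dimensional interference pattern — for a general finite abelian group the analogous computation becomes a character sum (essentially the manipulation the paper performs once, in Equation \ref{eqn_GHZTraditional}, to relate the two bases), whereas the paper's $X$-basis formulation makes the conclusion a one-line consequence of the group structure and is precisely what powers the generalisation to arbitrary $Z$-phase groups later in the paper.
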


		\noindent In the particular case of $X_j := X_j^{0}$ and $Y_j := X_j^{\frac{\pi}{2}}$, we obtain the system of equations from \ref{eqn_MerminSystemZ2Equations}, where now $\oplus$ is the sum in the abelian group $\{0,\pi\} < \reals/(2\pi\integers)$, instead of the original $\integersMod{2}$:
		 \begin{equation}
 		\label{eqn_MerminSystemZ2EquationsPM}
			\begin{cases}
			X_1 \oplus X_2 \oplus X_3 &= 0 \\
			Y_1 \oplus Y_2 \oplus X_3 &= \pi \\
			Y_1 \oplus X_2 \oplus Y_3 &= \pi \\
			X_1 \oplus Y_2 \oplus Y_3 &= \pi 
			\end{cases}
		\end{equation}

		\noindent Now back to the equation $2 y = \pi$, which has solution $y = \frac{\pi}{2}$ in $\reals/(2\pi\integers)$, but no solution in $\{0,\pi\}$. Consider an $N$-partite GHZ state (with $N \geq 2$), the measurement $X_1^{\frac{\pi}{2}} \otimes X_2^{\frac{\pi}{2}} \otimes X_3^{0} \otimes ... \otimes X_N^{0}$ and its $N-1$ non-trivial cyclic permutations. This yields the following generalised system of equations, where all the right hand sides are $\pi$ because we chose phase gates based on the solution $y = \frac{\pi}{2}$ in $\reals/(2\pi\integers)$ to the equation $2 y = \pi$:
		\begin{equation}
			\begin{cases}
			X_1^{\frac{\pi}{2}} \oplus X_2^{\frac{\pi}{2}} \oplus X_3^{0} \oplus ... \oplus X_{N-1}^{0} \oplus X_N^{0} &= \pi \\
			X_1^{\frac{\pi}{2}} \oplus X_2^{0} \oplus ... \oplus X_{N-2}^{0} \oplus X_{N-1}^{0} \oplus X_N^{\frac{\pi}{2}} &= \pi \\
			\hspace{2.5cm}\vdots \\
			X_1^{0} \oplus X_2^{\frac{\pi}{2}} \oplus X_3^{\frac{\pi}{2}} \oplus X_4^{0} \oplus ... \oplus X_{N}^{0} &= \pi \\
			\end{cases}
		\end{equation}

		\noindent Adding up (in the abelian group $\reals/(2\pi\integers)$) all left hand sides gives the following equation:
		\begin{equation}
		\label{eqn_DraftMerminEquation}
		(N-2) \left( X_1^{0} \oplus ... \oplus X_N^{0} \right) + 2y = N \pi
		\end{equation}
		where we defined $y := X_1^{\frac{\pi}{2}} \oplus X_2^{\frac{\pi}{2}} \oplus ... \oplus X_N^{\frac{\pi}{2}} $. Taking $N = \modclass{1}{k}$, where $k = 2$ is the exponent of the group $\integersMod{2}$, makes the right hand side of \ref{eqn_DraftMerminEquation} into $N \pi = \pi$; the smallest such $N \geq 2$ is $N=3$, yielding a 3-partite GHZ state. Adding the left hand side of Equation \ref{eqn_DraftMerminEquation} to $\left(k-(N-2)\right) \left( X_1^{0} \oplus ... \oplus X_N^{0} \right)$ and the right hand side to $\left(k-(N-2)\right) 0$ leaves us with the equation $2y=\pi$. 

		Then the following system of equations, the same system from \ref{eqn_MerminSystemZ2Equations} but with phase gate notation, can be seen to be inconsistent by adding up the three variations as $\reals/(2\pi\integers)$ equations, then adding $2-(3-2) = 1$ (an integer) times the control (an $\reals/(2\pi\integers)$ equation) and obtaining the $\reals/(2\pi\integers)$ equation $2 y = \pi$, which has no solution in the subgroup $\{0,\pi\}$ of $X$-classical points and thus excludes non-contextual hidden variable models:
		 \begin{equation}
 		\label{eqn_MerminSystemZ2EquationsPMGeneralisedPhases}
			\begin{cases}
			X_1^{0} \oplus X_2^{0} \oplus X_3^{0} = 0 \text{, the control}\\
			X_1^{\frac{\pi}{2}} \oplus X_2^{\frac{\pi}{2}} \oplus X_3^{0} = \pi \text{, the first variation} \\
			X_1^{\frac{\pi}{2}} \oplus X_2^{0} \oplus X_3^{\frac{\pi}{2}} = \pi \text{, the second variation} \\
			X_1^{0} \oplus X_2^{\frac{\pi}{2}} \oplus X_3^{\frac{\pi}{2}} = \pi \text{, the third variation} 
			\end{cases}
		\end{equation}		
		
		\noindent We just saw how phase gates, with their role in measurements of the GHZ state, allowed us to reconstruct the Mermin argument from the fact that the equation $2 y = \pi$ has solutions in the group of $Z$-phases, allowing the argument to be formulated, but not in the subgroup $X$-classical points, disallowing the existence of a non-contextual hidden variable model. In the next section we generalise this technique to arbitrary pairs of strongly complementary observables (generalising single-qubit Pauli $Z$ and Pauli $X$), in arbitrary $\dagger$-symmetric monoidal categories (henceforth $\dagger$-SMCs, generalising finite-dimensional Hilbert spaces).

	\subsection{From Hilbert spaces to modules of semirings}

		In \cite{CQM-StrongComplementarity} the original Mermin non-locality argument from the previous section is formalised in the context of Categorical Quantum Mechanics by using strong complementarity. In \cite{StefanoGogioso-MerminNonLocality}, the argument is fully generalised, and a completely algebraic characterisation of Mermin non-locality, valid in arbitrary $\dagger$-SMCs, is provided. Here we present the work of \cite{StefanoGogioso-MerminNonLocality} in a language closer to the one traditionally used in the study of quantum information.

		Instead of the field $\complexs$ of complex numbers, equipped with the involution given by complex conjugation, we will consider the more general case of an involutive commutative semiring $R$.\footnote{We require $0 \neq 1$ in $R$.} We will substitute finite-dimensional Hilbert spaces and linear maps with finite-dimensional free $R$-modules (henceforth \textbf{spaces}) and $R$-linear maps (henceforth \textbf{morphisms}). We will refer to this as a \textbf{process theory} (with superposition).\footnote{Which is easier on the tongue than \inlineQuote{dagger symmetric monoidal category distributively enriched in commutative monoids}.} 

		Given a basis $\ket{x}_{x \in X}$ of states of a space $\SpaceH$\footnote{I.e. elements of the $R$-modules, corresponding to vectors in a vector space. We will equivalently see a state of an $R$-module $\SpaceH$ as the unique morphism $R \rightarrow \SpaceH$ given by $r \mapsto r \ket{\psi}$.}, every state $\ket{\psi}$ of $\SpaceH$ can be written as follows, for a unique family $(\psi_x)_{x \in X}$ of coefficients in $R$:
		\begin{equation}
			\ket{\psi} = \sum_{x \in X} \psi_x \ket{x}
		\end{equation}
		
		\noindent We have a $\dagger$ on states given as follows, where $\ket{x}_{x \in X}$ is any orthonormal basis ($*:R \rightarrow R$ is the involution):
		\begin{equation}
			\bra{\psi} \text{ is defined to be the map } \SpaceH \rightarrow R \text{ sending any } \ket{\varphi} \text{ to } \sum_{x \in X} \psi_x^\star \varphi_x 
		\end{equation}
		
		\noindent More in general, given orthonormal bases $\ket{x}_{x \in X}$ and $\ket{y}_{y \in Y}$ of two free $R$-modules $\SpaceH$ and $\SpaceG$ respectively, any $R$-linear map $U$ can be written as follows, for a unique family $(U^x_y)_{x \in X,y \in Y}$ of coefficients in $R$:
		\begin{equation}
			U = \sum_{x \in X}\sum_{y \in Y} \ket{y} U^x_y\bra{x}
		\end{equation}
		
		\noindent We have a $\dagger$ on $R$-linear maps given by:
		\begin{equation}
			U^\dagger = \sum_{x \in X}\sum_{y \in Y} \ket{x} (U^x_y)^\star\bra{y}
		\end{equation}

		\noindent Finally, the tensor product $\tensor$ sends free $R$-modules with bases $\ket{x}_{x \in X}$ and $\ket{y}_{y \in Y}$ to the free $R$-module over the basis $\ket{x} \tensor \ket{y}_{x \in X, y\in Y}$.

		\begin{remark}
		From the point of view of \cite{StefanoGogioso-FourierTransform}, this is a $\dagger$-SMC distributively enriched over commutative monoids, where all objects admit some classical structure with enough classical points. In this context, classical structures with enough classical points (and such that the classical points form a finite, normalisable family) always correspond to orthonormal bases. We shall use this language no more for the rest of this paper.
		\end{remark}

	\subsection{GHZ states}

		\noindent From now on we fix some arbitrary space $\SpaceH$ and work in a finite orthonormal basis $\ket{x}_{x \in X}$, which we shall refer to as the \textbf{$X$ observable}, or the \textbf{$X$ basis}. We will write $d$ for the cardinality of $X$. Furthermore, suppose that for some abelian group structure $(X,\oplus,0)$ on $X$ there are morphisms $\hbox{\begin{tikzpicture} [scale=1.2,transform shape] 

\def\deltax{0.3} 
\def\deltay{0.5} 

\path[use as bounding box] (-\deltax,-\deltay) rectangle (\deltax,\deltay);

\node (mult_label_inl) at (-\deltax,-\deltay) {};
\node (mult_label_inr) at (+\deltax,-\deltay) {};
\node [dot, fill=\Zcolour] (mult) at (0,0) {};
\node (mult_label_out) at (0,+\deltay) {};
\draw[-] [out=90,in=225](mult_label_inl) to (mult);
\draw[-] [out=90,in=315](mult_label_inr) to (mult);
\draw[-] (mult) to (mult_label_out);

\end{tikzpicture}}\!:\SpaceH \tensor \SpaceH \rightarrow \SpaceH$ and $\hbox{\begin{tikzpicture} [scale=1.2,transform shape] 

\def\deltax{0.3} 
\def\deltay{0.5} 

\path[use as bounding box] (-\deltax,-\deltay) rectangle (\deltax,\deltay);

\node [dot, fill=\Zcolour] (mult) at (0,0) {};
\node (mult_label_out) at (0,+\deltay) {};
\draw[-] (mult) to (mult_label_out);

\end{tikzpicture}}\!:R \rightarrow \SpaceH$ given by:
		\begin{align}
			\hbox{\begin{tikzpicture} [scale=1.2,transform shape] 

\def\deltax{0.3} 
\def\deltay{0.5} 

\path[use as bounding box] (-\deltax,-\deltay) rectangle (\deltax,\deltay);

\node (mult_label_inl) at (-\deltax,-\deltay) {};
\node (mult_label_inr) at (+\deltax,-\deltay) {};
\node [dot, fill=\Zcolour] (mult) at (0,0) {};
\node (mult_label_out) at (0,+\deltay) {};
\draw[-] [out=90,in=225](mult_label_inl) to (mult);
\draw[-] [out=90,in=315](mult_label_inr) to (mult);
\draw[-] (mult) to (mult_label_out);

\end{tikzpicture}}\! & = \sum_{x,x'} \ket{x\oplus x'} \tensor \bra{x} \tensor \bra{x'} \\
			\hbox{\begin{tikzpicture} [scale=1.2,transform shape] 

\def\deltax{0.3} 
\def\deltay{0.5} 

\path[use as bounding box] (-\deltax,-\deltay) rectangle (\deltax,\deltay);

\node [dot, fill=\Zcolour] (mult) at (0,0) {};
\node (mult_label_out) at (0,+\deltay) {};
\draw[-] (mult) to (mult_label_out);

\end{tikzpicture}}\! & = \ket{0}
		\end{align}

		\noindent The internal monoid $(\hbox{\begin{tikzpicture} [scale=1.2,transform shape] 

\def\deltax{0.3} 
\def\deltay{0.5} 

\path[use as bounding box] (-\deltax,-\deltay) rectangle (\deltax,\deltay);

\node (mult_label_inl) at (-\deltax,-\deltay) {};
\node (mult_label_inr) at (+\deltax,-\deltay) {};
\node [dot, fill=\Zcolour] (mult) at (0,0) {};
\node (mult_label_out) at (0,+\deltay) {};
\draw[-] [out=90,in=225](mult_label_inl) to (mult);
\draw[-] [out=90,in=315](mult_label_inr) to (mult);
\draw[-] (mult) to (mult_label_out);

\end{tikzpicture}}\!,\hbox{\begin{tikzpicture} [scale=1.2,transform shape] 

\def\deltax{0.3} 
\def\deltay{0.5} 

\path[use as bounding box] (-\deltax,-\deltay) rectangle (\deltax,\deltay);

\node [dot, fill=\Zcolour] (mult) at (0,0) {};
\node (mult_label_out) at (0,+\deltay) {};
\draw[-] (mult) to (mult_label_out);

\end{tikzpicture}}\!)$, together with its adjoint, forms what is known as a quasi-special commutative $\dagger$-Frobenius algebra \cite{StefanoGogioso-FourierTransform}, which we shall refer to as the \textbf{$Z$ observable}.  Although it is not necessarily true that this algebra will have classical points forming a basis, and thus that it can be given the same interpretation as non-degenerate observables in quantum mechanics, we adopt this nomenclature to highlight the fact that the associated GHZ state will play the same role that was played in Mermin's original argument by the GHZ state in the single-qubit Pauli $Z$ basis. As a technical requirement, we will ask for the natural number $d$ to have a multiplicative inverse $d^{-1}$ as an element of the semiring $R$.

		The adjoint $\hbox{\begin{tikzpicture} [scale=1.2,transform shape] 

\def\deltax{0.3} 
\def\deltay{0.5} 

\path[use as bounding box] (-\deltax,-\deltay) rectangle (\deltax,\deltay);

\node (mult_label_outl) at (-\deltax,+\deltay) {};
\node (mult_label_outr) at (+\deltax,+\deltay) {};
\node [dot, fill=\Zcolour] (mult) at (0,0) {};
\node (mult_label_in) at (0,-\deltay) {};
\draw[-] [in=270,out=135] (mult) to (mult_label_outl);
\draw[-] [in=270,out=45] (mult) to (mult_label_outr);
\draw[-] (mult_label_in) to (mult);

\end{tikzpicture}}\!: \SpaceH \rightarrow \SpaceH \tensor \SpaceH$ of the morphism $\hbox{\begin{tikzpicture} [scale=1.2,transform shape] 

\def\deltax{0.3} 
\def\deltay{0.5} 

\path[use as bounding box] (-\deltax,-\deltay) rectangle (\deltax,\deltay);

\node (mult_label_inl) at (-\deltax,-\deltay) {};
\node (mult_label_inr) at (+\deltax,-\deltay) {};
\node [dot, fill=\Zcolour] (mult) at (0,0) {};
\node (mult_label_out) at (0,+\deltay) {};
\draw[-] [out=90,in=225](mult_label_inl) to (mult);
\draw[-] [out=90,in=315](mult_label_inr) to (mult);
\draw[-] (mult) to (mult_label_out);

\end{tikzpicture}}\!$ is given as follows, and will be used to construct the GHZ state:
		\begin{equation}
			\hbox{\begin{tikzpicture} [scale=1.2,transform shape] 

\def\deltax{0.3} 
\def\deltay{0.5} 

\path[use as bounding box] (-\deltax,-\deltay) rectangle (\deltax,\deltay);

\node (mult_label_outl) at (-\deltax,+\deltay) {};
\node (mult_label_outr) at (+\deltax,+\deltay) {};
\node [dot, fill=\Zcolour] (mult) at (0,0) {};
\node (mult_label_in) at (0,-\deltay) {};
\draw[-] [in=270,out=135] (mult) to (mult_label_outl);
\draw[-] [in=270,out=45] (mult) to (mult_label_outr);
\draw[-] (mult_label_in) to (mult);

\end{tikzpicture}}\! = \sum_{x} \sum_{x' \oplus x'' = x} \ket{x'} \tensor\ket{x''} \tensor \bra{x}
		\end{equation}
		By composing\footnote{But not tensoring.} together $N-1$ copies of $\hbox{\begin{tikzpicture} [scale=1.2,transform shape] 

\def\deltax{0.3} 
\def\deltay{0.5} 

\path[use as bounding box] (-\deltax,-\deltay) rectangle (\deltax,\deltay);

\node (mult_label_outl) at (-\deltax,+\deltay) {};
\node (mult_label_outr) at (+\deltax,+\deltay) {};
\node [dot, fill=\Zcolour] (mult) at (0,0) {};
\node (mult_label_in) at (0,-\deltay) {};
\draw[-] [in=270,out=135] (mult) to (mult_label_outl);
\draw[-] [in=270,out=45] (mult) to (mult_label_outr);
\draw[-] (mult_label_in) to (mult);

\end{tikzpicture}}\!$, we can obtain as many different morphisms $\SpaceH \rightarrow \SpaceH^{\tensor N}$ as there are binary trees with $N-1$ nodes. However, the group addition $\oplus$ is associative, and with it $\hbox{\begin{tikzpicture} [scale=1.2,transform shape] 

\def\deltax{0.3} 
\def\deltay{0.5} 

\path[use as bounding box] (-\deltax,-\deltay) rectangle (\deltax,\deltay);

\node (mult_label_inl) at (-\deltax,-\deltay) {};
\node (mult_label_inr) at (+\deltax,-\deltay) {};
\node [dot, fill=\Zcolour] (mult) at (0,0) {};
\node (mult_label_out) at (0,+\deltay) {};
\draw[-] [out=90,in=225](mult_label_inl) to (mult);
\draw[-] [out=90,in=315](mult_label_inr) to (mult);
\draw[-] (mult) to (mult_label_out);

\end{tikzpicture}}\!$ and $\hbox{\begin{tikzpicture} [scale=1.2,transform shape] 

\def\deltax{0.3} 
\def\deltay{0.5} 

\path[use as bounding box] (-\deltax,-\deltay) rectangle (\deltax,\deltay);

\node (mult_label_outl) at (-\deltax,+\deltay) {};
\node (mult_label_outr) at (+\deltax,+\deltay) {};
\node [dot, fill=\Zcolour] (mult) at (0,0) {};
\node (mult_label_in) at (0,-\deltay) {};
\draw[-] [in=270,out=135] (mult) to (mult_label_outl);
\draw[-] [in=270,out=45] (mult) to (mult_label_outr);
\draw[-] (mult_label_in) to (mult);

\end{tikzpicture}}\!$: hence all the morphisms $\SpaceH \rightarrow \SpaceH^{\tensor N}$ that can be obtained from $\hbox{\begin{tikzpicture} [scale=1.2,transform shape] 

\def\deltax{0.3} 
\def\deltay{0.5} 

\path[use as bounding box] (-\deltax,-\deltay) rectangle (\deltax,\deltay);

\node (mult_label_outl) at (-\deltax,+\deltay) {};
\node (mult_label_outr) at (+\deltax,+\deltay) {};
\node [dot, fill=\Zcolour] (mult) at (0,0) {};
\node (mult_label_in) at (0,-\deltay) {};
\draw[-] [in=270,out=135] (mult) to (mult_label_outl);
\draw[-] [in=270,out=45] (mult) to (mult_label_outr);
\draw[-] (mult_label_in) to (mult);

\end{tikzpicture}}\!$ (by composition only) coincide with the following morphism:

		\newcommand{\GeneralisedGHZ}[2]{\operatorname{GHZ}^{#1}_{#2}}
		\newcommand{\GHZ}[1]{\operatorname{GHZ}^{#1}}
		\begin{equation}
			\sum_{x} \ket{\GeneralisedGHZ{N}{x}} \bra{x}
		\end{equation}
		where the \textbf{$N$-partite generalised GHZ state} $\ket{\GeneralisedGHZ{N}{x}}$ (with respect to the $Z$ observable) is given by: 
		\begin{equation}
		\label{eqn_GeneralisedGHZ}
			\ket{\GeneralisedGHZ{N}{x}} := \sum_{x_1\oplus...\oplus x_N = x} \ket{x_1} \tensor ... \tensor \ket{x_N}
		\end{equation}

		\noindent The \textbf{$N$-partite GHZ state} $\ket{\GHZ{N}}$ is defined to be the generalised GHZ state at the group element $x=0$:  
		\begin{equation}
		\label{eqn_GHZ}
			\ket{\GHZ{N}} := \ket{\GeneralisedGHZ{N}{0}} = \sum_{x_1\oplus...\oplus x_N = 0} \ket{x_1} \tensor ... \tensor \ket{x_N}
		\end{equation}		

		\noindent The state in Equation \ref{eqn_GHZ} is expressed in terms of the $X$ observable, while the GHZ state is traditionally written in the $Z$ observable: how is this new expression related to the traditional one? In the case of finite-dimensional Hilbert spaces\footnote{But this can be done in more generality.}, the orthogonal basis $\ket{z}_{z \in Z}$ associated with our $Z$ observable would take the following form, where $Z$ is the set (abelian group $(Z,\cdot,1)$, in fact) of multiplicative characters $z:X \rightarrow S^1$ of the abelian group $(X,\oplus,0)$: 
		\begin{equation}
		\label{eqn_ZbasisHilb}
			\ket{z} := \sum_{x} z(x)^\star \ket{x}
		\end{equation}
		By the fundamental theorem of finite abelian groups, we can always write $X = \prod_{j \in J}\integersMod{n_j}$ for some natural numbers (in fact, prime powers) $n_j$: in this case, elements of $X$ can be written as $J$-indexed vectors, with the $j$-th component valued in $\integersMod{n_j}$, and Equation \ref{eqn_ZbasisHilb} takes the more familiar form:		
		\begin{equation}
		\label{eqn_ZbasisHilbExplicit}
			\ket{z_{\underline{y}}} := \sum_{\underline{x}} \exp\left[- \sum_{j} \frac{y_j x_j}{n_j}\right] \ket{\underline{x}}
		\end{equation}
		where we have fixed some isomorphism\footnote{We can because every finite group is isomorphic to its Pontryagin dual, but our choice of iso is, in general, non-canonical.} $(X,\oplus,0) \isom (Z,\cdot,1)$, bijectively sending $\underline{y} \in X$ to $z_{\underline{y}} \in Y$. Equation \ref{eqn_ZbasisHilbExplicit} can be inverted to obtain write the $X$ basis in terms of the $Z$ basis:
		\begin{equation}
		\label{eqn_ZbasisHilbExplicitInverted}
			\ket{\underline{x}} := \sum_{\underline{y}} \exp\left[\sum_{j} \frac{y_j x_j}{n_j}\right] \ket{z_{\underline{y}}}
		\end{equation}

		\noindent Then Equation \ref{eqn_GHZ} can be written as follows in terms of the $Z$ basis, recovering the traditional definition of GHZ state (generalised from $Z_2$ to an arbitrary finite abelian group $(X,\oplus,0)$):
		\begin{align}
		\label{eqn_GHZTraditional}
			\ket{\GHZ{N}} & = \sum_{\underline{y}_1}...\sum_{\underline{y}_N} \left[\sum_{\underline{x}_1+...+\underline{x}_N = 0}\exp\left[ \sum_{j \in J} \sum_{i=1}^{N} \frac{y_{ij}x_{ij}}{n_j} \right] \right] \ket{z_{\underline{y}_1}}  \tensor ... \tensor \ket{z_{\underline{y}_N}} = \nonumber \\
			& \propto \sum_{\underline{y}} \ket{z_{\underline{y}}} \tensor .... \tensor \ket{z_{\underline{y}}}
		\end{align}	
		where we have used the fact that the sum of exponential in square brackets evaluates to $d^{N-1}$ if $\underline{y}_1 = ... = \underline{y}_N$, and vanishes otherwise. In the $(X,\oplus,0) \isom \integersMod{2}$ case of single qubits, we recover the usual formulation of the $N$-partite GHZ state: 
		\begin{equation}
		\GHZ{N} \propto \ket{z_0}^{\tensor N} + \ket{z_1}^{\tensor N} \text{ in the single-qubit case}
		\end{equation}

	\subsection{Mermin measurement contexts}

		\noindent We now define phase gates for the $Z$ observable, generalising those of Equation \ref{eqn_Z2PhaseGates}. A \textbf{phase state} for the $Z$ observable is a state $\ket{\psi}$ such that the following holds:
		\begin{equation}
		(\bra{\psi} \tensor \id{\SpaceH} ) \cdot \hbox{\begin{tikzpicture} [scale=1.2,transform shape] 

\def\deltax{0.3} 
\def\deltay{0.5} 

\path[use as bounding box] (-\deltax,-\deltay) rectangle (\deltax,\deltay);

\node (mult_label_outl) at (-\deltax,+\deltay) {};
\node (mult_label_outr) at (+\deltax,+\deltay) {};
\node [dot, fill=\Zcolour] (mult) at (0,0) {};
\node (mult_label_in) at (0,-\deltay) {};
\draw[-] [in=270,out=135] (mult) to (mult_label_outl);
\draw[-] [in=270,out=45] (mult) to (mult_label_outr);
\draw[-] (mult_label_in) to (mult);

\end{tikzpicture}}\! \cdot \ket{\psi} = \hbox{\begin{tikzpicture} [scale=1.2,transform shape] 

\def\deltax{0.3} 
\def\deltay{0.5} 

\path[use as bounding box] (-\deltax,-\deltay) rectangle (\deltax,\deltay);

\node [dot, fill=\Zcolour] (mult) at (0,0) {};
\node (mult_label_out) at (0,+\deltay) {};
\draw[-] (mult) to (mult_label_out);

\end{tikzpicture}}\!
		\end{equation}

		\begin{remark}
		\noindent In the case of Hilbert spaces, the orthogonal $Z$ basis $\ket{z}_{z \in Z}$ satisfies: 
		\begin{enumerate}
			\item[(i)] $\hbox{\begin{tikzpicture} [scale=1.2,transform shape] 

\def\deltax{0.3} 
\def\deltay{0.5} 

\path[use as bounding box] (-\deltax,-\deltay) rectangle (\deltax,\deltay);

\node (mult_label_outl) at (-\deltax,+\deltay) {};
\node (mult_label_outr) at (+\deltax,+\deltay) {};
\node [dot, fill=\Zcolour] (mult) at (0,0) {};
\node (mult_label_in) at (0,-\deltay) {};
\draw[-] [in=270,out=135] (mult) to (mult_label_outl);
\draw[-] [in=270,out=45] (mult) to (mult_label_outr);
\draw[-] (mult_label_in) to (mult);

\end{tikzpicture}}\! \ket{z} = \ket{z} \tensor \ket{z}$ 
			\item[(ii)] $\hbox{\begin{tikzpicture} [scale=1.2,transform shape] 

\def\deltax{0.3} 
\def\deltay{0.5} 

\path[use as bounding box] (-\deltax,-\deltay) rectangle (\deltax,\deltay);

\node [dot, fill=\Zcolour] (mult) at (0,0) {};
\node (mult_label_out) at (0,+\deltay) {};
\draw[-] (mult) to (mult_label_out);

\end{tikzpicture}}\! = \sum_z \ket{z}$ 
			\item[(iii)] $\braket{z'}{z} = \delta_{zz'} d$ 
		\end{enumerate}
		
		\noindent Using points (i) and (ii) above, we obtain the following equation characterising any phase state $\ket{\psi}$:
		\begin{equation}
			\sum_{z,z'} \psi_{z'}^\star \psi_z \braket{z'}{z} \ket{z}= \sum_z\ket{z}
		\end{equation}
		
		\noindent Point (iii) allows us to conclude that phase states are exactly those in the form $\ket{\psi} = d\sum_z c_z \ket{z}$, with unimodular $c_z$ coefficients (i.e. $c_z^\star c_z = 1$) for all $z \in Z$.
		\end{remark}

		\noindent Finally, we can use phase states $\ket{\psi}$ to define \textbf{phase gates} for the $Z$ observable:
		\begin{equation}
			\phasegate{\psi} = \hbox{\begin{tikzpicture} [scale=1.2,transform shape] 

\def\deltax{0.3} 
\def\deltay{0.5} 

\path[use as bounding box] (-\deltax,-\deltay) rectangle (\deltax,\deltay);

\node (mult_label_inl) at (-\deltax,-\deltay) {};
\node (mult_label_inr) at (+\deltax,-\deltay) {};
\node [dot, fill=\Zcolour] (mult) at (0,0) {};
\node (mult_label_out) at (0,+\deltay) {};
\draw[-] [out=90,in=225](mult_label_inl) to (mult);
\draw[-] [out=90,in=315](mult_label_inr) to (mult);
\draw[-] (mult) to (mult_label_out);

\end{tikzpicture}}\! \cdot (\id{\SpaceH} \tensor \ket{\psi}) = \sum_{x,x'} \ket{x'} \psi_{(x'\ominus x)} \bra{x}
		\end{equation}

		\noindent These will again form an abelian group under composition, and the set $P$ of phase states will inherit this group structure. Using associativity of $\hbox{\begin{tikzpicture} [scale=1.2,transform shape] 

\def\deltax{0.3} 
\def\deltay{0.5} 

\path[use as bounding box] (-\deltax,-\deltay) rectangle (\deltax,\deltay);

\node (mult_label_inl) at (-\deltax,-\deltay) {};
\node (mult_label_inr) at (+\deltax,-\deltay) {};
\node [dot, fill=\Zcolour] (mult) at (0,0) {};
\node (mult_label_out) at (0,+\deltay) {};
\draw[-] [out=90,in=225](mult_label_inl) to (mult);
\draw[-] [out=90,in=315](mult_label_inr) to (mult);
\draw[-] (mult) to (mult_label_out);

\end{tikzpicture}}\!$, it is immediate to see that the group operation and unit on the phase states are given by $\hbox{\begin{tikzpicture} [scale=1.2,transform shape] 

\def\deltax{0.3} 
\def\deltay{0.5} 

\path[use as bounding box] (-\deltax,-\deltay) rectangle (\deltax,\deltay);

\node (mult_label_inl) at (-\deltax,-\deltay) {};
\node (mult_label_inr) at (+\deltax,-\deltay) {};
\node [dot, fill=\Zcolour] (mult) at (0,0) {};
\node (mult_label_out) at (0,+\deltay) {};
\draw[-] [out=90,in=225](mult_label_inl) to (mult);
\draw[-] [out=90,in=315](mult_label_inr) to (mult);
\draw[-] (mult) to (mult_label_out);

\end{tikzpicture}}\!$ and $\hbox{\begin{tikzpicture} [scale=1.2,transform shape] 

\def\deltax{0.3} 
\def\deltay{0.5} 

\path[use as bounding box] (-\deltax,-\deltay) rectangle (\deltax,\deltay);

\node [dot, fill=\Zcolour] (mult) at (0,0) {};
\node (mult_label_out) at (0,+\deltay) {};
\draw[-] (mult) to (mult_label_out);

\end{tikzpicture}}\!$ respectively: we will refer to this group as the \textbf{group of $Z$-phases}, and denote is by $(P,\oplus,0)$. Furthermore, the elements of the $X$ basis can be easily checked to be $Z$ phase states, and they also form group under $\hbox{\begin{tikzpicture} [scale=1.2,transform shape] 

\def\deltax{0.3} 
\def\deltay{0.5} 

\path[use as bounding box] (-\deltax,-\deltay) rectangle (\deltax,\deltay);

\node (mult_label_inl) at (-\deltax,-\deltay) {};
\node (mult_label_inr) at (+\deltax,-\deltay) {};
\node [dot, fill=\Zcolour] (mult) at (0,0) {};
\node (mult_label_out) at (0,+\deltay) {};
\draw[-] [out=90,in=225](mult_label_inl) to (mult);
\draw[-] [out=90,in=315](mult_label_inr) to (mult);
\draw[-] (mult) to (mult_label_out);

\end{tikzpicture}}\!$ with unit $\hbox{\begin{tikzpicture} [scale=1.2,transform shape] 

\def\deltax{0.3} 
\def\deltay{0.5} 

\path[use as bounding box] (-\deltax,-\deltay) rectangle (\deltax,\deltay);

\node [dot, fill=\Zcolour] (mult) at (0,0) {};
\node (mult_label_out) at (0,+\deltay) {};
\draw[-] (mult) to (mult_label_out);

\end{tikzpicture}}\!$: we will refer to the subgroup of $(P,\oplus,0)$ given by the elements of the $X$ basis as the \textbf{subgroup of $X$-classical points}, and denote it by $(K,\oplus,0)$.\\ 

		\noindent In order to introduce measurements, we have to move from pure states to the mixed state framework. This is a straightforward generalisation of the Hilbert space formalism, where \textbf{mixed states} in $\SpaceH$ are self-adjoint operators $\rho:\SpaceH \rightarrow \SpaceH$, possibly \textbf{positive} and possibly with unit trace: 
		\begin{enumerate}
		\item[(i)] $\rho$ is \textbf{self-adjoint}, i.e. $\rho^\dagger = \rho$
		\item[(ii)] $\rho$ is \textbf{positive}, if $\bra{\psi} \rho \ket{\psi} = b_\psi^\star b_\psi$ for all pure states $\psi$ of $\SpaceH$ and some $b_\psi \in R$ (not necessarily unique). This requirement can be omitted if positivity of mixed states is not a desideratum, e.g. in theories admitting signed probabilities.
		\item[(iii)] $\rho$ has \textbf{unit trace} (i.e. is \textbf{normalised}) if $\sum_x \rho_x^x = 1$. This requirement can be omitted if normalisation of mixed states is not a desideratum.
		\end{enumerate}

		\noindent As usual, \textbf{pure states} $\ket{\psi}$ can be identified with the 1-dimensional projectors:
		\begin{equation}
			\ket{\psi}\bra{\psi} = \sum_{x,x'} \ket{x'} \psi_{x'}^\star \psi_x \bra{x}
		\end{equation}

		\newcommand{\decoherenceSym}[1]{\operatorname{dec}_{#1}}
		\newcommand{\decoherence}[2]{\decoherenceSym{#1}[#2]}

		\noindent The \textbf{measurement/decoherence in the $X$ observable} can then be defined as the following linear transformation of mixed states, eliminating non-diagonal elements in the $X$ basis:
		\begin{equation}
		\label{eqn_decoherenceConvexCombination}
			\decoherence{X}{\rho} = \sum_{x} \ket{x} \rho_x^x \bra{x}
		\end{equation}

		\noindent Like in the Hilbert space case, measurement in the $X$ observable of a positive normalised mixed state $\rho$ always results in a convex combination\footnote{In that case, $(\rho_x^x)_{x \in X}$ is a family of positive elements which sums to 1} of eigenstates of the $X$ observable, and can thus be interpreted as a probabilistic mixture.\footnote{Where probabilities are certain positive elements of the semiring $R$, and coincide with $[0,1]$ in the case $R=\complexs$.} 

		Given a family $(\alpha_i)_{i=1}^N$ of $Z$-phases, we define the associated \textbf{Mermin measurement context} of the $N$-partite GHZ state, which we denote by $C_{(\alpha_i)_{i=1}^N}$, as follows:
		\begin{enumerate}
			\item[1.] Phase gates $\phasegate{\alpha_i}$ are applied locally to the $N$ component systems:
			\begin{equation}
			\label{eqn_MerminMeasContext1}
				\ket{\GHZ{N}} \mapsto \ket{\psi_{\alpha_1...\alpha_N}} := \left(\phasegate{\alpha_1} \tensor ... \tensor \phasegate{\alpha_N}\right) \cdot \ket{\GHZ{N}}
			\end{equation}
			\item[2.] The resulting state $\ket{\psi}$ is measured locally in the $X$ observable:
			\begin{equation}
			\label{eqn_MerminMeasContext2}
				\ket{\psi_{\alpha_1...\alpha_N}} \bra{\psi_{\alpha_1...\alpha_N}} \mapsto \left( \decoherenceSym{X} \tensor ... \tensor \decoherenceSym{X} \right) \Big[ \ket{\psi_{\alpha_1...\alpha_N}}\bra{\psi_{\alpha_1...\alpha_N}} \Big]
			\end{equation}
		\end{enumerate}

		\noindent The following Lemma \cite{CQM-StrongComplementarity} allows us us to recast the outcomes of a Mermin measurement context as the outcomes of measurement in the $X$ observable of some appropriate generalised GHZ state.\footnote{In fact, GHZ states can be further generalised from $X$-classical points to arbitrary $Z$-phases, and the result still holds.}

		\begin{lemma}
		\label{lemma_MerminMeasGeneralisedGHZ}
		Let $\ket{\alpha_1},...,\ket{\alpha_N}$ be phase states for the $Z$ observable, and suppose $x := \oplus_{i=1}^{N} \alpha_i$ is a $X$-classical point. Defining $\ket{\psi_{\alpha_1...\alpha_N}}$ as in Equation \ref{eqn_MerminMeasContext1}, one obtains the following equivalent form of the state in \ref{eqn_MerminMeasContext2}: 
		\begin{align}
			  & \left( \decoherenceSym{X} \tensor ... \tensor \decoherenceSym{X} \right) \Big[\ket{\psi_{\alpha_1...\alpha_N}}\bra{\psi_{\alpha_1...\alpha_N}}\Big] \nonumber\\ 
			= & \left( \decoherenceSym{X} \tensor ... \tensor \decoherenceSym{X} \right) \Big[ \ket{\GeneralisedGHZ{N}{x}}\bra{\GeneralisedGHZ{N}{x}} \Big] \label{eqn_MerminMeasGeneralisedGHZ}
		\end{align}
		\end{lemma}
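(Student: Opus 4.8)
The plan is to reduce the claimed equality of decohered mixed states to an equality of $X$-basis outcome probabilities, and then compute those probabilities directly from the definitions of the phase gates and of the GHZ state. First I would note that, because $\decoherenceSym{X}$ deletes every off-diagonal entry in the $X$ basis, the $N$-fold decoherence applied to a pure state $\ket{\varphi}\bra{\varphi}$ retains only the fully diagonal terms:
\begin{equation}
\left( \decoherenceSym{X} \tensor \dots \tensor \decoherenceSym{X} \right)\big[\ket{\varphi}\bra{\varphi}\big] = \sum_{x_1,\dots,x_N} \ket{x_1,\dots,x_N}\, \varphi_{x_1\dots x_N}^\star \varphi_{x_1\dots x_N}\, \bra{x_1,\dots,x_N}.
\end{equation}
Hence the decohered state depends on $\ket{\varphi}$ only through the probabilities $\varphi_{\vec{x}}^\star \varphi_{\vec{x}}$, and it suffices to show that these agree for $\ket{\varphi} = \ket{\psi_{\alpha_1\dots\alpha_N}}$ and for $\ket{\varphi} = \ket{\GeneralisedGHZ{N}{x}}$ at every $\vec{x} = (x_1,\dots,x_N)$.

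Next I would compute the $X$-coefficients of $\ket{\psi_{\alpha_1\dots\alpha_N}}$. Inserting $\phasegate{\alpha_i}\ket{z_i} = \sum_{x_i}(\alpha_i)_{x_i\ominus z_i}\ket{x_i}$ together with $\ket{\GHZ{N}} = \sum_{z_1\oplus\dots\oplus z_N = 0}\ket{z_1,\dots,z_N}$ gives
\begin{equation}
(\psi_{\alpha_1\dots\alpha_N})_{x_1\dots x_N} = \sum_{z_1 \oplus \dots \oplus z_N = 0} \prod_{i=1}^N (\alpha_i)_{x_i \ominus z_i}.
\end{equation}
The substitution $w_i := x_i \ominus z_i$ carries the constraint $\oplus_i z_i = 0$ to $\oplus_i w_i = \oplus_i x_i$, so that
\begin{equation}
(\psi_{\alpha_1\dots\alpha_N})_{x_1\dots x_N} = \sum_{w_1 \oplus \dots \oplus w_N = x_1 \oplus \dots \oplus x_N} \prod_{i=1}^N (\alpha_i)_{w_i}.
\end{equation}
I would then recognise the right-hand side as exactly the $X$-coefficient, at the group element $\oplus_i x_i$, of the combined $Z$-phase $\alpha_1 \oplus \dots \oplus \alpha_N$, namely the phase state obtained by multiplying the $\ket{\alpha_i}$ together under the group law $\ZmultSym$ of $Z$-phases.

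Finally I would invoke the hypothesis that $x := \oplus_{i=1}^N \alpha_i$ is an $X$-classical point, so that this combined phase state is literally the basis vector $\ket{x}$ and its coefficient at $\oplus_i x_i$ is $\delta_{(\oplus_i x_i),\, x}$. Thus $(\psi_{\alpha_1\dots\alpha_N})_{\vec{x}} = \delta_{(\oplus_i x_i),\, x} \in \{0,1\}$, whence $(\psi_{\alpha_1\dots\alpha_N})_{\vec{x}}^\star (\psi_{\alpha_1\dots\alpha_N})_{\vec{x}} = \delta_{(\oplus_i x_i),\, x}$. This is precisely the diagonal probability of $\ket{\GeneralisedGHZ{N}{x}}\bra{\GeneralisedGHZ{N}{x}}$, whose coefficients $\sum_{x_1\oplus\dots\oplus x_N = x}$ are themselves $\{0,1\}$-valued; by the first step the two decohered states then coincide.

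I expect the one genuinely substantive step to be the change of variables and the identification of $(\psi_{\alpha_1\dots\alpha_N})_{\vec{x}}$ with a single coefficient of $\oplus_i \alpha_i$: this is where strong complementarity does the real work, allowing $N$ independent phase gates acting on the GHZ state to be replaced by one group sum of $Z$-phases. Once that identification is in place, the classical-point hypothesis collapses every probability to $0$ or $1$, which is exactly what removes any dependence on the involution $\star$ and forces agreement with the purely possibilistic generalised GHZ state.
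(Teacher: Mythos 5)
Your proposal is correct. Note first that this paper contains no proof of the lemma at all: it is imported from \cite{CQM-StrongComplementarity}, where it is established abstractly (diagrammatically) from strong complementarity of the $X$ and $Z$ structures, so your blind attempt cannot match a proof in the paper and must be judged on its own. It holds up: every step is sound under the paper's conventions. The $N$-fold decoherence deletes every matrix entry that is off-diagonal in at least one factor, so the decohered state depends on a pure state only through the diagonal values $\varphi_{\vec{x}}^\star \varphi_{\vec{x}}$; the coefficient of $\ket{\psi_{\alpha_1\dots\alpha_N}}$ at $\ket{x_1,\dots,x_N}$ is $\sum_{z_1\oplus\dots\oplus z_N=0}\prod_{i}(\alpha_i)_{x_i\ominus z_i}$ by the $X$-basis form $P_{\alpha}=\sum_{x,x'}\ket{x'}\alpha_{x'\ominus x}\bra{x}$ of the phase gates; and your substitution $w_i:=x_i\ominus z_i$ is a legitimate reindexing (the group $(X,\oplus,0)$ is abelian and $R$ is commutative) that turns this into
\begin{equation}
\sum_{w_1\oplus\dots\oplus w_N=\oplus_i x_i}\ \prod_{i=1}^N(\alpha_i)_{w_i},
\end{equation}
which is precisely the coefficient at $\oplus_i x_i$ of the group product $\alpha_1\oplus\dots\oplus\alpha_N$, since the paper defines the group law on $Z$-phases to be literally the multiplication of the $Z$ structure applied to the tensor of phase states. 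The hypothesis that $\oplus_i\alpha_i$ is the $X$-classical point $x$ is an equality of vectors in the phase group $P$, so the coefficient collapses to $\delta_{(\oplus_i x_i),\,x}\in\{0,1\}$, which is fixed by the involution; this is also the diagonal of $\ket{\operatorname{GHZ}^N_x}\bra{\operatorname{GHZ}^N_x}$, and the resulting unnormalised coefficients agree with Equation \ref{eqn_MerminMeasContextOutcomesPossibilistic}.

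Comparing the two routes: the cited diagrammatic proof is basis-independent and valid in any $\dagger$-SMC carrying a strongly complementary pair, whereas your computation is tied to the concrete presentation used in this paper (free $R$-modules with a chosen $X$ basis) and is correspondingly more elementary; it also makes visible exactly what the lemma erases, namely that the two pure states generally differ in their off-diagonal phases, so decoherence is needed on both sides. Your closing remark is apt: the change of variables is precisely the coordinate shadow of the bialgebra law (strong complementarity) that lets $N$ local phases on the GHZ state be traded for a single group sum of $Z$-phases.
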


		\noindent Equation \ref{eqn_decoherenceConvexCombination} expresses the joint outcomes $\rho_{(\alpha_i)_{i=1}^N}$ of a Mermin measurement context $C_{(\alpha_i)_{i=1}^N}$ as a mixture of $X$-classical points, and Equation \ref{eqn_GeneralisedGHZ}, together with Lemma \ref{lemma_MerminMeasGeneralisedGHZ}, can be used to explicitly compute the coefficients\footnote{Positive, since $1=1^\star 1$.} of each state in the mixture:
		\begin{equation}
			\label{eqn_MerminMeasContextOutcomesPossibilistic}
			\rho_{(\alpha_i)_{i=1}^N} = \sum_{x_1\oplus...\oplus x_N = \oplus_{i=1}^N \alpha_i} \ket{x_1}\bra{x_1} \tensor ... \tensor \ket{x_N}\bra{x_N}
		\end{equation}

		\noindent In order for $\rho_{(\alpha_i)_{i=1}^N}$ to normalisable to a positive unit trace mixed-state, which can in turn be interpreted as a probabilistic mixture of $X$-classical points, the following two requirements must hold:
		\begin{enumerate}
			\item[(a)] the size $d$ must be invertible (which we already required), and positive, i.e. $d = b^\star b$ for some $b$ (so that dividing by its inverse turns positive elements into positive elements). This is merely a technical requirement, to ensure that the coefficients in the normalised sum are positive: it can be avoided if positivity is not a desideratum (e.g. in theories admitting signed probabilities). 
			\item[(b)] the $Z$-phase $\oplus_{i=1}^N\alpha_i$ must lie in the subgroup of $X$-classical points (so that the set of $(x_1,...,x_N) \in K^N$ such that $x_1 \oplus ... \oplus x_N = \oplus_{i=1}^N\alpha_i$ is non-empty). This is a physical requirement, without which the Mermin argument measurement context will fail to be realisable.\footnote{The process is \inlineQuote{impossible} in the given theory, i.e. it doesn't return any outcomes.}
		\end{enumerate}		

		\noindent If both requirements above hold, then the Mermin measurement context will result in the following probabilistic combination of $X$-classical points (note that Equation \ref{eqn_MerminMeasContextOutcomesPossibilistic} takes the form of a possibilistic combination):
		\begin{equation}
			\label{eqn_MerminMeasContextOutcomesProbabilistic}
			\frac{1}{d^{N-1}}\rho_{(\alpha_i)_{i=1}^N} = \sum_{x_1\oplus...\oplus x_N = \oplus_{i=1}^N \alpha_i} \frac{1}{(b^\star b)^{N-1}} \ket{x_1}\bra{x_1} \tensor ... \tensor \ket{x_N}\bra{x_N}
		\end{equation}

		\begin{remark}
		\label{remark_DeterministicOutcome}
		A fundamental observation behind the Mermin argument is that the intrinsically non-deterministic outcomes (for $N \geq 2$) of any Mermin measurement context can be turned into a (interesting) deterministic outcome by applying a suitable, classical group homomorphism to them. In particular, consider the following deterministic function of $X$-classical points:
		\begin{equation}
			f = (x_1,...,x_N) \mapsto x_1 \oplus ... \oplus x_N
		\end{equation}

		\noindent Then the group homomorphism $f:K^N \rightarrow K$ applied to the probabilistic mixture $\frac{1}{d^{N-1}}\rho_{(\alpha_i)_{i=1}^N}$ of $X$-classical points yields the following deterministic $X$-classical outcome:\footnote{An analogous argument holds for the possibilistic version if we use the operation $\bigvee$ of the semiring of booleans instead of the operation $\sum$ of the semiring $R$.}
		\begin{equation}
			f(\frac{1}{d^{N-1}}\rho_{(\alpha_i)_{i=1}^N}) = \ket{\oplus_{i=1}^N \alpha_i}\bra{\oplus_{i=1}^N \alpha_i}
		\end{equation}
		\end{remark}

		\noindent Equations \ref{eqn_MerminMeasContextOutcomesPossibilistic} and \ref{eqn_MerminMeasContextOutcomesProbabilistic} show that the outcomes of a Mermin measurement context $C_{(\alpha_i)_{i=1}^N}$ are entirely characterised\footnote{Both possibilistically and probabilistically.} by the solutions $(x_1,...,x_N) \in K^N$ to the following equation:
		\begin{equation}
			x_1 \oplus ... \oplus x_N = \oplus_{i=1}^N \alpha_i
		\end{equation}
		
		\noindent In order to keep track of both the system and the $Z$-phase associated to the system in the Mermin measurement, we will adopt the following notation, generalising the one we previously used in \ref{eqn_MerminSystemZ2EquationsPMGeneralisedPhases}:
		\begin{equation}
			X_1^{\alpha_1} \oplus ... \oplus X_N^{\alpha_N} = \oplus_{i=1}^N \alpha_i
		\end{equation}

	\subsection{Operational Mermin non-locality arguments}
		\label{section_OperationalMerminNonlocalityArguments}

		\noindent Now assume that we have a $\integers$-module equation in the following form, with $a \in K$ (i.e. \textbf{valued in $K$}) and admitting some solution $y_r := \beta_r$ in the group $P$ of $Z$-phases:
		\begin{equation}
			\label{eqn_KInsolubleEquation}
			\bigoplus_{r=1}^{M} n_r y_r = a
 		\end{equation}

 		\noindent Let $k$ be the exponent of $K$, pick some $N \geq \sum_{r=1}^M n_r$ such that $N = \modclass{1}{k}$ and define:
 		\begin{equation}
 			n_0 := N - \sum_{r=1}^M n_r
 		\end{equation}

 		\noindent For  $i=1,...,N$ define $Z$-phases $\alpha_i \in P$ as follows:
 		\begin{enumerate}
 			\item[(i)] Let $\beta_0 := 0$.
 			\item[(ii)] Define a function $R:\{1,...,N\} \rightarrow \{0,...,M\}$ by:
 			\begin{equation}
 				R(i) := \text{ the least $R \geq 0$ such that } i \leq \sum_{r=0}^R n_r
 			\end{equation}
 			\item[(iii)] For $i=1,...,N$ define $\alpha_i := \beta_{R(i)}$ 
		\end{enumerate}

		\noindent Now we consider the following \textbf{Mermin measurement scenario} $\mathcal{S}$, consisting of one control and N variations:
		\begin{equation}
			\mathcal{S} = 
			\begin{cases}
				X_1^{0} \oplus X_2^{0} \oplus ... \oplus X_{N-1}^{0} \oplus X_N^{0} &= 0 \text{, the control}\\
				X_1^{\alpha_{1}} \oplus X_2^{\alpha_{2}} \oplus ... \oplus X_{N-1}^{\alpha_{N-1}} \oplus X_N^{\alpha_{N}} &= a \text{, the 1st variation}\\
				X_1^{\alpha_{2}} \oplus X_2^{\alpha_{3}} \oplus ... \oplus X_{N-1}^{\alpha_{N}} \oplus X_N^{\alpha_{1}} &= a \text{, the 2nd variation}\\
				X_1^{\alpha_{3}} \oplus X_2^{\alpha_{4}} \oplus ... \oplus X_{N-1}^{\alpha_{1}} \oplus X_N^{\alpha_{2}} &= a \text{, the 3rd variation}\\
				\hspace{2.5cm} \vdots \\
				X_1^{\alpha_{N}} \oplus X_2^{\alpha_{1}} \oplus ... \oplus X_{N-1}^{\alpha_{N-2}} \oplus X_N^{\alpha_{N-1}} &= a \text{, the Nth variation}\\
			\end{cases}
		\end{equation}

		\noindent The $N+1$ Mermin measurement contexts above can each be realised in our generalised framework: the result of applying these measurement contexts to $N+1$ distinct GHZ states can be modelled by tensor product, resulting in the following $N(N+1)$-partite mixture of $X$-classical points:
		\begin{equation}
			\label{eqn_MerminMeasScenarioOutcome}
			\rho_{\mathcal{S}} := \rho_{(0,0,...,0,0)} \tensor \rho_{(\alpha_1,\alpha_2,...,\alpha_{N-1},\alpha_N)} \tensor \rho_{(\alpha_2,\alpha_3...,\alpha_N,\alpha_1)} \tensor ... \tensor \rho_{(\alpha_N,\alpha_1...,\alpha_{N-2},\alpha_{N-1})}
		\end{equation}

		\noindent Now assume that the following deterministic function $f$ of $X$-classical points can be realised as a morphism in our generalised framework:\footnote{We already have multiplication $\hbox{\begin{tikzpicture} [scale=1.2,transform shape] 

\def\deltax{0.3} 
\def\deltay{0.5} 

\path[use as bounding box] (-\deltax,-\deltay) rectangle (\deltax,\deltay);

\node (mult_label_inl) at (-\deltax,-\deltay) {};
\node (mult_label_inr) at (+\deltax,-\deltay) {};
\node [dot, fill=\Zcolour] (mult) at (0,0) {};
\node (mult_label_out) at (0,+\deltay) {};
\draw[-] [out=90,in=225](mult_label_inl) to (mult);
\draw[-] [out=90,in=315](mult_label_inr) to (mult);
\draw[-] (mult) to (mult_label_out);

\end{tikzpicture}}\!$, but one also needs group inversion, which in categorical terms is the antipode of the strongly complementary structures. The multiplication by $n_0$ (in the abelian group/$\integers$-module $K$) can be obviated by adding up $n_0$ independent controls.}
		\begin{equation}
			f_{\mathcal{S}} = 
			\left(
			\begin{array}{ccc}
				x_1^{control}, & ...., & x_N^{control} \\
				x_1^{var_1}, & ...., & x_N^{var_1} \\
				& \vdots & \\
				x_1^{var_N}, & ...., & x_N^{var_N} 
			\end{array}
			\right)
			\mapsto \;\; \left(\left(\bigoplus\limits_{v=1}^N \bigoplus\limits_{i=1}^{N} x_i^{var_v} \right) \ominus n_0\bigoplus\limits_{i=1}^{N} x_i^{control}\right)
		\end{equation}

		\noindent By applying this $f_{\mathcal{S}}$ to the mixture $\rho_{\mathcal{S}}$ of Equation \ref{eqn_MerminMeasScenarioOutcome}, and using Remark \ref{remark_DeterministicOutcome}, we obtain a single deterministic outcome (where we used the fact that $N = \modclass{1}{k}$):
		\begin{equation}
			\label{eqn_DeterministicOutcome}
			f_{\mathcal{S}}(\rho_{\mathcal{S}}) = n_0 \cdot 0 \oplus N \cdot a = 0 \oplus a = a
		\end{equation}

		\noindent Now that we have shown how to realise a generalised scenario, we can tackle the question of locality.

		\begin{theorem}
			The mixture $\rho_{\mathcal{S}}$ admits an $X$-classical probabilistic local hidden variable model if and only if there is a solution $y_r := b_r$ in the subgroup $K$ of $X$-classical points to Equation \ref{eqn_KInsolubleEquation}.
		\end{theorem}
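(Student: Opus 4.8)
The plan is to reduce the existence of a local hidden variable model to the consistency over $K$ of the natural system of $\integers$-linear equations read off from the $N+1$ contexts of $\mathcal{S}$, and then to show that this system is consistent precisely when Equation \ref{eqn_KInsolubleEquation} has a solution in $K$. A deterministic non-contextual assignment is a function $\lambda$ sending each measurement $X_i^{\beta_r}$ to a value $\lambda(X_i^{\beta_r}) \in K$; by Equations \ref{eqn_MerminMeasContextOutcomesPossibilistic}--\ref{eqn_MerminMeasContextOutcomesProbabilistic} the support of the empirical distribution on each context is exactly the coset of tuples in $K^N$ whose $\oplus$-sum equals the total phase of that context, namely $0$ for the control and $\bigoplus_i \alpha_i = a$ for each variation. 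Hence $\lambda$ is compatible with all contexts iff it solves the system made of the control equation $\bigoplus_i \lambda(X_i^{0}) = 0$ together with the $N$ variation equations $\bigoplus_i \lambda(X_i^{\alpha_{\sigma_v(i)}}) = a$, where $\sigma_v$ denotes the $v$-th cyclic shift. I prove the two implications separately.

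For the forward implication, suppose $\rho_{\mathcal S}$ admits a probabilistic local hidden variable model $\mu$, that is, a distribution over such assignments whose marginal on each context reproduces the empirical distribution. Then every $\lambda$ in the support of $\mu$ restricts, on each context, into the support of that context's empirical distribution, so $\lambda$ solves the whole system above. Summing the $N$ variation equations and using that for each fixed $i$ the shifts $\sigma_v(i)$ range over all of $\{1,\dots,N\}$, the left-hand side becomes $\bigoplus_{i}\bigoplus_{r=0}^{M} n_r\,\lambda(X_i^{\beta_r})$, since $R$ takes each value $r$ exactly $n_r$ times; the right-hand side is $N a = a$ because $N = \modclass{1}{k}$ and $k$ is the exponent of $K$. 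Setting $Y_r := \bigoplus_{i=1}^N \lambda(X_i^{\beta_r}) \in K$ and using the control equation together with $\beta_0 = 0$ to kill the $r=0$ term ($Y_0 = 0$), this collapses to $\bigoplus_{r=1}^{M} n_r Y_r = a$, exhibiting a solution of Equation \ref{eqn_KInsolubleEquation} in $K$.

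For the converse, given a solution $b_r \in K$ (with $b_0 := 0$), I build an explicit probabilistic model: draw $(h_1,\dots,h_N)$ uniformly from the subgroup $\{(x_i) \in K^N \mid \bigoplus_i x_i = 0\}$ and set $\lambda(X_i^{\beta_r}) := b_r \oplus h_i$. On the control every party returns $h_i$, giving the uniform distribution on $\{\bigoplus_i x_i = 0\}$; on the $v$-th variation party $i$ returns $b_{R(\sigma_v(i))} \oplus h_i$, and translating the uniform law on the kernel subgroup by the fixed tuple $(b_{R(\sigma_v(i))})_i$, whose sum is $a$, yields the uniform distribution on $\{\bigoplus_i x_i = a\}$. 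These are exactly the empirical distributions of Equation \ref{eqn_MerminMeasContextOutcomesProbabilistic}, so $\mu$ is the desired local hidden variable model.

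The main obstacle is the careful passage between the probabilistic formulation demanded by the statement and the possibilistic/deterministic equation-consistency that drives the algebra: one must argue that a probabilistic model forces each assignment in its support to be a deterministic solution of the full system (a support and marginalisation argument), and conversely that a single deterministic solution can be smeared by a kernel-valued shift into a genuine probabilistic model reproducing the uniform coset distributions. The remaining care is bookkeeping for the cyclic structure, namely checking that summing the variations makes each phase $\beta_r$ occur with multiplicity exactly $n_r$, together with the arithmetic point that $N a = a$, which is precisely why $N$ was chosen with $N = \modclass{1}{k}$.
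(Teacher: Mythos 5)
Your proposal is correct and follows essentially the same route as the paper: your forward direction, summing the variation equations so that each phase $\beta_r$ occurs with multiplicity $n_r$ and extracting the solution $Y_r := \bigoplus_{i=1}^N \lambda(X_i^{\beta_r})$ with $Na = a$, is exactly the paper's application of the homomorphism $f_{\mathcal{S}}$ to the hidden-variable decomposition, and your converse construction $\lambda(X_i^{\beta_r}) := b_r \oplus h_i$ with $(h_i)$ uniform on the kernel subgroup is precisely the paper's explicit model $\sum_{x_1\oplus\cdots\oplus x_N = 0} \frac{1}{d^{N-1}} \bigotimes_{v,i} \ket{x_i + b^{R_{iv}}}\bra{x_i + b^{R_{iv}}}$. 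If anything, you spell out the support/marginalisation step and the verification of the context marginals in more detail than the paper's sketch.
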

		\begin{proof}
			We only sketch the main points; the detailed proof can be found in \cite{StefanoGogioso-MerminNonLocality}.
			\begin{enumerate}
				\item[(i)]Suppose that $\rho_{\mathcal{S}}$ admits a $X$-classical probabilistic non-contextual hidden variable model: 
				\begin{equation}
					\rho_{\mathcal{S}} = \sum_{t=1}^T p_t \bigotimes\limits_{v=0}^N \bigotimes\limits_{i=1}^N \ket{b_{i,t}^{R_{iv}}}  \bra{b_{i,t}^{R_{iv}}} 
				\end{equation}
				where we have defined:
				\begin{enumerate}
					\item[(a)] $b_{i,t}^r \in K$ for all $t=1,...,T$, $i=1,...,N$ and $r=0,...,R$
					\item[(b)] $R_{iv} := 0$ for $v=0$ (i.e. for the control)
					\item[(c)] $R_{iv} := R(\modclass{i+v-1}{N})$ for $v=1,...,N$ (i.e. for the $N$ variations), and our modular sums are modulo $N$ with set of residues $\{1,...,N\}$ (instead of the traditional $\{0,...,N-1\}$).
				\end{enumerate}

				\noindent Because $f_{\mathcal{S}}$ is a deterministic function of $X$-classical points, and a group homomorphism $K^{N(N+1)} \rightarrow K$, Equation \ref{eqn_DeterministicOutcome} implies that, for each $t=1,...,T$, we have:
				\begin{equation}
					\bigoplus_{r=1}^{M} n_r \left(\bigoplus_{i=1}^N b_{i,t}^r\right) = b
				\end{equation}

				\noindent In particular, $(y_r := \sum_{i=1}^N b_{i,t}^r)_{r=1}^R$ is a solution in $K$ to Equation \ref{eqn_KInsolubleEquation}.

				\item[(ii)] In the other direction, assume that there is a solution $(y_r := b^r)_{r=1}^{R}$ in $K$ to Equation \ref{eqn_KInsolubleEquation}. Then, by using this solution together with Lemma \ref{lemma_MerminMeasGeneralisedGHZ}, a local hidden variable model for $\rho_{\mathcal{S}}$ can be obtained as follows:
				\begin{equation}
					\rho_{\mathcal{S}} = \sum_{x_1\oplus ... \oplus x_N = 0} \frac{1}{d^{(N-1)}} \bigotimes\limits_{v=0}^N \bigotimes\limits_{i=1}^N \ket{x_i + b^{R_{iv}}}  \bra{x_i + b^{R_{iv}}} 
				\end{equation}

			\end{enumerate}
		\end{proof}

		\noindent This method can be generalised from an individual equation in the form of \ref{eqn_KInsolubleEquation} to systems of $\integers$-module equations, constructing a Mermin measurement scenario $\mathcal{S}_{sys} = \tensor_j \mathcal{S}_{eqn_j}$ for the system by considering independent Mermin measurement scenarios $\mathcal{S}_{eqn_j}$ for each equation. This leads us to the following algebraic characterisation of Mermin non-locality. \cite{StefanoGogioso-MerminNonLocality}

		\begin{theorem}
			A process theory is Mermin non-local, i.e. it admits an operational Mermin non-locality argument, if and only if for (i) some space $\SpaceH$, (ii) some basis $X$ on $\SpaceH$, and (ii) some group structure on the $X$ basis realised by some structure $Z$, we have that the group $(P,\oplus,0)$ of $Z$-phases is an algebraically non-trivial extension of the subgroup $(K,\oplus,0)$ of $X$-classical points, i.e. that there is some system $\mathbb{S}$ of $\integers$-module equations valued in $K$ which has solutions in $P$ but not in $K$. 
		\end{theorem}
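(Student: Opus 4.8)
The plan is to reduce the statement to the preceding Theorem on single-equation scenarios, together with the tensor-product construction for systems outlined just above. The governing dictionary is this: a Mermin measurement scenario built from a $\integers$-module equation valued in $K$ is realisable precisely when that equation has a solution in the group $P$ of $Z$-phases (this is what the $P$-solution $\beta_r$ is used for, to define the phase gates), and the resulting mixture $\rho_{\mathcal{S}}$ admits an $X$-classical local hidden variable model precisely when the equation also has a solution in the subgroup $K$ of $X$-classical points. Once this dictionary is in place, both implications are essentially bookkeeping.

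For the reverse implication I would start from a system $\mathbb{S}$ of $\integers$-module equations valued in $K$ possessing a solution $(y_r := \beta_r)$ in $P$ but none in $K$. Using the $P$-solution to define the phase gates exactly as in Section \ref{section_OperationalMerminNonlocalityArguments}, I assemble the scenario $\mathcal{S}_{sys} = \tensor_j \mathcal{S}_{eqn_j}$. Requirement (b) from the construction holds because every equation is valued in $K$, so each context is realisable and the mixture $\rho_{\mathcal{S}_{sys}}$ exists. Since $\mathbb{S}$ has no solution in $K$, the systems form of the preceding Theorem shows that $\rho_{\mathcal{S}_{sys}}$ admits no $X$-classical local hidden variable model; this is by definition an operational Mermin non-locality argument, so the theory is Mermin non-local.

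For the forward implication I would unfold the definition: if the theory is Mermin non-local then, for some $\SpaceH$, some basis $X$, and some structure $Z$, it admits an operational Mermin non-locality argument, and every such argument arises from the construction of Section \ref{section_OperationalMerminNonlocalityArguments} applied to some system $\mathbb{S}$ valued in $K$. Realisability of the scenario already forces a solution of $\mathbb{S}$ in $P$, while the very conclusion of the argument --- that $\rho_{\mathcal{S}_{sys}}$ has no $X$-classical local hidden variable model --- is, by the preceding Theorem, equivalent to $\mathbb{S}$ having no solution in $K$. Hence $\mathbb{S}$ exhibits $P$ as an algebraically non-trivial extension of $K$, which is what had to be shown.

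The main obstacle is making the passage from a single equation to a genuine system rigorous, namely checking that a local hidden variable model for the product scenario $\tensor_j \mathcal{S}_{eqn_j}$ corresponds to a \emph{simultaneous} solution of $\mathbb{S}$ in $K$ rather than to unrelated per-equation solutions. This requires that a shared variable $y_r$ be realised by a common family of phase states across all the tensor factors in which it occurs, so that the extraction $y_r := \oplus_{i} b_{i,t}^{r}$ from the preceding Theorem's proof yields, for each fixed hidden-variable value $t$, one consistent value of $y_r$ feeding every equation at once. Granting this bookkeeping, applying the deterministic group-homomorphism argument of Remark \ref{remark_DeterministicOutcome} to each tensor factor gives the required equivalence, and the detailed verification is the content deferred to \cite{StefanoGogioso-MerminNonLocality}.
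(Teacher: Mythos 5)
Your architecture mirrors the paper's own sketch --- the paper likewise gives no proof of this theorem beyond the remark that the single-equation Theorem ``can be generalised'' to systems via $\mathcal{S}_{sys} = \tensor_j \mathcal{S}_{eqn_j}$, deferring details to \cite{StefanoGogioso-MerminNonLocality} --- but the step you explicitly ``grant'' is exactly the point at which the naive tensor construction fails, and no bookkeeping can rescue it. Local hidden variable models both compose under tensor product (take the product of the hidden-variable distributions) and marginalise (partial-trace each term of the decomposition), so $\rho_{\mathcal{S}_{sys}}$ admits an $X$-classical LHV model \emph{if and only if every factor does}, i.e.\ if and only if every equation of $\mathbb{S}$ is \emph{individually} solvable in $K$ --- not if and only if $\mathbb{S}$ has a simultaneous $K$-solution. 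Non-contextuality cannot enforce your ``one consistent value of $y_r$ across factors'': applying the phase $\beta_r$ on a subsystem of factor $j$ and on a subsystem of factor $j'$ are distinct measurements on distinct parties, so a hidden variable may legitimately assign them outcomes extracting different values $y_r^{(j)} \neq y_r^{(j')}$. The two conditions genuinely differ: over $K = \integersMod{4}$, embedded as $\{0,\frac{\pi}{2},\pi,\frac{3\pi}{2}\}$ in $P = \reals/(2\pi\integers)$, the consistent system $2y_1 = 2$, $y_1 \oplus 2y_2 = 0$ has the $P$-solution $(y_1,y_2) = (\frac{\pi}{2},-\frac{\pi}{4})$ and no joint $K$-solution (the first equation forces $y_1 \in \{1,3\}$, the second forces $y_1 \in \{0,2\}$), yet each equation separately is $K$-solvable; by part (ii) of the preceding Theorem each factor scenario is then local, so your $\rho_{\mathcal{S}_{sys}}$ admits an LHV model and the reverse implication, as you argue it, breaks on precisely such systems.

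The theorem itself survives, via a purely algebraic reduction your proposal is missing: any $K$-valued system with a solution in $P$ but none in $K$ already yields a \emph{single} such equation, so the systems case reduces to the single-equation Theorem the paper actually proves. Put the integer coefficient matrix in Smith normal form, $N = UDV$ with $U,V$ invertible over $\integers$: the change of variables $z := Vy$ is a bijection on $K^M$ and on $P^M$, and the system becomes the independent one-variable equations $d_i z_i = (U^{-1}a)_i$, still valued in $K$ (any zero row must have zero right-hand side, by $P$-solvability). Joint $K$-unsolvability forces some single equation $d_i z_i = a'_i$ to have no solution in $K$, while $z_i := (V\beta)_i$ solves it in $P$; feeding this one equation into the single-equation construction produces the operational argument. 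The forward direction is unproblematic even for your tensor scenario --- a joint $K$-solution of $\mathbb{S}$ restricts to per-equation $K$-solutions, making every factor and hence the product local, and realisability of the contexts already supplies the $P$-solution --- so the correct route is reduction of systems to single equations, not granting cross-factor consistency of the shared-variable extraction.
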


		\noindent For example, qubit stabiliser quantum mechanics is Mermin non-local, because it is possible to formulate the original Mermin non-locality argument in it: the group $\{0,\frac{\pi}{2},\pi,\frac{2\pi}{2}\} \isom \integersMod{4}$ of $Z$-phases has a solution $y := \frac{\pi}{2}$ to the equation $2 y = \pi$, which has no solution in the subgroup $\{0,\pi\} \isom \integersMod{2}$ of $X$-classical points. On the other hand, the process theory given by finite-sets and relations between them\footnote{Which is a process theory of free modules over the semiring $R = \{0,1,\vee,\wedge\}$ of the booleans.}, a model for non-deterministic classical computation, is Mermin local: all $Z$-phase groups are in the form $P = K \times H$ for some abelian group $H$, and thus any solution in $P$ to a system of equations valued in $K$ can be projected to a solution in $K$. This holds true in the more general case where the $X$ structure does not yield a basis.\footnote{In the category of sets and relations, almost all pairs $(X,Z)$ of strongly complementary structures do not yield an $X$ basis.}

	\subsection{Quantum realisability}
		\label{section_Operational_QuantumRealisability}

		Potentially, there are a lot of possible combinations $(P,K)$ of $Z$-phase groups $P$ and subgroups $K$ of $X$-classical points: it is possible to construct toy theories yielding any individual pair (but we will not do so here). However, the only features of the abelian group $P$ required by the argument are that:
		\begin{enumerate}
			\item[(i)] $P$ contains the subgroup $K$ of $X$-classical points
			\item[(ii)] $P$ contains the $Z$-phases involved in the solution to the system of equations
		\end{enumerate} 

		\noindent As a consequence, any process theory providing a phase group satisfying points (i) and (ii) above will allow for a realisation of the argument. The problem of realisability of an operational Mermin non-locality argument in some given process theory can then be formulated as follows:\\
		
		\parshape 1 .10\hsize 0.8\hsize
		\noindent \textit{Given finite abelian group $K$ and a finite consistent system $\mathbb{S}$ of $K$-valued $\integers$-module equations with no solutions in $K$, are there appropriate $X$ basis and $Z$ structure (on some system $\SpaceH$ in the given process theory), such that $K$ is isomorphic to the subgroup of $X$-classical points, and the group $P$ of $Z$-phases contains a solution to $\mathbb{S}$?}\\

		\noindent In the framework above, operational Mermin non-locality arguments can be formulated in any process theory with a suitable strong complementary pair. In particular, a large family of arguments can be formulated in the category $\fdHilbCategory$, and we shall refer to these arguments as \textbf{quantum realisable}. Let $\SpaceH$ be a $(d+1)$-dimensional Hilbert space, and $\ket{x}_{x \in X}$ be an orthonormal basis on it. Let $\mathbb{G} = (X,\oplus,0)$ be an abelian group structure on $X$ and define the $Z$ structure by:		
		\begin{align}
			\hbox{\begin{tikzpicture} [scale=1.2,transform shape] 

\def\deltax{0.3} 
\def\deltay{0.5} 

\path[use as bounding box] (-\deltax,-\deltay) rectangle (\deltax,\deltay);

\node (mult_label_inl) at (-\deltax,-\deltay) {};
\node (mult_label_inr) at (+\deltax,-\deltay) {};
\node [dot, fill=\Zcolour] (mult) at (0,0) {};
\node (mult_label_out) at (0,+\deltay) {};
\draw[-] [out=90,in=225](mult_label_inl) to (mult);
\draw[-] [out=90,in=315](mult_label_inr) to (mult);
\draw[-] (mult) to (mult_label_out);

\end{tikzpicture}}\! \cdot \left( \ket{x} \tensor \ket{x'} \right) &:= \ket{x \oplus x'} \nonumber \\
			\hbox{\begin{tikzpicture} [scale=1.2,transform shape] 

\def\deltax{0.3} 
\def\deltay{0.5} 

\path[use as bounding box] (-\deltax,-\deltay) rectangle (\deltax,\deltay);

\node [dot, fill=\Zcolour] (mult) at (0,0) {};
\node (mult_label_out) at (0,+\deltay) {};
\draw[-] (mult) to (mult_label_out);

\end{tikzpicture}}\! &:= \ket{0}
		\end{align}

		\noindent If we denote by $(\ket{j})_{j=0,...,d}$ the orthogonal basis associated with the $Z$ structure \cite{CQM-OrthogonalBases}, then the phase states for the $Z$ structure are exactly the states of $\SpaceH$ in the following form:
		\begin{equation}
			\ket{\underline{\alpha}} := \sum_{j=0}^d e^{i \alpha_j} \ket{j}
		\end{equation}

		\noindent Addition $\underline{\alpha} \oplus \underline{\beta}$ in the group of $Z$-phases is done componentwise and modulo $2 \pi$, i.e.
		\begin{equation}
		(\underline{\alpha} \oplus \underline{\beta})_j := \modclass{\alpha_j+\beta_j}{2 \pi}
		\end{equation}

		\noindent Since quantum states are identified up to global scalars, it is traditional to set $\alpha_0 = 0$. Under this identification, the abelian group of $Z$-phases forms a $d$-dimensional torus $(T^d \isom \reals^d / (2 \pi \integers)^d,\oplus,\underline{0})$, with the abelian group $\mathbb{G}$ of $X$-classical points as a subgroup of order $d+1$.


		\begin{theorem}[Quantum realisability]
			Let $\SpaceH$ be a $(d+1)$-dimensional Hilbert space, with $d \geq 0$, and $\ket{x}_{x \in X}$ any orthonormal basis on it. We will refer to the associated classical structure as the \textbf{$X$ structure}, and to the elements of the basis as \textbf{$X$-classical points}. Let $\mathbb{G} = (X,\oplus,0)$ be a finite abelian group of order $d+1$, and consider some consistent system $\mathbb{S}$ of $\mathbb{G}$-valued $\integers$-module equations with no solution in $\mathbb{G}$:
			\begin{equation}\label{eqn_systemAlgExt}
				\begin{cases}
				\bigoplus_{r=1}^{M} n^1_r \, y_r = a^1 \\
				\hspace{1.2cm} \vdots  \\
				\bigoplus_{r=1}^{M} n^S_r \, y_r = a^S 
				\end{cases}
			\end{equation} 
			Then there exists a quasi-special commutative $\dagger$-Frobenius algebra, which we will refer to as the \textbf{$Z$ structure}, such that the following is true: 
			\begin{enumerate}
				\item[(i)] the $X$ and $Z$ structures are strongly complementary, so that the $X$ basis forms a subgroup of the abelian group $P$ of $Z$ phases.
				\item[(ii)] the subgroup $K$ of $X$-classical points is isomorphic to $\mathbb{G}$.
				\item[(iii)] the consistent system $\mathbb{S}$ admits a solution $(y_r := \beta_r)_{r=1}^M$ in the group $P$ of $Z$ phases.
			\end{enumerate}
		\end{theorem}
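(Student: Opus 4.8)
The plan is to take the \textbf{$Z$ structure} to be exactly the group-algebra Frobenius structure written down just before the statement, namely $\ZmultSym \cdot (\ket{x} \tensor \ket{x'}) = \ket{x \oplus x'}$ and $\ZunitSym = \ket{0}$, built from the given group $\mathbb{G} = (X,\oplus,0)$. Claims (i) and (ii) are then essentially bookkeeping: that $(\ZmultSym,\ZunitSym)$ together with its adjoint is a quasi-special commutative $\dagger$-Frobenius algebra strongly complementary to the $X$ structure is the standard group-algebra construction, under which the $X$-classical points are precisely the group-like states $\ket{x}$ and form, under $\ZmultSym$, a copy of $\mathbb{G}$ inside the group $P$ of $Z$-phases; this yields (i) and the isomorphism $K \isom \mathbb{G}$ of (ii). I would cite the preceding discussion here, which also identifies $P \isom T^d = \reals^d/(2\pi\integers)^d$ with $K \isom \mathbb{G}$ embedded as the order-$(d+1)$ subgroup. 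All the genuine content therefore lies in (iii).

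For (iii) the key structural fact is that the phase group $P \isom T^d = \reals^d/(2\pi\integers)^d$ is a \emph{divisible} abelian group: for every $t \in P$ and every nonzero integer $n$ the equation $n y = t$ has a solution $y \in P$ (componentwise, $y = t/n$ modulo the lattice). Since $\mathbb{S}$ is a finite system of $\integers$-module equations, I would encode its coefficients $(n^s_r)$ as a matrix $A \in \integers^{S \times M}$, so the system reads $A\,\mathbf{y} = \mathbf{a}$ with $\mathbf{a} = (a^s)_s \in \mathbb{G}^S \subseteq P^S$, and bring $A$ to Smith normal form $A = U\,D\,V$ with $U \in GL_S(\integers)$, $V \in GL_M(\integers)$, and $D$ diagonal with nonzero entries $d_1 \mid \cdots \mid d_\ell$ followed by zero rows and columns.

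Because $U, V$ and their inverses have integer entries, they act invertibly on the $\integers$-modules $P^M$ and $P^S$ and preserve membership in $\mathbb{G}$; setting $\mathbf{z} = V\mathbf{y}$ and $\mathbf{b} = U^{-1}\mathbf{a} \in \mathbb{G}^S$ decouples the system into the single-variable equations $d_i z_i = b_i$ for $i \leq \ell$ together with the constraints $0 = b_i$ coming from the zero rows $i > \ell$. The constraints $b_i = 0$ are exactly the conditions under which no $\integers$-linear combination of the equations produces an identity $0 = (\text{nonzero element of } \mathbb{G})$, and these hold precisely because $\mathbb{S}$ is assumed \emph{consistent}; moreover, since each $b_i \in \mathbb{G}$ and $\mathbb{G} \hookrightarrow P$ is injective, they may be read off equivalently in $\mathbb{G}$ or in $P$. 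For each remaining equation $d_i z_i = b_i$ with $d_i \neq 0$, divisibility of $P$ supplies a solution $z_i \in P$, and the free coordinates $z_i$ with $i > \ell$ may be set to $0$. Pulling back through $\mathbf{y} = V^{-1}\mathbf{z}$ gives the desired solution $(y_r := \beta_r)_{r=1}^M$ in $P$, establishing (iii).

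The main obstacle is conceptual rather than computational, and it is the heart of (iii): one must recognise that the \emph{only} reason $\mathbb{S}$ can fail to be solvable in $K \isom \mathbb{G}$ is the non-divisibility of the finite group $K$ (a Smith-normal-form equation $d_i y = b_i$ can be unsolvable modulo the torsion of $K$, as in the motivating case $2y = \pi$ with $K = \{0,\pi\}$), whereas passing to the divisible torus $P$ removes exactly this obstruction, while the consistency hypothesis rules out any genuine contradiction among the equations. The care needed is in checking that the integral change of basis furnished by Smith normal form interacts correctly with the fixed embedding $K = \mathbb{G} \hookrightarrow P$, so that the constants $b_i$ remain in $\mathbb{G}$ and the consistency conditions can be detected inside $\mathbb{G}$; once this is in place, divisibility of $T^d$ closes the argument.
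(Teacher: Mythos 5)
Your proposal is correct, and for part (iii) it takes a genuinely different route from the paper. The paper handles (i) and (ii) exactly as you do, by taking the $Z$ structure to be the group algebra of $\mathbb{G}$ (citing uniqueness of the strongly complementary partner), but for (iii) it first splits the system over $T^d \isom \reals^d/(2\pi\integers)^d$ into $d$ independent systems over $S^1$, and then invokes a bespoke \emph{non-deterministic Gaussian elimination} in $S^1$, developed in an appendix: since $S^1$ is divisible but not torsion-free, division by $n$ is $n$-valued, so the paper replaces equations by set-membership conditions and reworks the elimination steps with set-valued operations $\frac{p}{q}A$. Your Smith normal form argument replaces that machinery with standard integral linear algebra, and it checks out: writing $A = UDV$ with $U,V$ unimodular, the substitution $\mathbf{z} = V\mathbf{y}$, $\mathbf{b} = U^{-1}\mathbf{a}$ is legitimate because integer matrices act on any abelian group and $\mathbb{G}$ is closed under integer combinations; a zero row of $D$ corresponds to the integer relation $(U^{-1})_i A = 0$ among the rows of $A$, so consistency of $\mathbb{S}$ forces $b_i = 0$ there; and each invariant-factor equation $d_i z_i = b_i$ is solvable by divisibility of the torus (you do not even need the paper's componentwise reduction to $S^1$, since divisibility of $T^d$ is immediate). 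What your route buys is a cleaner separation of the two hypotheses --- consistency disposes of the zero rows, divisibility disposes of the diagonal --- with a fully standard correctness proof, whereas the paper's appendix is somewhat informal about exactly where consistency enters its elimination; what the paper's route buys is a reusable, self-contained elimination procedure stated for arbitrary divisible abelian groups, though your argument in fact works verbatim at that same level of generality, since Smith normal form only uses the $\integers$-module structure.
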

		\begin{proof}
			Take the $Z$ structure to be the unique quasi-special commutative $\dagger$-Frobenius algebra such that points (i) and (ii) above hold \cite{StefanoGogioso-FourierTransform}\cite{CQM-KissingerPhdthesis}. We are now looking for $Z$-phases $\underline{\beta}^{(1)},...,\underline{\beta}^{(M)}$ such that 
			\begin{equation}\label{eqn_systemAlgExtTd}
				\begin{cases}
				\bigoplus_{r=1}^{M} n^1_r \, \underline{\beta}^{(r)} = \underline{a}^{(1)} \\
				\hspace{1.2cm} \vdots  \\
				\bigoplus_{r=1}^{M} n^S_r \, \underline{\beta}^{(r)} = \underline{a}^{(S)}
				\end{cases}
			\end{equation} 
			where we have adapted notation to accommodate the fact that both the $Z$-phases $\underline{\beta}^{(r)}$ and the $X$-classical points $\underline{a}^{(s)}$ are points of the torus $T^d \isom \reals^d / (2 \pi \integers)^d$, which can be written as $d$-dimensional vectors with coordinates in $S^1 \isom \reals/(2 \pi \integers)$.

			Observe that solving $\mathbb{S}$ in $T^d \isom \reals^d / (2 \pi \integers)^d$ is equivalent to solving the following $d$ independent systems, one for each $j=1,...,d$, in $S^1 \isom \reals/(2 \pi \integers)$:
			\begin{equation}\label{eqn_systemAlgExtS1}
				\begin{cases}
				\oplus_{r=1}^{M} n^1_r \, \beta^{(r)}_j = a^{(1)}_j \\
				\hspace{1.2cm}\vdots \\
				\oplus_{r=1}^{M} n^S_r \, \beta^{(r)}_j = a^{(S)}_j 
				\end{cases}
			\end{equation} 
			We apply Gaussian elimination (see section \ref{section_AppendixGaussianEliminationS1} in the Appendix) to solve each of the $d$ systems in $S^1$ (any solution will do), and obtain our Z-phases as the corresponding points $\underline{\beta}^{(1)},...,\underline{\beta}^{(M)}$ of $T^d$.
		\end{proof}

		\begin{corollary}
			All operational Mermin non-locality arguments are quantum realisable.
		\end{corollary}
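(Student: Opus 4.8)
The plan is to recognise that the Corollary is a direct repackaging of the preceding Quantum realisability Theorem via the algebraic characterisation of Mermin non-locality established earlier. First I would recall that, by that characterisation, every operational Mermin non-locality argument is completely specified by a pair $(K,\mathbb{S})$ consisting of a finite abelian group $K$ (the intended group of $X$-classical points, finite because the $X$ basis is finite) together with a finite system $\mathbb{S}$ of $K$-valued $\integers$-module equations which is consistent, i.e. admits a solution in some abelian group containing $K$, but has no solution in $K$ itself. The remarks of Section \ref{section_Operational_QuantumRealisability} are what make this reduction legitimate: they show that the scenario construction depends on nothing about the ambient phase group $P$ beyond the two features that $P$ contains $K$ and that $P$ contains a solution to $\mathbb{S}$.

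Next I would apply the Quantum realisability Theorem with $\mathbb{G} := K$. Since $K$ is finite abelian, of order $d+1$ say, the theorem produces a $(d+1)$-dimensional Hilbert space $\SpaceH$ equipped with a quasi-special commutative $\dagger$-Frobenius ($Z$) structure strongly complementary to the given $X$ basis, for which the subgroup of $X$-classical points is isomorphic to $K$ and the group $P \isom T^d$ of $Z$-phases contains a solution $(y_r := \beta_r)_r$ to $\mathbb{S}$. These are exactly the two features isolated above, now realised inside $\fdHilbCategory$.

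Finally I would feed this concrete strongly complementary pair into the scenario construction of Section \ref{section_OperationalMerminNonlocalityArguments} (tensoring together the per-equation scenarios in the case where $\mathbb{S}$ is a genuine system rather than a single equation, exactly as in the passage following the locality theorem there). The Mermin measurement scenario and its mixture $\rho_{\mathcal{S}}$ are then built verbatim, and the locality theorem of that section guarantees that $\rho_{\mathcal{S}}$ admits an $X$-classical probabilistic local hidden variable model if and only if $\mathbb{S}$ has a solution in $K$. Since $\mathbb{S}$ has no such solution by hypothesis, the realised scenario is genuinely contextual, so the argument is quantum realisable.

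The substance of the Corollary is thus entirely bookkeeping: all the real work sits in the Quantum realisability Theorem, and the only point that requires care is verifying that the abstract data defining an operational Mermin non-locality argument matches its hypotheses exactly. In particular one must check that the notion of consistency needed — solvability of $\mathbb{S}$ in \emph{some} extension of $K$ — is the one the theorem supplies, which is precisely what Gaussian elimination over the divisible coordinate group $S^1 \isom \reals/(2\pi\integers)$ delivers in that proof.
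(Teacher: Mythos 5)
Your proposal is correct and follows exactly the route the paper intends: the Corollary is stated without proof precisely because it is immediate from the Quantum Realisability Theorem together with the observation opening Section \ref{section_Operational_QuantumRealisability} that the scenario construction uses nothing about $P$ beyond containing $K$ and a solution to $\mathbb{S}$, and you reassemble these pieces faithfully, including the tensoring of per-equation scenarios for systems. One small wording caveat: the paper's notion of consistency is the combinatorial condition on integer row combinations, not solvability in an extension --- but since the paper notes that solvability in any super-group (here $P$) implies consistency, which is all your argument needs before invoking the theorem, your logical chain is sound.
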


\section{All-vs-Nothing arguments}
	\label{section_AllvsNothingArguments}
	
	\noindent The second part of this paper presents the work of \cite{NLC-AvNSheaf} on All-vs-Nothing arguments, a different generalisation of the original Mermin non-locality argument, and clarifies its relationship with the operational Mermin non-locality presented in the first part. The sheaf-theoretic contextuality framework is reviewed, and Mermin's original argument is rephrased within it. The framework of All-vs-Nothing arguments is then introduced. Finally, operational Mermin non-locality arguments are shown to provide an infinite non-collapsing hierarchy (over the finite rings $\integersMod{n}$) of quantum realisable All-vs-Nothing arguments.

	\subsection{Sheaf-theoretic contextuality}

		In this section, we summarise the basic framework of sheaf-theoretic contextuality by \cite{NLC-SheafSeminal}. We begin by considering a finite set $\mathcal{X}$ of \textbf{measurements}; in the abstract framework this is just a set, but from the point of view of realisable non-locality scenarios these are measurements of some state $\ket{\psi}$ in some space $\SpaceH$. For example, if $\ket{\psi} = \ket{\GHZ{3}}$ is a 3-qubit GHZ state in the single-qubit Pauli $Z$ basis, then the measurements involved in the original Mermin argument form the following six element set:
		\begin{equation}
			\mathcal{X} = \bigsqcup_{j=1}^3 \{X_j,Y_j\}
		\end{equation}
		where $X_j/Y_j$ are measurements in the single-qubit Pauli $X/Y$ observables on the $j$-th qubit. More in general, for measurements of $N$-partite states, where party $j$ has access to a finite measurement set $M_j$ and all parties choose their measurements independently, one obtains the disjoint union $\mathcal{X} = \sqcup_{j=1}^N M_j$. The disjoint union preserves information about which party each measurement is associated to, so we will adopt the notation $m_j$ for generic elements of $\mathcal{X}$, where $m$ is the measurement and $j$ is the party.

		Each measurement $m_j \in M_j$ comes with a set $O_j^m$ of \textbf{outcomes}: if $U \subseteq \mathcal{X}$ is a subset of measurements, then the family of all potential \textbf{joint outcomes} takes the form:
		\begin{equation}
			\sheafOfEvents{U} := \prod_{m_j \in U} O_j^m
		\end{equation} 

		\noindent A fundamental feature of quantum mechanics is that not all measurements are compatible, so we shouldn't expect joint outcomes to play a role for sets $U$ containing incompatible measurements. From an abstract point of view, this is captured in the sheaf-theoretic framework by specifying a set $\mathcal{M}$ of \textbf{measurement contexts}, sets $C \subseteq \mathcal{X}$ of measurements which are mutually compatible (and therefore have a well-defined notion of joint outcome). One need not specify \textit{all} sets of mutually compatible outcomes as measurement contexts, but only those which are needed by a specific non-locality argument; for example, the measurement contexts involved in Mermin's original non-locality argument are:
		\begin{align}
			C_{control} & := \{X_1,X_2,X_3\} \nonumber \\
			C_{var_1} & := \{Y_1,Y_2,X_3\} \nonumber \\
			C_{var_2} & := \{Y_1,X_2,Y_3\} \nonumber \\
			C_{var_3} & := \{X_1,Y_2,Y_3\}  
		\end{align}

		\noindent The set $\Powerset{\mathcal{X}}$ of all possible subsets $U$ of the finite set $\mathcal{X}$ is a poset (and therefore a poset category) under inclusion $V \subseteq U$ of subsets. We can define a functor $\sheafOfEventsSym: \OpCategory{\Powerset{\mathcal{X}}} \rightarrow \SetCategory$, i.e. a \textbf{presheaf}, by setting:
		\begin{enumerate}
			\item[(i)] if $U \in  \Powerset{\mathcal{X}} $, then we define $ \sheafOfEvents{U} := \prod_{m_j \in U} O_j^m$ as above
			\item[(ii)] if $V \subseteq U $, then we define $\sheafOfEvents{V \subseteq U}:=\restrictionMap{U}{V}$ to be the following \textbf{restriction map} $U \stackrel{\SetCategory}{\longrightarrow} V$:
			\begin{equation}
				\restrictionMap{U}{V} = s \mapsto \restrict{s}{V}
			\end{equation}
			which sends a \textbf{section $s$ over $U$} (or \textbf{$U$-section}):
			\begin{equation}
				s = \suchthat{(m_j,s(m_j))}{m_j \in U} \in  \prod_{m_j \in U} O_j^m
			\end{equation}
			to its restriction $\restrict{s}{V}$ to a section over $V$:
			\begin{equation}
				\restrict{s}{V} = \suchthat{(m_j,s(m_j))}{m_j \in V} \in  \prod_{m_j \in V} O_j^m
			\end{equation}
		\end{enumerate}

		\noindent Furthermore, $\Powerset{\mathcal{X}}$ is a locale, the locale of open sets for $\mathcal{X}$ endowed with the discrete topology. The locale structure defines \textbf{local covers} for any $U \in \Powerset{\mathcal{X}}$ as the families $(U_i)_{i \in I}$ such that $\cup_{i \in I} U_i = U$. A \textbf{global cover} for $\Powerset{\mathcal{X}}$ is a local cover of $\mathcal{X}$, and from now on we will require the set $\mathcal{M}$ of measurement contexts to be a global cover of $\mathcal{X}$, i.e. $\cup_{C \in \mathcal{M}} C = \mathcal{X}$. 

		Since $\Powerset{\mathcal{X}}$ is a locale, one can define a notion of \textbf{sheaf} on it. Let $F: \OpCategory{\Powerset{\mathcal{X}}} \rightarrow \SetCategory$ be any presheaf. If $(U_i)_{i \in I}$ is a local cover of some $U \in \Powerset{\mathcal{X}}$, then a \textbf{compatible family} of elements of $F$ indexed by $(U_i)_{i \in I}$ is a family $(s_i \in F[U_i])_{i \in I}$ such that: 
		\begin{equation}
			F[U_i \cap U_j \subseteq U_i](s_i) = F[U_i \cap U_j \subseteq U_j](s_j) \text{ for all } i,j \in I
		\end{equation}

		\noindent In particular, if $s \in F[U]$ then letting $s_i := F[U_i \subseteq U](s)$ for all $i \in I$ defines a compatible family. The presheaf $F$ is then a \textbf{sheaf} on the locale $\Powerset{\mathcal{X}}$ if it satisfies the following \textbf{gluing condition}:\footnote{In mathematical literature, this is often stated as two separate conditions: one of existence, the gluing condition, and one of uniqueness, the locality condition.} every compatible family $(s_i \in F[U_i])_{i \in I}$ admits a unique \textbf{gluing} $s \in F[U]$ such that $s_i = F[U_i \subseteq U](s)$ for all $i \in I$.

		Because it is defined in terms of sections, the presheaf $\sheafOfEventsSym$ is in fact a sheaf on the locale $\Powerset{\mathcal{X}}$, and we shall refer to it as the \textbf{sheaf of events}. Indeed, consider a family $(U_i)_{i \in I}$ of subsets of $\mathcal{X}$ (a local cover of $U := \cup_{i \in I} U_i$), and a compatible family $(s_i \in \sheafOfEvents{U_i})_{i \in I}$ of sections:
		\begin{equation}
			\restrict{s_i}{U_i \cap U_j} = \restrict{s_j}{U_i \cap U_j} \text{ for all } i,j \in I
		\end{equation}
		i.e. we have that $s_i(m_i) = s_j(m'_j)$ whenever $m_i = m'_j$ for some $m_i \in U_i$ and $m'_j \in U_j$. Then there exists a unique gluing $s \in \sheafOfEvents{U}$ such that for all $i \in I$ we have $\restrict{s}{U_i} = s_i$:
		\begin{equation}
			s := \suchthat{(m_j,s_j(m_j))}{m_j \in U}
		\end{equation}

		\noindent We are now in possession of all the ingredients of a \textbf{measurement scenario}, encoded by the pair $(\sheafOfEventsSym,\mathcal{M})$, and it's time to introduce probabilities. In quantum mechanics, \textbf{probabilities} take values in the commutative semiring $R = (\reals^+,+,0,\cdot,1)$ of the non-negative reals, and in fact they traditionally fall within the interval $[0,1]$, a consequence in $R$ of the requirement that probabilities add up to $1$. In other circumstances, one may be interested in the \textbf{possibilities} associated with events, living in the commutative semiring $\mathbb{B} = (\{0,1\},\vee,0,\wedge,1)$ of the booleans. In the sheaf-theoretic treatment of contextuality, one works with an arbitrary commutative semiring $R = (|R|,+,0,\cdot,1)$. Given a set $U$, an \textbf{$R$-distribution} on $U$ is a function $d: U \rightarrow R$ which has finite \textbf{support} $\support{d} := \suchthat{s \in U}{d(s) \neq 0}$ and such that:
		\begin{equation}
			\sum_{s \in \support{d}} d(s) = 1
		\end{equation}  

		\noindent One can define a functor $\distributionFunctorSym{R}: \SetCategory \rightarrow \SetCategory$ by setting:
		\begin{enumerate}
			\item[(i)] for any set $U$, define $\distributionFunctor{R}{U}$ to be the set of $R$-distributions of $U$
			\item[(ii)] for any function $f: U \rightarrow V$, define $\distributionFunctor{R}{f}: \distributionFunctor{R}{U} \rightarrow \distributionFunctor{R}{V}$ to be the following function:
			\begin{equation}
				\distributionFunctor{R}{f} = d \mapsto \left[ t \mapsto \sum_{f(s) = t} d(s) \right]
			\end{equation}
		\end{enumerate}

		\noindent Composing this functor with the sheaf of events we obtain the \textbf{presheaf of distributions} $\presheafOfDistributionsSym{R}:\OpCategory{\Powerset{\mathcal{X}}} \rightarrow \SetCategory$, sending each set $U$ of measurements to the set $\presheafOfDistributions{R}{U}$ of \textbf{distributions on $U$-sections}, and acting as marginalisation of distributions on the inclusions $V \subseteq U$:
		\begin{equation}
			\presheafOfDistributions{R}{V \subseteq U} = d \mapsto \restrict{d}{V} := \left[ t \mapsto \sum_{\restrict{s}{V} = t} d(s)\right]
		\end{equation}

		\noindent We will refer to $\restrict{d}{V}$ as the \textbf{marginal} of $d$ since, when the notation is expanded a bit, it takes the following, more familiar form:
		\begin{equation}
			t \in \sheafOfEvents{V} \implies \restrict{d}{V}(t) := \hspace{-0.75cm}\sum_{s \in \sheafOfEvents{V} \text{ s.t. } \restrict{s}{V} = t} \hspace{-0.75cm} d(s)
		\end{equation}

		\noindent In quantum mechanics, if $C$ is a set of compatible measurements on some state $\ket{\psi}$, then there is a probability distribution $d \in \presheafOfDistributions{\reals^+}{C}$ on the joint outcomes of the measurements, and the typical contextuality argument involves showing that the probability distributions on different contexts cannot be obtained, in a no-signalling scenario, as marginals of some non-contextual hidden variable. In the sheaf-theoretic framework, a \textbf{(no-signalling) empirical model} is defined to be a compatible family of distributions $(e_C)_{C \in \mathcal{M}}$ for the global cover $\mathcal{M}$ of measurement contexts; the usual no-signalling property is shown in \cite{NLC-SheafSeminal} to be a special case of the compatibility condition. A \textbf{global section} for an empirical model\footnote{From now on, no-signalling is implicitly assumed.} $(e_C)_{C \in \mathcal{M}}$ is a distribution $d \in \presheafOfDistributions{R}{\mathcal{X}}$ over the joint outcomes of all measurements which marginalises to the distributions specified by the empirical model:
		\begin{equation}
			\restrict{d}{C}=e_C \text{ for all } C \in \mathcal{M}
		\end{equation} 

		\noindent The fundamental observation behind the sheaf-theoretic framework is that the existence of a global section for an empirical model is equivalent to the existence of a \textbf{non-contextual hidden variable model}: we will say that an empirical model $(e_C)_{C \in \mathcal{M}}$ is \textbf{contextual} if it doesn't admit a global section. Contextuality of probabilistic models is interesting in itself, but more refined notions can be obtained by relating $\reals^+$ to two other semirings: the reals, modelling signed probabilities, and the booleans, modelling possibilities. Observe that the construction $\distributionFunctorSym{R}$ is functorial in $R$: if $r: R \rightarrow R'$ is a morphism of semirings, then for any fixed set $U$ we can define:
		\begin{equation}
			\distributionFunctor{r}{U} = \left[d:U \rightarrow R\right] \mapsto \left[r \circ d: U \rightarrow R'\right]
		\end{equation}

		\noindent In particular, there is an injective morphism of semirings $i^+: \reals^+ \inject R$ sending $x \in \reals^+$ to $+x \in \reals$ and a surjective morphism of semirings $p: \reals^+ \rightarrow \mathbb{B}$ sending $0 \mapsto 0$ and $x \neq 0 \mapsto 1$. If $(e_C)_{C \in \mathcal{M}}$ is a probabilistic empirical model, i.e. one in the semiring $\reals^+$, then $(e_C)_{C \in \mathcal{M}}$ can be seen as an empirical model $(i^+ \circ e_C)_{C \in \mathcal{M}}$ in the semiring $\reals$: regardless of whether $(e_C)_{C \in \mathcal{M}}$ was contextual or not over $\reals^+$, it can be shown \cite{NLC-SheafSeminal} that over the reals it always admits a global section. 

		On the other hand, any probabilistic empirical model $(e_C)_{C \in \mathcal{M}}$ can be assigned a corresponding possibilistic empirical model $(p\circ e_C)_{C \in \mathcal{M}}$ in the semiring $\mathbb{B}$ of the booleans (and each boolean function $p \circ e_C$ can equivalently be seen as the characteristic function of the subset $\support{e_C} \subseteq \sheafOfEvents{C}$). Note that contextuality is a contravariant property with respect to change of semiring: if $(e_C)_{C \in \mathcal{M}}$ is an empirical model in a semiring $R$ and $r: R \rightarrow R'$ is a morphism of semiring, then contextuality of $(r \circ e_C)_{C \in \mathcal{M}}$ implies contextuality of $(e_C)_{C \in \mathcal{M}}$ (because a global section $d$ of the latter is mapped to a global section $r \circ d$ of the former). We will say that a probabilistic empirical model $(e_C)_{C \in \mathcal{M}}$ is \textbf{probabilistically non-extendable} if it is contextual, and \textbf{possibilistically non-extendable} if the corresponding possibilistic model $(p \circ e_C)_{C \in \mathcal{M}}$ is contextual: because of contravariance, possibilistic non-extandability implies probabilistic non-extendability. Furthermore, the opposite is not true: the Bell model given in \cite{NLC-SheafSeminal} is probabilistically non-extendable but not possibilistically non-extendable.

		Seeing distributions $d \in \presheafOfDistributions{\mathbb{B}}{U}$ as indicator functions of the subsets $\support{d} \subseteq \sheafOfEvents{U}$ endows them with a partial order:
		\begin{equation}
			\label{eqn_SCimpliesC}
			d' \preceq d \text{ if and only if } \support{d'} \subseteq \support{d}
		\end{equation}

		\noindent The existence of a global section $d \in \presheafOfDistributions{\mathbb{B}}{U}$ for a possibilistic empirical model $(e_C)_{C \in \mathcal{M}}$ implies that:
		\begin{equation}
			\label{eqn_StrongContextualityCondition}
			\restrict{d}{C} \preceq e_C \text{ for all } C \in \mathcal{M}
		\end{equation}

		\noindent We say that a possibilistic empirical model $(e_C)_{C \in \mathcal{M}}$  is \textbf{strongly contextual} iff there is no distribution $d \in \presheafOfDistributions{\mathbb{B}}{\mathcal{X}}$ such that Equation \ref{eqn_StrongContextualityCondition} holds. In particular, the GHZ model given in \cite{NLC-SheafSeminal}, corresponding to Mermin's original non-locality argument, is strongly contextual. Because of Equation \ref{eqn_SCimpliesC}, strong contextuality implies contextuality, but the opposite is not true: the possibilistic Hardy model give in \cite{NLC-SheafSeminal} is contextual, but not strongly contextual. We will say that a probabilistic empirical model is strongly contextual iff the associated possibilistic empirical model is, and this defines the following (strict) hierarchy of notions of contextuality for probabilistic empirical models:
		\begin{equation}
			\text{probabilistically non-extendable } \Leftarrow \text{ possibilistically non-extendable } \Leftarrow \text{ strongly contextual}
		\end{equation} 

		\noindent All $d \in \presheafOfDistributions{\mathbb{B}}{\mathcal{X}}$ satisfying Equation \ref{eqn_StrongContextualityCondition} form a (possibly empty) lattice, and thus a probabilistic empirical model is strongly contextual iff the following set is empty:
		\begin{equation}
			\supportSubpresheaf{\mathcal{X}} := \suchthat{s \in \sheafOfEvents{\mathcal{X}}}{ \restrict{s}{C} \in \support{e_C} \text{ for all } C \in \mathcal{M}}
		\end{equation}

		\noindent For a possibilistic (no-signalling) empirical model $(e_C)_{C \in \mathcal{M}}$, we can define \cite{NLC-AvNSheaf} a \textbf{support subpresheaf} $\supportSubpresheafSym \subseteq \sheafOfEventsSym$ by setting:
		\begin{equation}
			\supportSubpresheaf{U} := \suchthat{s \in \sheafOfEvents{U}}{ \restrict{s}{C \cap U} \in \support{\restrict{e_C}{U \cap C}} \text{ for all } C \in \mathcal{M}}
		\end{equation}
		Then a possibilistic empirical model is strongly contextual if and only if $\supportSubpresheaf{\mathcal{X}} = \emptyset$. The support subpresheaf satisfies \cite{NLC-AvNSheaf} the following properties:
		\begin{enumerate}
			\item[(i)] $\supportSubpresheaf{C} \neq \emptyset$ for all $C \in \mathcal{M}$
			\item[(ii)] $\supportSubpresheafSym$ is \textbf{flasque beneath the cover}, i.e. $\supportSubpresheaf{V \subseteq U}$ is a surjective function whenever $V \subseteq U \subseteq C$ for some $C \in \mathcal{M}$
			\item[(iii)] $\supportSubpresheafSym$ is a \textbf{sheaf above the cover}, i.e. if $(s_C \in \supportSubpresheaf{C})_{C \in \mathcal{M}}$ is a compatible family for the subpresheaf $\supportSubpresheafSym$, then there is a unique global section $s \in \supportSubpresheaf{\mathcal{X}}$ such that $\restrict{s}{C} = s_C$ for all $C \in \mathcal{M}$ 
		\end{enumerate}

		\noindent The possibilistic empirical model $(e_C)_{C \in \mathcal{M}}$ can be reconstructed from the subpresheaf $\supportSubpresheafSym$ as follows:
		\begin{equation}
			\label{eqn_SubpresheafEM}
			\support{e_C} = \supportSubpresheaf{C}
		\end{equation} 
		and furthermore it can be shown \cite{NLC-AvNSheaf} that any subpresheaf $\supportSubpresheafSym \subseteq \sheafOfEventsSym$ satisfying properties (i)-(iii) above defines a possibilistic no-signalling empirical model via Equation \ref{eqn_SubpresheafEM}.

	\subsection{Mermin's original non-locality argument}

		In this section, we will re-cast Mermin's original non-locality argument into a probabilistic empirical model. We then examine the associated possibilistic empirical model and give an explicit, group-theoretic proof of strong contextuality (other, more general proofs already appeared in \cite{NLC-SheafSeminal} and \cite{NLC-AvNSheaf}, but it is instructive to give an explicit one for the original case). 

		As mentioned in the previous section, the set of measurements for Mermin's non-locality argument is given as follows:
		\begin{equation}
			\mathcal{X} = \bigsqcup_{j=1}^3 \{X_j,Y_j\}
		\end{equation}
		and the global cover $\mathcal{M}$ is given by the following measurement contexts:\begin{align}
			C_{control} & := \{X_1,X_2,X_3\} \nonumber \\
			C_{var_1} & := \{Y_1,Y_2,X_3\} \nonumber \\
			C_{var_2} & := \{Y_1,X_2,Y_3\} \nonumber \\
			C_{var_3} & := \{X_1,Y_2,Y_3\}  
		\end{align}

		\noindent Because the measurement $Y_j$ can be equivalently be obtained by applying a phase gate and then measuring in $X_j$, we will assume all outcomes to be valued in the group $\integersMod{2}$, and define the sheaf of events as:
		\begin{equation} 
			\sheafOfEvents{U} := (\integersMod{2})^U
		\end{equation}

		\noindent In particular, for each measurement context $C$ we have $\sheafOfEvents{C} \isom (\integersMod{2})^3$. The probabilistic empirical model $(e_C)_{C \in \mathcal{M}}$ arising from Mermin's original non-locality argument can then be written as follows:
		\begin{equation}
			\begin{array}{ccc|c|c}
			&&					& (b_1,b_2,b_3) \text{ s.t. } b_1\oplus b_2 \oplus b_3 =_{\integersMod{2}} 0 & (b_1,b_2,b_3) \text{ s.t. } b_1\oplus b_2 \oplus b_3 =_{\integersMod{2}} 1\\
			\hline 
			X_1 & X_2 & X_3 	&	1/4		&	0 \\  
			Y_1 & Y_2 & X_3 	&	0		&	1/4 \\  
			Y_1 & X_2 & Y_3 	&	0		&	1/4 \\  
			X_1 & Y_2 & Y_3 	&	0		&	1/4 \\ 
			\end{array}
		\end{equation}

		\noindent The associated possibilistic empirical model $(p \circ e_C)_{C \in \mathcal{M}}$ can be written as follows:		
		\begin{equation}
			\begin{array}{ccc|c|c}
			&&					& (b_1,b_2,b_3) \text{ s.t. } b_1\oplus b_2 \oplus b_3 =_{\integersMod{2}} 0 & (b_1,b_2,b_3) \text{ s.t. } b_1\oplus b_2 \oplus b_3 =_{\integersMod{2}} 1\\
			\hline 
			X_1 & X_2 & X_3 	&	1		&	0 \\  
			Y_1 & Y_2 & X_3 	&	0		&	1 \\  
			Y_1 & X_2 & Y_3 	&	0		&	1 \\  
			X_1 & Y_2 & Y_3 	&	0		&	1 \\ 
			\end{array}
		\end{equation}

		\noindent The subpresheaf $\supportSubpresheafSym \subseteq \sheafOfEventsSym$ associated to $(e_C)_{C \in \mathcal{M}}$ takes the following values on the cover:
		\begin{align}
			\supportSubpresheaf{C_{control}}  &= \suchthat{(b_1^X,b_2^X,b_3^X)}{b_1^X \oplus b_2^X \oplus b_3^X =_{\integersMod{2}} 0} \nonumber\\
			\supportSubpresheaf{C_{var_1}}  &= \suchthat{(b_1^Y,b_2^Y,b_3^X)}{b_1^Y \oplus b_2^Y \oplus b_3^X =_{\integersMod{2}} 1} \nonumber\\
			\supportSubpresheaf{C_{var_2}}  &= \suchthat{(b_1^Y,b_2^X,b_3^Y)}{b_1^Y \oplus b_2^X \oplus b_3^Y =_{\integersMod{2}} 1} \nonumber\\
			\supportSubpresheaf{C_{var_3}}  &= \suchthat{(b_1^X,b_2^Y,b_3^Y)}{b_1^X \oplus b_2^Y \oplus b_3^Y =_{\integersMod{2}} 1} 
		\end{align} 
		where we have labelled the elements as $b_j^X$ or $b_j^Y$ to denote that they are outcomes of different measurements $X_j$ or $Y_j$ (and thus live in different, albeit isomorphic, sets). Strong contextuality is equivalent to showing that $\supportSubpresheaf{\mathcal{X}} = \emptyset$: the existence of any element 
		\begin{equation}
			(b_1^X,b_1^Y,b_2^X,b_2^Y,b_3^X,b_3^Y) \in \supportSubpresheaf{\mathcal{X}}
		\end{equation}
		would provide a solution to the following (inconsistent) system of equations: 
		\begin{equation}
		\label{eqn_MerminOriginalArgumentSystemAvN}
		\begin{cases}
			b_1^X \oplus b_2^X \oplus b_3^X =_{\integersMod{2}} 0\\
			b_1^Y \oplus b_2^Y \oplus b_3^X =_{\integersMod{2}} 1\\
			b_1^Y \oplus b_2^X \oplus b_3^Y =_{\integersMod{2}} 1\\
			b_1^X \oplus b_2^Y \oplus b_3^Y =_{\integersMod{2}} 1 
		\end{cases}
		\end{equation} 

		\noindent Therefore the empirical model arising from Mermin's non-locality argument is strongly contextual.

	\subsection{All-vs-Nothing arguments}

		The fundamental observation behind the generalised All-vs-Nothing arguments of \cite{NLC-AvNSheaf} is that strong contextuality of Mermin's original non-locality argument follows straightforwardly from the existence of the system of $\integersMod{2}$ equations \ref{eqn_MerminOriginalArgumentSystemAvN} which has no global solution (corresponding to $\supportSubpresheaf{\mathcal{X}} = \emptyset$), but where each equation admits a solution (corresponding to $\supportSubpresheaf{C} \neq \emptyset$ for all measurement contexts $C$). In this section we summarise the basic framework of All-vs-Nothing arguments from \cite{NLC-AvNSheaf}, taking the liberty of slightly generalising the definitions therein from rings to modules over rings.

		\newcommand{\eqnIndex}[1]{\operatorname{index}(#1)}

		Let $R$ be some ring. Note that, in this section, $R$ will no more denote the semiring over which distributions are taken (which is fixed to $\mathbb{B}$), but will be some commutative ring with unit; we will denote by $+$ the addition in the ring $R$, and by $\oplus$ the addition in $R$-modules. If $G$ is some $R$-module, we will define an \textbf{$R$-linear equation valued in $G$} to be a triple $\phi = (C,n,b)$ where:
		\begin{enumerate}
			\item[(i)] $C$ is some finite set, and we define $\eqnIndex{\phi} := C$
			\item[(ii)] $n: C \rightarrow R$ is any function
			\item[(iii)] $b \in G$
		\end{enumerate}

		\newcommand{\RlinearTheory}[2]{\mathbb{T}_{#1}(#2)}

		\noindent If $\phi = (C,n,b)$ is an $R$-linear equation valued in $G$, we will say that a function $s: C \rightarrow G$ (henceforth an \textbf{assignment}) \textbf{satisfies} $\phi$, written $s \models \phi$, if and only if the following equation holds in $G$:
		\begin{equation}
			\bigoplus_{m \in C} n_m s_m = b
		\end{equation}
		where we denoted $n_m := n(m)$ and $s_m := s(m)$. Any set $W$ of assignments $C \rightarrow G$ can be associated its corresponding set $\RlinearTheory{R}{W}$ of satisfied equations, which is itself an $R$-module:\footnote{This gives rise to some interesting results on affine closures, see \cite{NLC-AvNSheaf}.}
		\begin{equation}
			\label{eqn_RlinearTheory}
			\RlinearTheory{R}{W} := \suchthat{\phi}{s \models \phi \text{ for all }s \in W}
		\end{equation} 

		\noindent Let $(e_C)_{C \in \mathcal{M}}$ be a possibilistic empirical model for a measurement scenario $(\sheafOfEventsSym,\mathcal{M})$, such that all measurements have the same $R$-module $G$ as their set of outcomes (for example we had $G = \integersMod{2}$, a $\integers$-module, for Mermin's original non-locality argument). Let $\supportSubpresheafSym \subseteq \sheafOfEventsSym$ be the support subpresheaf for the empirical model and define its \textbf{$R$-linear theory} to be:
		\begin{equation}
			\RlinearTheory{R}{\supportSubpresheafSym} := \bigcup_{C \in \mathcal{M}}  \RlinearTheory{R}{\supportSubpresheaf{C}}
		\end{equation} 
		
		\newcommand{\AvN}[2]{\operatorname{AvN}_{#1,#2}}
		\newcommand{\AvNring}[1]{\operatorname{AvN}_{#1}}

		\noindent We say that the empirical model is \textbf{All-vs-Nothing} with respect to ring $R$ and $R$-module $G$, written $\AvN{R}{G}$, iff the $R$-linear theory admits no solution in $G$, i.e. iff there exists no global assignment $s: \mathcal{X} \rightarrow G$ such that:
		\begin{equation}
			\label{eqn_AvNDefinition}
			\restrict{s}{C} \models \phi \text{ for all } C \in \mathcal{M} \text{ and all } \phi \in \RlinearTheory{R}{\supportSubpresheaf{C}}
		\end{equation}
		
		\noindent To connect back with the notation in \cite{NLC-AvNSheaf}, we will simply write $\AvNring{R}$ for $\AvN{R}{R}$. It is now straightforward to prove \cite{NLC-AvNSheaf} that any model which is $\AvN{R}{G}$ for some ring $R$ and some $R$-module $G$ is strongly contextual: if the model isn't strongly contextual, then there is some global assignment $s \in \supportSubpresheaf{\mathcal{X}}$, and this implies $\restrict{s}{C} \in \supportSubpresheaf{C}$ for all $C \in \mathcal{M}$, which in turn implies the desired Equation \ref{eqn_AvNDefinition} (by appealing to Equation \ref{eqn_RlinearTheory}). Because every probabilistic empirical model is strongly contextual if and only if it is maximally contextual \cite{NLC-SheafSeminal}, i.e. if and only if it lies on a face of the no-signalling polytope, then being $\AvN{R}{G}$ is a particularly neat way of proving that a quantum realisable model is maximally contextual, a highly desired property in the field of quantum security.

	\subsection{Operational Mermin non-locality arguments}

		Now we turn our attention back to the operational Mermin non-locality presented in the first part of this paper. In the setting we presented, scalars have a semiring structure $R$ on them, and therefore they are prima facie compatible with the sheaf-theoretic framework. Consider a generic Mermin measurement scenario, consisting of one control and N variations:
		\begin{equation}
			\label{eqn_MerminMeasurementScenario}
			\begin{cases}
				X_1^{0} \oplus X_2^{0} \oplus ... \oplus X_{N-1}^{0} \oplus X_N^{0} &= 0 \text{, the control}\\
				X_1^{\alpha_{1}} \oplus X_2^{\alpha_{2}} \oplus ... \oplus X_{N-1}^{\alpha_{N-1}} \oplus X_N^{\alpha_{N}} &= a \text{, the 1st variation}\\
				X_1^{\alpha_{2}} \oplus X_2^{\alpha_{3}} \oplus ... \oplus X_{N-1}^{\alpha_{N}} \oplus X_N^{\alpha_{1}} &= a \text{, the 2nd variation}\\
				X_1^{\alpha_{3}} \oplus X_2^{\alpha_{4}} \oplus ... \oplus X_{N-1}^{\alpha_{1}} \oplus X_N^{\alpha_{2}} &= a \text{, the 3rd variation}\\
				\hspace{2.5cm} \vdots \\
				X_1^{\alpha_{N}} \oplus X_2^{\alpha_{1}} \oplus ... \oplus X_{N-1}^{\alpha_{N-2}} \oplus X_N^{\alpha_{N-1}} &= a \text{, the Nth variation}\\
			\end{cases}
		\end{equation}
		where $\alpha_i := a_{R(i)}$ are $Z$-phases as in Section \ref{section_OperationalMerminNonlocalityArguments} and $(y_r := a_r)_{r=1}^M$ is a solution in the group $(P,\oplus,0)$ of $Z$-phases to the following $\integers$-module equation (valued in the subgroup $(K,\oplus,0)$ of $X$-classical points):
		\begin{equation}
			\label{eqn_UnsolvableEquationAvNSection}
			\bigoplus\limits_{r=1}^M n_r y_r = a \in K
		\end{equation}

		\noindent We assume w.l.o.g. that all $a_r$ are distinct and non-zero, and that all $n_r$ are non-zero. The set of measurements for this scenario can be written as follows:
		\begin{equation}
			\mathcal{X} = \bigsqcup_{i=1}^N\{X_i^{0},X_i^{a_1},...,X_i^{a_M}\}
		\end{equation}
		and the measurement contexts can be written as follows:
		\begin{align}
			C_{control} &= \suchthat{X_i^{0}}{i=1,...,N} \nonumber \\
			C_{var_1} &= \suchthat{X_i^{\alpha_{i}}}{i=1,...,N} \nonumber \\
			C_{var_2} &= \suchthat{X_i^{\alpha_{i+1}}}{i=1,...,N} \nonumber \\
			C_{var_3} &= \suchthat{X_i^{\alpha_{i+2}}}{i=1,...,N} \nonumber \\
			& \hspace{2mm} \vdots \nonumber \\
			C_{var_N} &= \suchthat{X_i^{\alpha_{i+(N-1)}}}{i=1,...,N} 
		\end{align} 	

		\noindent All the individual measurement outcomes are in the $X$ observable, and thus inherit the group structure $(K,\oplus,0)$. The joint outcomes for the measurement in each context $C$ carry the structure of $K^C$, isomorphic to $K^N$ for all $C \in \mathcal{M}$. The empirical model $(e_C)_{C \in \mathcal{M}}$ can then be written as follows, where \inlineQuote{probabilities} are valued in the semiring $R$ of scalars:
		\begin{equation}
			\begin{array}{c|c|c|c}
			& \underline{b} \in K^N \text{ s.t. } \oplus_{i=1}^N b_i =_K 0 
			& \underline{b} \in K^N \text{ s.t. } \oplus_{i=1}^N b_i =_K a
			& \underline{b} \in K^N \text{ s.t. } \oplus_{i=1}^N b_i \neq_K 0,a \vspace{1mm}\\
			\hline &&&\\
			C_{control} 	&	1/d^{(N-1)}		&	0 			& 0 	\\  
			C_{var_1} 		&	0				&	1/d^{(N-1)} & 0 	\\  
			C_{var_2} 		&	0				&	1/d^{(N-1)} & 0 	\\  
			\vdots			&	\vdots			& 	\vdots		& \vdots\\ 
			C_{var_N} 		&	0				&	1/d^{(N-1)} & 0
			\end{array}
		\end{equation}

		\begin{theorem} Assume that the semiring $R$ admits a morphism $p : R \rightarrow \mathbb{B}$ sending $1/d \in R$ to $p(1/d) = 1 \in \mathbb{B}$. Then the empirical model $(e_C)_{C \in \mathcal{M}}$ is $\AvN{\integers}{K}$ if and only if Equation \ref{eqn_UnsolvableEquationAvNSection} admits no solution in $K$.
		\end{theorem}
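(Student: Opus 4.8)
The plan is to reduce the All-vs-Nothing condition to the single $\integers$-module equation \ref{eqn_UnsolvableEquationAvNSection} by first pinning down the support subpresheaf on the cover, and then exploiting the cyclic symmetry of the scenario exactly as in the realisability computation. First I would use the hypothesis on $p$: since $p$ is a semiring morphism and $p(1/d)=1$, we get $p(1/d^{N-1}) = p(1/d)^{N-1} = 1$, so the possibilistic collapse $(p \circ e_C)_{C \in \mathcal{M}}$ is supported exactly on the nonzero entries of the table. Because each $e_C$ is supported on a single coset of the homomorphism $\bigoplus_{i=1}^N (\emptyArg) : K^N \to K$, its marginal onto any proper subset of the context is the full product set; as every pairwise intersection $C \cap C'$ is a proper subset of each context (the phases $\alpha_j$ are not all zero, and the $a_r$ are distinct), no further constraints are imposed, and one obtains
\begin{equation*}
\supportSubpresheaf{C_{control}} = \{s : \textstyle\bigoplus_{i=1}^N s(X_i^{0}) = 0\}, \qquad \supportSubpresheaf{C_{var_v}} = \{s : \textstyle\bigoplus_{i=1}^N s(X_i^{\alpha_{i+v-1}}) = a\}.
\end{equation*}
In particular the all-ones equation with right-hand side $0$ lies in $\RlinearTheory{\integers}{\supportSubpresheaf{C_{control}}}$, and for each $v$ the all-ones equation with right-hand side $a$ lies in $\RlinearTheory{\integers}{\supportSubpresheaf{C_{var_v}}}$.

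For the direction \emph{solution $\Rightarrow$ not $\AvN{\integers}{K}$}, suppose $(y_r := b_r)_{r=1}^M$ solves \ref{eqn_UnsolvableEquationAvNSection} in $K$. I would define one global assignment $s : \mathcal{X} \to K$ by $s(X_i^{a_r}) := b_r$ for every party $i$ (and $s(X_i^{0}) := 0$, i.e. $b_0 := 0$). The control sum is $N \cdot 0 = 0$, while for each variation the sum is $\bigoplus_{i=1}^N b_{R(i+v-1)} = \bigoplus_{r=0}^M n_r b_r = a$, because as $i$ ranges over the parties the index $R(i+v-1 \bmod N)$ takes each value $r$ exactly $n_r$ times. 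Hence $\restrict{s}{C} \in \supportSubpresheaf{C}$ for every $C$, so by the definition of $\RlinearTheory{\integers}{\emptyArg}$ the assignment $s$ satisfies every equation in $\RlinearTheory{\integers}{\supportSubpresheafSym}$, witnessing that the model is not $\AvN{\integers}{K}$ (equivalently, $s \in \supportSubpresheaf{\mathcal{X}}$ shows it is not even strongly contextual).

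For the converse I would argue contrapositively: assume the model is not $\AvN{\integers}{K}$, so some global $s : \mathcal{X} \to K$ satisfies the whole linear theory, in particular the control and all $N$ variation equations identified above. Writing $y_r := \bigoplus_{i=1}^N s(X_i^{a_r})$ for $r = 0,\dots,M$, I would add the $N$ variation equations in $K$: the left-hand side regroups, party by party, into $\bigoplus_{i=1}^N \bigoplus_{r=0}^M n_r s(X_i^{a_r}) = \bigoplus_{r=0}^M n_r y_r$ (again using that $R$ realises each multiplicity $n_r$ over a full cycle), while the right-hand side is $N a = a$ since $N = \modclass{1}{k}$ and $k$ is the exponent of $K$. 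The control equation forces $y_0 = 0$, killing the $n_0 y_0$ term, so $(y_r)_{r=1}^M$ solves $\bigoplus_{r=1}^M n_r y_r = a$ in $K$, contradicting insolubility. This yields both implications.

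The one genuinely delicate step is the determination of $\supportSubpresheafSym$ on the cover: one must check that compatibility with the overlapping contexts imposes nothing beyond the single coset equation, which is exactly where the combinatorics of the cyclic arrangement (distinct nonzero $a_r$, so that every context intersection is proper) and the hypothesis $p(1/d)=1$ (so that supports are full cosets rather than something smaller) both enter. Everything after that is the same counting argument — $R$ hits each $r$ exactly $n_r$ times over one cycle, and $N \equiv 1 \pmod k$ collapses $Na$ to $a$ — that already drove the deterministic outcome \ref{eqn_DeterministicOutcome}, so the two directions are essentially the possibilistic shadow of the locality theorem proved earlier.
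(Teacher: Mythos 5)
Your proposal is correct, and its skeleton is the same as the paper's: use the semiring morphism $p$ (multiplicativity giving $p(1/d^{N-1})=1$) to pass to the possibilistic model, identify $\supportSubpresheaf{C}$ on the cover as the coset $\bigoplus_{i=1}^N s_i = 0$ (resp.\ $=a$), isolate the $N+1$ all-ones equations in $\RlinearTheory{\integers}{\supportSubpresheafSym}$, and sum the $N$ variations cyclically so that each party contributes each $a_r$ with multiplicity $n_r$, with $N = \modclass{1}{k}$ collapsing $Na$ to $a$ and the control killing the $n_0$-term; this extracts the solution $y_r := \bigoplus_{i=1}^N s(X_i^{a_r})$ exactly as the paper intends (your party-summed formulation is in fact the precise version of the paper's compressed \inlineQuote{$y_r := b^{a_r}$}, matching part (i) of the earlier locality theorem). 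You differ in two minor but genuine ways, both to your credit. First, for the direction \emph{solution} $\Rightarrow$ \emph{not} $\AvN{\integers}{K}$, the paper appeals to the earlier probabilistic theorem (a solution in $K$ yields a local hidden variable model, hence the model is not strongly contextual and a fortiori not All-vs-Nothing), whereas you exhibit the explicit global section $s(X_i^{a_r}) := b_r$, $s(X_i^0) := 0$ --- precisely the $\underline{x}=0$ branch of the paper's hidden-variable mixture --- which is more self-contained and directly shows $\supportSubpresheaf{\mathcal{X}} \neq \emptyset$. Second, you verify the marginal-fullness/no-signalling computation underlying $\supportSubpresheaf{C} = \support{e_C}$, which the paper absorbs into the general reconstruction fact of Equation \ref{eqn_SubpresheafEM}; making it explicit is a genuine point of rigour, since that fact presupposes compatibility of the family $(e_C)_{C \in \mathcal{M}}$. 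One small caveat there: your parenthetical justification (distinct non-zero $a_r$, phases not all zero) guarantees that no variation coincides with the control, but two \emph{variation} contexts can coincide as sets (e.g.\ $M=1$, $n_0=0$, so all $\alpha_i$ are equal and all variations are the same context); this does not damage your argument, because coinciding contexts carry identical coset constraints and identical right-hand side $a$, so the conclusion $\supportSubpresheaf{C} = \support{e_C}$ is unaffected --- but the justification should be stated that way rather than via properness of all intersections.
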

		\begin{proof} If the semiring $R$ admits a morphism of semirings $p : R \rightarrow \mathbb{B}$ sending $1/d \in R$ to $p(1/d) = 1 \in \mathbb{B}$, then we can associate to $(e_C)_{C \in \mathcal{M}}$ the following possibilistic empirical model $(p \circ e_C)_{C \in \mathcal{M}}$: 
		\begin{equation}
			\begin{array}{c|c|c|c}
			& \underline{b} \in K^N \text{ s.t. } \oplus_{i=1}^N b_i =_K 0 
			& \underline{b} \in K^N \text{ s.t. } \oplus_{i=1}^N b_i =_K a
			& \underline{b} \in K^N \text{ s.t. } \oplus_{i=1}^N b_i \neq_K 0,a \vspace{1mm}\\
			\hline &&&\\
			C_{control} 	&	1				&	0 			& 0 	\\  
			C_{var_1} 		&	0				&	1 			& 0 	\\  
			C_{var_2} 		&	0				&	1 			& 0 	\\  
			\vdots			&	\vdots			& 	\vdots		& \vdots\\ 
			C_{var_N} 		&	0				&	1 			& 0
			\end{array}
		\end{equation}

		\noindent This possibilistic empirical model has the following support subpresheaf $\supportSubpresheafSym \subseteq \sheafOfEventsSym$:
		\begin{align}
			\supportSubpresheaf{C_{control}} &= \suchthat{(b_i^{0})_{i=1}^N \in K^{C_{control}}}{\oplus_{i=1}^N b_i^{0} =_K 0} \nonumber \\
			\supportSubpresheaf{C_{var_1}} &= \suchthat{(b_i^{\alpha_{i}})_{i=1}^N \in K^{C_{var_1}}}{\oplus_{i=1}^N b_i^{\alpha_{i}} =_K a} \nonumber \\
			\supportSubpresheaf{C_{var_2}} &= \suchthat{(b_i^{\alpha_{i+1}})_{i=1}^N \in K^{C_{var_2}}}{\oplus_{i=1}^N b_i^{\alpha_{i+1}} =_K a}  \nonumber \\
			\supportSubpresheaf{C_{var_3}} &= \suchthat{(b_i^{\alpha_{i+2}})_{i=1}^N \in K^{C_{var_3}}}{\oplus_{i=1}^N b_i^{\alpha_{i+2}} =_K a}  \nonumber \\
			& \hspace{2mm} \vdots \nonumber \\
			\supportSubpresheaf{C_{var_N}} &= \suchthat{(b_i^{\alpha_{i+(N-1)}})_{i=1}^N \in K^{C_{var_N}}}{\oplus_{i=1}^N b_i^{\alpha_{i+(N-1)}} =_K a} 
		\end{align} 

		\noindent Amongst the (many) equations in $\RlinearTheory{R}{\supportSubpresheafSym}$ we can find the following $N+1$ equations:
		\begin{align}
			\bigoplus_{m \in C_{control}} \hspace{-3.5mm}s_m &= 0 \text{, safistied by all } s \in \supportSubpresheaf{C_{control}} \nonumber\\
			\bigoplus_{m \in C_{var_1}} \hspace{-2mm}s_m &= a \text{, safistied by all } s \in \supportSubpresheaf{C_{var_1}} \nonumber\\
			& \hspace{1mm} \vdots \nonumber\\
			\bigoplus_{m \in C_{var_N}} \hspace{-2mm}s_m &= a \text{, safistied by all } s \in \supportSubpresheaf{C_{var_N}} 
		\end{align} 

		\noindent Any global assignment $s = (b^0,b^{a_1},...,b^{a_M}): \mathcal{X} \rightarrow K$ satisfying all equations in $\RlinearTheory{R}{\supportSubpresheafSym}$ would in particular satisfy the $N+1$ equations above, and hence provide a solution $y_r := b^{a_r}$ in $K$ to Equation \ref{eqn_UnsolvableEquationAvNSection}. We conclude that, if Equation \ref{eqn_UnsolvableEquationAvNSection} has no solution in $K$, then the empirical model is $\AvN{\integers}{K}$ (where we used the fact that every abelian group is a $\integers$-module). On the other hand, we have seen in the first part of this paper that the empirical model is non-contextual if a solution in $K$ exists, and hence it cannot be $\AvN{\integers}{K}$ in that case.
		\end{proof}

		\noindent The analysis above can be straightforwardly generalised to the case where Equation \ref{eqn_UnsolvableEquationAvNSection} is substituted by a system $\mathbb{S}$ of $\integers$-module equations valued in $K$. As a corollary, operational Mermin non-locality arguments provide a new, infinite family of quantum realisable All-vs-Nothing empirical models.

		\begin{corollary} The operational Mermin non-locality arguments provide an infinite family of quantum realisable $\AvN{\integers}{K}$ empirical models $(e_C^{(K,\mathbb{S})})_{C \in \mathcal{M}^{(K,\mathbb{S})}}$, indexed by all finite abelian groups $K$ and all finite systems $\mathbb{S}$ of $\integers$-module equations valued in $K$ and admitting no solution in $K$. Furthermore, all $\AvN{\integers}{K}$ arguments for some fixed $K$ are equivalently $\AvN{\integersMod{q}}{K}$ for any natural $q$ divisible by the exponent of $K$. In particular, there are operational Mermin non-locality arguments providing $\AvNring{\integersMod{p}}$ models for all primes $p$.
		\end{corollary}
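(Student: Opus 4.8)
The plan is to assemble the corollary from the two theorems just proved, together with a base-change argument on the coefficient ring. For the first assertion I would combine the Quantum realisability theorem with the preceding $\AvN{\integers}{K}$ theorem, used in the form generalised to systems as noted in the remark following its proof. Fix any finite abelian group $K$ and any consistent system $\mathbb{S}$ of $K$-valued $\integers$-module equations with no solution in $K$. The Quantum realisability theorem produces a Hilbert space $\SpaceH$ carrying $X$ and $Z$ structures for which $K$ is the subgroup of $X$-classical points and for which $\mathbb{S}$ is solvable in the group $P$ of $Z$-phases; this is exactly the data of a quantum-realisable operational Mermin argument, whose empirical model $(e_C)_{C \in \mathcal{M}}$ is the one displayed above. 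The hypothesis of the $\AvN{\integers}{K}$ theorem, namely a semiring morphism sending $1/d \mapsto 1$, is met in $\fdHilbCategory$ by the morphism $p:\reals^+ \rightarrow \mathbb{B}$ sending nonzero scalars to $1$, so that theorem applies and yields $\AvN{\integers}{K}$. Since there are infinitely many finite abelian groups (for instance $\integersMod{n}$ for all $n$), each admitting such a system, the family indexed by the pairs $(K,\mathbb{S})$ is infinite.

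For the second assertion I would argue purely at the level of linear theories, for an arbitrary empirical model whose outcomes form the module $K$. The key observation is that when the exponent of $K$ divides $q$, the integer $q$ annihilates $K$, so $K$ is naturally a $\integersMod{q}$-module and the $\integers$-action factors through the quotient $\integers \epim \integersMod{q}$. Consequently, for any assignment $s$ and any integer coefficients, reducing those coefficients modulo $q$ does not change the value of $\bigoplus_{m} n_m s_m$ in $K$; reduction of coefficients therefore gives a satisfaction-preserving surjection from $\integers$-linear equations onto $\integersMod{q}$-linear equations, which identifies $\RlinearTheory{\integers}{\supportSubpresheaf{C}}$ with $\RlinearTheory{\integersMod{q}}{\supportSubpresheaf{C}}$ context by context. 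Hence a global assignment $s:\mathcal{X}\rightarrow K$ solves the $\integers$-linear theory exactly when it solves the $\integersMod{q}$-linear theory, so the model is $\AvN{\integers}{K}$ if and only if it is $\AvN{\integersMod{q}}{K}$.

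The third assertion follows by specialisation. Taking $K = \integersMod{p}$ for a prime $p$, the exponent of $K$ is $p$, so I may use $q = p$ in the second assertion. It remains to exhibit, for each $p$, a consistent $\integersMod{p}$-valued system with no solution in $\integersMod{p}$: the single equation $p\,y = a$ with $a \neq 0$ suffices, since $p$ annihilates $\integersMod{p}$ (giving no solution in $K$) while the equation is solvable in the divisible phase torus $T^{p-1}$ supplied by the Quantum realisability theorem (giving consistency). By the first assertion this produces a quantum-realisable $\AvN{\integers}{\integersMod{p}}$ model, and by the second assertion with $q = p$ it is $\AvN{\integersMod{p}}{\integersMod{p}} = \AvNring{\integersMod{p}}$.

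I expect the main obstacle to be the second assertion: one must check carefully that reduction of coefficients modulo $q$ is well defined on the linear theory and is a surjection respecting the satisfaction relation $\models$ on $K$-valued assignments, so that the solution sets of the two theories genuinely coincide rather than merely being comparable. Everything else is a direct invocation of the two theorems plus the elementary verification that the exhibited systems are consistent and have no solution in $K$.
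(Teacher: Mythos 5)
Your proposal is correct and follows essentially the same route as the paper, which states this corollary without a separate proof precisely because it is the direct assembly you describe: the quantum realisability theorem combined with the preceding All-vs-Nothing theorem (generalised from a single equation to systems, as the paper notes immediately before the corollary), the reduction of coefficients modulo $q$ using $qK = 0$ to identify the $\integers$-linear and $\integersMod{q}$-linear theories, and the specialisation to $K = \integersMod{p}$ via an equation of the form $p\,y = a$ with $a \neq 0$ (the paper's final theorem uses $p\,y = 1$). One small presentational fix: consistency of $p\,y = a$ should not be justified by appeal to the realisability theorem (consistency is its \emph{hypothesis}), but directly, either from the definition (the single row $(p)$ admits no non-trivial vanishing integer combination) or from divisibility of $S^1$, which exhibits a solution in a super-group of $K$.
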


		\noindent Finally, the infinite family of All-vs-Nothing empirical models provided by the previous corollary contains examples of quantum realisable $\AvN{\integersMod{p}}{K}$ models (in fact $\AvNring{\integersMod{p}}$) which are not $\AvN{\integersMod{n}}{K'}$ for any integer $n$ coprime with $p$ and any $\integersMod{n}$-module $K'$, showing that the hierarchy of quantum realisable $\AvN{R}{K}$ models over the quotient rings $R$ of $\integers$ does not collapse.

		\begin{theorem} For each prime $p \geq 3$, there is a quantum realisable $\AvN{\integersMod{p}}{K}$ empirical model (and hence also $\AvN{\integers}{K}$; here $K = \integersMod{p}$) which is not $\AvN{\integersMod{n}}{K'}$ for any natural $n$ coprime with $p$ and any non-trivial abelian group $K'$ with exponent dividing $n$ (in particular, it is not $\AvNring{\integersMod{q}}$ for any prime $q \neq p$).
		\end{theorem}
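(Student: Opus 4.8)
The plan is to obtain the model from the preceding Corollary and then to rule out $\AvN{\integersMod{n}}{K'}$ by a purely order‑theoretic argument on the outcome set. First I would fix the prime $p \geq 3$, set $K := \integersMod{p}$, and take for $\mathbb{S}$ the single $\integers$‑module equation $p\,y = a$ with $a \in K \setminus \{0\}$. This equation is valued in $K$ and has no solution in $K$ (since $p\,y = 0$ for every $y \in \integersMod{p}$ while $a \neq 0$), yet it is consistent, being solvable in the divisible group $P$ of $Z$‑phases. Applying the Quantum Realisability theorem with a $(d+1)$‑dimensional Hilbert space for $d+1 = p$ yields strongly complementary $X$ and $Z$ structures with $K \iso \mathbb{G}$ the subgroup of $X$‑classical points and a $Z$‑phase solution $\beta$ to $\mathbb{S}$; the associated Mermin scenario has an empirical model $(e_C)_{C \in \mathcal{M}}$ whose ``probabilities'' lie in $\reals^+$. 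Since $\reals^+$ admits a morphism of semirings $\reals^+ \rightarrow \mathbb{B}$ sending $1/d$ to $1$, the theorem characterising $\AvN{\integers}{K}$ applies and shows $(e_C)_{C \in \mathcal{M}}$ is $\AvN{\integers}{K}$; as the exponent of $K$ equals $p$, the Corollary upgrades this to $\AvN{\integersMod{p}}{\integersMod{p}} = \AvNring{\integersMod{p}}$. This settles the positive half of the statement and exhibits a quantum realisable witness (the constraint $p \geq 3$ merely ensures that the example is genuinely new, i.e. distinct from the original binary Mermin model).

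For the negative half, the decisive observation is that every measurement $X_i^{\alpha}$ in the scenario has its set of outcomes equal to the group $K = \integersMod{p}$ of $X$‑classical points, a set of exactly $p$ elements. By the definition of All‑vs‑Nothing, a model can be $\AvN{R}{G}$ only when its outcome set is the underlying set of the $R$‑module $G$; in particular, for this model to be $\AvN{\integersMod{n}}{K'}$ one must have $K' = K$ as sets, forcing $|K'| = p$. I would then invoke the elementary fact that a non‑trivial finite abelian group $K'$ of exponent dividing $n$ is annihilated by $n$, so that every element has order dividing $n$; hence, by Cauchy's theorem, every prime divisor $\ell$ of $|K'|$ divides the exponent and therefore divides $n$. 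Since $\gcd(n,p) = 1$ gives $p \nmid n$, we obtain $p \nmid |K'|$, contradicting $|K'| = p$. Thus no $n$‑torsion $K'$ can serve as the outcome module, and the model fails to be $\AvN{\integersMod{n}}{K'}$ for every such $K'$. The parenthetical claim is the special case $n = q$, $K' = \integersMod{q}$ with $q \neq p$ prime, where already $|K'| = q \neq p$.

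The main obstacle is not the final counting step but making the two halves rest on a single, consistent reading of the definitions: I must verify that the system $\mathbb{S} = \{p\,y = a\}$ is simultaneously consistent (solvable in $P$) and $K$‑insoluble, that the hypotheses of the $\AvN{\integers}{K}$ theorem — in particular the existence of the morphism $\reals^+ \rightarrow \mathbb{B}$ sending $1/d$ to $1$ — are genuinely met in the quantum realisation, and that the outcome set of the realised scenario is exactly $K$ with no spurious enlargement. Once these are pinned down, the non‑collapse of the hierarchy over the quotient rings $\integersMod{n}$ of $\integers$ follows because the model is intrinsically of ``characteristic $p$'': it can be All‑vs‑Nothing only over a module whose order is divisible by $p$, and no such module is $n$‑torsion for $n$ coprime to $p$.
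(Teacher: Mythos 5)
Your positive half tracks the paper's proof: the same single equation ($p\,y=1$ up to your harmless replacement of $1$ by a nonzero $a$), realised via the Quantum Realisability theorem and promoted to $\AvNring{\integersMod{p}}$ through the preceding theorem and corollary. (One small repair: you justify consistency of $\{p\,y=a\}$ by its solvability in $P$, which is circular given that the realisability theorem \emph{produces} the solution in $P$ from consistency; instead check consistency directly --- the only $\integers$-linear relation $c\cdot p=0$ in $\integers$ forces $c=0$, whence $c\cdot a=0$ in $K$.)

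The negative half, however, has a genuine gap: you dispose of $\AvN{\integersMod{n}}{K'}$ by declaring the comparison ill-typed --- the outcome set is $\integersMod{p}$, so no $K'$ of order coprime to $p$ can serve as outcome module, and the claim holds vacuously. Under that reading the theorem is contentless: \emph{every} model with outcomes in $\integersMod{p}$, contextual or not, would fail to be $\AvN{\integersMod{n}}{K'}$, and nothing about a non-collapsing hierarchy would follow. The paper instead makes the claim substantive by transporting the model's linear theory across the change of module: since no ring homomorphism $\integersMod{p}\rightarrow\integersMod{n}$ exists, it lifts $\RlinearTheory{\integersMod{p}}{\supportSubpresheafSym}$ to $\RlinearTheory{\integers}{\supportSubpresheafSym}$ by viewing $K=\integersMod{p}$ as a $\integers$-module, and then asks whether this lifted theory admits a global assignment valued in $K'$. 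Two ingredients you are missing do the real work there. First, because $\integersMod{p}$ is a field, the theory of each context is \emph{exactly} the set of $u$-multiples, $u\in\integers\setminus p\integers$, of the single defining equation --- without this characterisation one cannot verify a candidate assignment against the whole theory rather than just the $N+1$ displayed equations. Second, on any non-trivial $K'$ of exponent dividing $n$ with $\gcd(n,p)=1$, multiplication by $p$ is invertible, so $p\,y=1$ acquires a solution $y\in K'$, and the global assignment $s_0:=0$, $s_{a_r}:=y$ satisfies the entire lifted theory; hence the model is not $\AvN{\integers}{K'}$ and a fortiori not $\AvN{\integersMod{n}}{K'}$. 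Your Cauchy-theorem computation ($p\nmid|K'|$) is correct arithmetic, but it answers the wrong question: the theorem requires exhibiting a solution of the transported theory over $K'$, not arguing that the comparison cannot be posed. Indeed, the very fact that the equation \emph{becomes solvable} over coprime-torsion modules is the intended ``characteristic $p$'' phenomenon, the opposite of your conclusion that no such module can be considered at all.
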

		\begin{proof}
			Consider the following $\integers$-module equation for any prime $p \geq 3$, which has no solution in the finite abelian group\footnote{Which is also a $\integersMod{p}$-module, hence a $\integers$-module, and a $\integersMod{p}$-vector space.} $K = \integersMod{p}$:
			\begin{equation}
				p y = 1
			\end{equation}

			\noindent The corresponding\footnote{With group-system pair $(K,\{py=1\})$.} operational Mermin non-locality argument gives a quantum realisable empirical model $(e_C)_{C \in \mathcal{M}}$ which is $\AvNring{\integersMod{p}}$, and therefore also $\AvN{\integers}{\integersMod{p}}$; call its support subpresheaf $\supportSubpresheafSym$, and the associated $R$-linear theory $\RlinearTheory{\integersMod{p}}{\supportSubpresheafSym}$.

			If $K'$ is any finite abelian group with exponent dividing $1 < n < p$, then $K' \isom \prod_{l=1}^L \integersMod{p_l}$ for some primes $p_1,...,p_L$ all distinct from $p$. The $\integers$-module equation $p y = 1$ then has a solution mod $y_l$ for all $l$, and hence a solution $y$ in $K'$; consider the global assignment $s:\mathcal{X} \rightarrow K'$ given by $s_0 := 0$ and $s_{a_r} := y$. 
			Since $\integersMod{p}$ is a field, any two equations $\phi,\phi' \in \supportSubpresheaf{C}$ for some $C \in \mathcal{M}$ are non-zero multiples of each other:
			\begin{align}
				\RlinearTheory{\integersMod{p}}{\supportSubpresheaf{C_{control}}} &= \suchthat{\bigoplus_{m \in C_{control}} \hspace{-3.5mm}u s_m = 0 \hspace{2mm}}{u \in \integersMod{p}^\times} \nonumber \\
				\RlinearTheory{\integersMod{p}}{\supportSubpresheaf{C_{var_1}}} &= \suchthat{\hspace{1.2mm}\bigoplus_{m \in C_{var_1}} \hspace{-2mm}u s_m = ua}{u \in \integersMod{p}^\times} \nonumber \\
				& \hspace{1mm} \vdots \nonumber \\
				\RlinearTheory{\integersMod{p}}{\supportSubpresheaf{C_{var_N}}} &= \suchthat{\hspace{0.75mm}\bigoplus_{m \in C_{var_N}} \hspace{-2.3mm}u s_m = ua}{u \in \integersMod{p}^\times} 
			\end{align}

			\noindent In order to compare to an All-vs-Nothing class with respect to a ring $\integersMod{n}$ incompatible with $\integersMod{p}$ (no ring homomorphisms exist between the two), we lift $\RlinearTheory{\integersMod{p}}{\supportSubpresheafSym}$ to the integers by seeing $K = \integersMod{p}$ as a $\integers$-module, and obtain a new set  $\RlinearTheory{\integers}{\supportSubpresheafSym}$ of equations, equivalent to $\RlinearTheory{\integersMod{p}}{\supportSubpresheafSym}$ for assignments over $\integersMod{p}$-modules (like $K=\integersMod{p}$):
			\begin{align}
				\RlinearTheory{\integers}{\supportSubpresheaf{C_{control}}} &= \suchthat{\bigoplus_{m \in C_{control}} \hspace{-3.5mm}u s_m = 0\hspace{2mm}}{u \in \integers \backslash p\integers} \nonumber \\
				\RlinearTheory{\integers}{\supportSubpresheaf{C_{var_1}}} &= \suchthat{\hspace{1.2mm}\bigoplus_{m \in C_{var_1}} \hspace{-2mm}u s_m = ua}{u \in \integers \backslash p\integers} \nonumber \\
				& \hspace{1mm} \vdots \nonumber \\
				\RlinearTheory{\integers}{\supportSubpresheaf{C_{var_N}}} &= \suchthat{\hspace{0.75mm}\bigoplus_{m \in C_{var_N}} \hspace{-2.3mm}u s_m = ua}{u \in \integers \backslash p\integers} 
			\end{align}

			\noindent Over the $\integers$-module $K'$, the set $\RlinearTheory{\integers}{\supportSubpresheafSym}$ always admits a global assignment $s: \mathcal{X} \rightarrow K'$ defined above, and hence the empirical model $(e_C)_{C \in \mathcal{M}}$ is not $\AvN{\integers}{K'}$, and in particular not $\AvN{\integersMod{n}}{K'}$.
		\end{proof}

\section{Conclusions}
	
	We have shown that the operational Mermin non-locality arguments provide a family of quantum realisable All-vs-Nothing models indexed by all pairs $(K,\mathbb{S})$ of finite groups $K$ and systems $\mathbb{S}$ of $K$-valued $\integers$-module equations unsolvable over $K$. In particular, they can be used to show that the hierarchy of quantum realisable All-vs-Nothing models over finite fields does not collapse. Because All-vs-Nothing models are maximally contextual, i.e. lie on a face of the no-signalling polytope, operational Mermin non-locality arguments provide a concrete resource for quantum information and security. An interesting future development would be their application to the design of a larger family of quantum secret sharing protocols, generalising the existing HBB CQ. Another open question is the possible generalisation to states different from the GHZ: while a wide generalisation seems unlikely, the W states show enough structural similarities to be promising candidates.

\subparagraph*{Acknowledgements}
The author would like to thank Samson Abramsky, William Zeng, Vladimir Zamdzhiev, Kohei Kishida, Rui Soares Barbosa,, Raymond Lal and Shane Mansfield for comments, suggestions and useful discussions, as well as Sukrita Chatterji and Nicol\`o Chiappori for their support. Funding from EPSRC and Trinity College is gratefully acknowledged.

\bibliography{bibliography/CategoryTheory,bibliography/CategoricalQM,bibliography/NonLocalityContextuality,bibliography/QuantumComputing,bibliography/ClassicalMechanics,bibliography/LogicComputation,bibliography/Gravitation,bibliography/QFT,bibliography/StatisticalPhysics,bibliography/Misc,bibliography/StefanoGogioso}

\appendix

\section{Gaussian elimination in $S^1$}
\label{section_AppendixGaussianEliminationS1}

	\noindent If $K$ is an abelian group, by a (finite) \textbf{system $\mathbb{S}$ of $K$-valued $\integers$-module equations} we mean a finite family of equations in the following form, with $n_r^s$ integers coefficients, $a^s$ elements of $K$ and $y_r$ the unknowns:
	\begin{equation}\label{eqn_systemAlgExtAppendix}
	\begin{cases}
		\bigoplus_{r=1}^{M} n^1_r \, y_r = a^1 \\
		\vdots  \hspace{1cm}\\
		\bigoplus_{r=1}^{M} n^S_r \, y_r = a^S 
	\end{cases}
	\end{equation}  

	\noindent We will say that a system in the form of \ref{eqn_systemAlgExtAppendix} is \textbf{consistent} if, letting $\underline{n}^s := (n^s_1,...,n^s_M)$ be the row vectors in $\integers^M$, the following holds:
	\begin{equation}
		\bigoplus_{j=1}^{J} c_j \cdot \underline{n}^{s_j} =_{\integers^M} \underline{0} \implies \bigoplus_{j=1}^{J} c_j \cdot a^{s_j} =_{K} 0
	\end{equation}
	\noindent for all naturals $J \geq 1$ and all $(c_1,...,c_J) \in \integers^J$. If $P$ is an abelian group such that $K \leq P$, then by a \textbf{solution in $P$} of a system $\mathbb{S}$ in the form of \ref{eqn_systemAlgExtAppendix} we mean a family $(\beta_r)_{r=1,...,M}$ of elements of $P$ such that setting $y_r := \beta_r$ satisfies the system.

	While all $K$-valued systems with solutions in some super-group of $K$ must necessarily be consistent, the converse is not true in general: given a super-group $P$ of $K$ there may be consistent systems with no solutions in $P$. Certainly if $P$ is finite then at least one such system exists, because of the finite exponent, and certainly if $P=\rationals$ then no such system exists; in fact, every divisible torsion-free abelian group $P$ is canonically a $\rationals$-vector space, and thus every consistent system of $\integers$-modules equations  (and, in fact, of $\rationals$-vector space equations) valued in $P$ has solutions in $P$. Unfortunately, while $S^1$ is divisible it is not torsion-free, and in particular not a $\rationals$-vector space, so the reasoning above doesn't apply. However, we can show that Gaussian elimination can still be performed in $S^1$, and thus that every consistent system of $\integers$-modules equations valued in $S^1$ has solutions in $S^1$.\footnote{However, uniqueness of solution doesn't in general hold for systems with linearly independent row vectors.}

	Consider a system $\mathbb{S}$ in the form of \ref{eqn_systemAlgExtAppendix}. The Gaussian elimination algorithm in a $\rationals$-vector space $V$ can be formulated in terms of the following two fundamental operations:\footnote{We continue to use $\oplus$ and $\ominus$ to denote addition and subtraction of vectors.}
	\begin{enumerate}
	\item[(a)] multiply a row by a non-zero rational $\frac{p}{q} \in \rationals$:
		\begin{equation}
			\left[\bigoplus_{r=1}^{M} n^s_r \, x_r = h^s \right] \mapsto \left[\bigoplus_{r=1}^{M} \frac{p \, n^s_r}{q} \, x_r = \frac{p}{q}h^s \right]
		\end{equation}
	\item[(b)] subtract a rational multiple $\frac{p}{q} \in \rationals$ of a row from another row:
		\begin{equation}
			\left[\bigoplus_{r=1}^{M} n^s_r \, x_r = h^s \right] \mapsto \left[\bigoplus_{r=1}^{M} (n^s_r \ominus \frac{p}{q}\,n^t_r) \, x_r = (h^s \ominus \frac{p}{q}h^t) \right]
		\end{equation}
	\end{enumerate}

	\noindent The correctness of the algorithm is based on the following functionality:
	\begin{enumerate}
	\item[(a)] $\frac{p}{q}\underline{x}=\frac{p}{q}\underline{y} \implies \underline{x} = \underline{y}$ for all non-zero rationals $\frac{p}{q} \in \rationals$ and all vectors $\underline{x},\underline{y} \in V$.
	\item[(b)] $\underline{x}-\frac{p}{q}\underline{z} = \underline{y}-\frac{p}{q}\underline{z} \implies \underline{x}=\underline{y}$ for all rationals $\frac{p}{q} \in \rationals$ and all vectors $\underline{x},\underline{y},\underline{z} \in V$.
	\end{enumerate}

	\noindent Because $S^1$ is not torsion-free, the division by a non-zero natural $n$ is not in general single-valued (in fact, it is exactly $n$-valued), and the functionality required by the usual correctness proof of Gaussian elimination for systems of $\rationals$-valued equations fails. However, one can devise a simple multiple-valued extension that fits the purpose in the $S^1$ case. Rewrite the system from \ref{eqn_systemAlgExtAppendix} as follows, where instead of equations one consider the more general non-deterministic case of intersection of sets:
	\begin{equation}\label{eqn_systemAlgExtAppendixMultival}
	\begin{cases}
		\bigoplus_{r=1}^{M} n^1_r \, x_r \cap \{h^1\} \neq \emptyset \\
		\vdots  \hspace{1cm}\\
		\bigoplus_{r=1}^{M} n^S_r \, x_r \cap \{h^S\} \neq \emptyset 
	\end{cases}
	\end{equation}  
	
	\noindent For divisible abelian groups $P$, one can extend Gaussian elimination by using the following non-deterministic variants of the standard operations:
	\begin{enumerate}
	\item[(a')] multiply a row by a non-zero rational $\frac{p}{q} \in \rationals$:
		\begin{equation}
			\left[\bigoplus_{r=1}^{M} n^s_r \, x_r \cap A^s \neq \emptyset \right] \mapsto \left[\bigoplus_{r=1}^{M} \frac{p \, n^s_r}{q} \, x_r \cap \frac{p}{q}A^s \neq \emptyset \right] 
		\end{equation}
	\item[(b')] subtract a rational multiple $\frac{p}{q} \in \rationals$ of a row from another row:
		\begin{equation}
			\left[\bigoplus_{r=1}^{M} n^s_r \, x_r \cap A^s \neq \emptyset \right] \mapsto \left[] \bigoplus_{r=1}^{M} (n^s_r \ominus p\,n^t_r) \, x_r \cap (A^s \ominus \frac{p}{q}A^t) \neq \emptyset \right]
		\end{equation}
	\end{enumerate}
	
	\noindent The definition of the set $\frac{p}{q}A$ is crucial. If $A$ a set of elements of $P$, we let:
	\begin{equation}
	\frac{p}{q}A := \suchthat{p b }{q b\in A}
	\end{equation}

	\noindent Then the following fundamental property holds for all sets $A$ and $B$ of elements of $P$:
	\begin{equation}
		Y \cap \frac{p}{q}A = \suchthat{p b}{p b \in Y \text{ and } q b \in A} = \frac{q}{p} Y \cap A 
	\end{equation}

	\noindent The definition of the set $A \oplus B$ is more straightforward:
	\begin{equation}
	A+B = \suchthat{a \oplus b}{a \in A \text{ and }b \in B}
	\end{equation}

	\noindent As long as $\frac{p}{q} A \neq \emptyset$ for all $A \neq \emptyset$ and all non-zero $\frac{p}{q} \in \rationals$, it is straightforward to see that this non-deterministic Gaussian elimination will result in each $x_r$ being a non-empty set of elements of $P$ such that the set equations from \ref{eqn_systemAlgExtAppendixMultival} will hold. Any abelian divisible group fits the bill, and in particular so does $S^1$.

\end{document}